\newcommand{\aop}{Ann. Phys.~}
\newcommand{\cmp}{Comm. Math. Phys.~}
\newcommand{\jmp}{J. Math. Phys.~}
\newcommand{\jpa}{J. Phys. A~}
\newcommand{\prl}{Phys. Rev. Lett.~}
\newcommand{\pla}{Phys. Lett. A~}
\newcommand{\laa}{Lin. Alg. App.~}
\definecolor{myurlcolor}{rgb}{0,0,0.7}
\newcommand{\blue}{\textcolor{blue}}
\newcommand{\tinyspace}{\mspace{1mu}}
\newcommand{\op}[1]{\operatorname{#1}}
\newcommand{\abs}[1]{\left\lvert\tinyspace #1 \tinyspace\right\rvert}
\newcommand{\norm}[1]{\left\lVert\tinyspace #1 \tinyspace\right\rVert}
\renewcommand{\t}{{\scriptscriptstyle\mathsf{T}}}
\newcommand{\setft}[1]{\mathrm{#1}}
\newcommand{\density}[1]{\setft{D}\left(#1\right)}
\newcommand{\sign}{\op{sign}}
\def\g{\mathfrak{g}}
\def\k{\mathfrak{k}}
\def\liet{\mathfrak{t}}
\def\su{\mathfrak{su}}
\def\Ad{\mathrm{Ad}}
\def\ad{\mathrm{ad}}
\def\vol{\mathrm{vol}}
\def\dh{\mathrm{DH}}
\def \dif {\mathrm{d}}
\def \diag {\mathrm{diag}}
\def \vol {\mathrm{vol}}
\def\complex{\mathbb{C}}
\def\real{\mathbb{R}}
\def\natural{\mathbb{N}}
\def\I{\mathbb{1}}
\newenvironment{mylist}[1]{\begin{list}{}{
    \setlength{\leftmargin}{#1}
    \setlength{\rightmargin}{0mm}
    \setlength{\labelsep}{2mm}
    \setlength{\labelwidth}{8mm}
    \setlength{\itemsep}{0mm}}}
    {\end{list}}
\def\ot{\otimes}
\newcommand{\inner}[2]{\langle #1 , #2\rangle}
\newcommand{\Inner}[2]{\left\langle #1 , #2\right\rangle}
\newcommand{\defeq}{\stackrel{\smash{\textnormal{\tiny def}}}{=}}
\newcommand{\Pa}[1]{\left(#1\right)}
\newcommand{\Br}[1]{\left[#1\right]}
\newcommand{\set}[1]{\{#1\}}
\newcommand{\Set}[1]{\left\{#1\right\}}
\newcommand{\ket}[1]{|#1\rangle}
\DeclareMathOperator{\trace}{Tr}
\newcommand{\Ptr}[2]{\trace_{#1}\Pa{#2}}
\newcommand{\Tr}[1]{\Ptr{}{#1}}
\newcommand{\Abs}[1]{\left|\tinyspace#1\tinyspace\right|}
\def\cB{\mathcal{B}}\def\cE{\mathcal{E}}
\def\cF{\mathcal{F}}\def\cG{\mathcal{G}}
\def\cO{\mathcal{O}}
\def\cX{\mathcal{X}}\def\cY{\mathcal{Y}}
\def\bP{\mathbf{P}}
\def\bsA{\boldsymbol{A}}\def\bsB{\boldsymbol{B}}
\def\bsU{\boldsymbol{U}}
\def\bsa{\boldsymbol{a}}\def\bsb{\boldsymbol{b}}
\def\bsh{\boldsymbol{h}}
\def\bsp{\boldsymbol{p}}\def\bsq{\boldsymbol{q}}\def\bsr{\boldsymbol{r}}\def\bss{\boldsymbol{s}}
\def\bsu{\boldsymbol{u}}\def\bsv{\boldsymbol{v}}\def\bsx{\boldsymbol{x}}
\def\rG{\mathrm{G}}\def\rH{\mathrm{H}}
\def\rL{\mathrm{L}}
\def\rS{\mathrm{S}}
\def\rU{\mathrm{U}}
\def\sC{\mathscr{C}}\def\sD{\mathscr{D}}
\def\sS{\mathscr{S}}
\def\T{\textsf{T}}
\newtheorem{thrm}{Theorem}[section]
\newtheorem{prop}[thrm]{Proposition}
\newtheorem{cor}[thrm]{Corollary}
\theoremstyle{definition}
\newtheorem{definition}[thrm]{Definition}
\newtheorem{remark}[thrm]{Remark}
\newtheorem{exam}[thrm]{Example}
\numberwithin{equation}{section}
\newcounter{questionnumber}
\begin{document}

\title{\Large \bf Duistermaat-Heckman measure and the mixture of quantum states}

\author{\blue{Lin Zhang}$^{1,2}$\footnote{E-mail: godyalin@163.com; linzhang@mis.mpg.de},\, \blue{Yixin Jiang}$^1$ and \blue{Junde Wu}$^3$\\
  {\it\small $^1$Institute of Mathematics, Hangzhou Dianzi University, Hangzhou 310018, PR~China}\\
  {\it\small $^2$Max-Planck-Institute for Mathematics in the Sciences, Leipzig 04103,
  Germany}\\
  {\it\small $^3$School of Mathematical Sciences, Zhejiang University, Hangzhou 310027, PR~China}}
\date{}
\maketitle
\begin{abstract}

In this paper, we present a general framework to solve a fundamental
problem in Random Matrix Theory (RMT), i.e., the problem of
describing the joint distribution of eigenvalues of the sum
$\bsA+\bsB$ of two independent random Hermitian matrices $\bsA$ and
$\bsB$. Some considerations about the mixture of quantum states are
basically subsumed into the above mathematical problem. Instead, we
focus on deriving the spectral density of the mixture of adjoint
orbits of quantum states in terms of Duistermaat-Heckman measure,
originated from the theory of symplectic geometry. Based on this
method, we can obtain the spectral density of the mixture of
independent random states. In particular, we obtain explicit
formulas for the mixture of random qubits. We also find that, in the
two-level quantum system, the average entropy of the equiprobable
mixture of $n$ random density matrices chosen from a random state
ensemble (specified in the text) increases with the number $n$.
Hence, as a physical application, our results quantitatively explain
that the quantum coherence of the mixture monotonously decreases
statistically as the number of components $n$ in the mixture.
Besides, our method may be used to investigate some statistical
properties of a special subclass of unital qubit channels.\\~\\
\textbf{Mathematics Subject Classification.} 22E70, 81Q10, 46L30,
15A90, 81R05 \\
\textbf{Keywords.} Duistermaat-Heckman measure; Horn's problem;
probability density function; quantum coherence

\end{abstract}

\section{Introduction}

According to one of postulates in Quantum Mechanics, the pure state
of a single quantum system is represented by a vector in a complex
Hilbert space. It is also well-known that Hilbert space of a
composite quantum system is characterized by the tensor product of
the Hilbert spaces of the individual components. Clearly the
dimension of a composite quantum system grows exponentially with the
number of their components. This leads to exponential complexity.
Recently, Christandl \emph{et al} in \cite{Christandl2014} had
presented an effective method in order to get some physical features
which depend only on the eigenvalues of the one-body reduced states
of a randomly-chosen multipartite quantum state. Let us briefly
recall their work here. In more detail, they have described an
explicit algorithm to compute the joint eigenvalue distribution of
all the reduced density matrices of a pure multipartite quantum
state drawn at random from the unitarily invariant distribution.
Moreover, the situation for the mixed state can always be reduced to
the pure state case by the purification technique. Mathematically,
the eigenvalue distributions obtained are just Duistermaat-Heckman
measures on the moment polytope \cite{Guillemin2009}. Here the
moment polytope \cite{Burgisser2018}, i.e., the support of
Duistermaat-Heckman measure, is the solution of the one-body quantum
marginal problem, i.e., the problem of identifying the set of
possible reduced density matrices; the Duistermaat-Heckman measure
is defined to be the push-forward of the Liouville measure on a
symplectic manifold along a moment map. Later, some specific
examples (lower dimensional computation) are given based on their
algorithm solution to the one-body quantum marginal problem. In
particular, some eigenvalue distributions involved in qubits can be
explicitly illustrated. As noted by the authors, the well-known
Horn's problem \cite{Horn1962}, i.e., the determination of the
possible eigenvalues of the sum of two Hermitian matrices with fixed
eigenvalues, is a specific application of the one-body quantum
marginal problem. Moreover, they used their approach to easily
recover the main result, that is, \eqref{eq:NA-DH-measure}, obtained
in \cite{Dooley1993}. But, however, they just obtained only abstract
formula \eqref{eq:NA-DH-measure} with no explicit expressions and
without applying it to study random states in quantum information
theory. In fact, the (probabilistic) mixture (i.e., the convex
combination) of quantum states is necessarily encountered in quantum
information theory. For example, the mixture of quantum states
arises in the convexity of the entanglement measure and the
coherence measure, etc. Thus, it is necessary to figure out the
spectral density of the mixture when we use the technique of Random
Matrix Theory (RMT) \cite{Mehta2004} to study relevant problems.
Motivated by this, in the paper, we will focus on
Duistermaat-Heckman measure over the moment polytope corresponding
to the Horn's problem and provide some analytical computations in
lower dimensional spaces which are perhaps related to some problems
in quantum information theory. Another motivation about this
investigation is perhaps related to the well-known fact---the
distribution law of the sum of independent random variables is
described by the convolution of the distribution law of individual
random variables---in Probability Theory. Instead, what we will
consider in the paper is to describe the spectral law of the sum of
non-commutative random variables, i.e., random Hermitian matrices.
In particular, we focus on the distribution law of the mixture of
random quantum states. The framework introduced in the work can just
applies to such problem\footnote{Note that recently, Zuber
\cite{JBZ2017} made a similar research with focus on the
distribution of spectrum of sum of two Hermitian matrices with given
spectra instead of the mixture of random quantum states. In his
method, the key point is to do some special kind of integrals
(apparently a difficult problem when the dimension increases).
Besides, our methods used in the paper are completely different from
Zuber's.}.

To be more specific, we consider the following problem: To derive
the spectral density of the equiprobable mixture of $n$ random
density matrices (i.e., positive semi-definite complex matrix of
unit trace), each of them chosen from an adjoint orbit $\cO_{\rho}$.
Here the adjoint orbit $\cO_\rho$ of $\rho$ is the set of
isospectral density matrices, that is, $\cO_{\rho} = \Set{\bsU\rho
\bsU^\dagger: \bsU\in\rS\rU(d)}$, where $\rS\rU(d)$ is the special
unitary group of $d\times d$ unitary complex matrices. Apparently,
$\cO_{\rho}$ is essentially determined by the spectrum of $\rho$. If
we write $\boldsymbol{\lambda}=(\lambda_1,\ldots,\lambda_d)$ for the
spectrum of $\rho$, i.e., a probability vector with
$\lambda_1\geqslant\cdots\geqslant \lambda_d\geqslant0$ and
$\sum^d_{j=1}\lambda_j=1$, then
$\cO_\rho=\cO_{\boldsymbol{\lambda}}$. With the above notations, our
problem can be reformulated as: Given $n$ probability vectors
$\boldsymbol{\lambda}^1,\ldots,\boldsymbol{\lambda}^n$, we derive
the spectral density of the mixture:
\begin{eqnarray}\label{eq:multi-orbits}
\rho_s = \frac1n \Pa{\sum^n_{j=1}\bsU_j \boldsymbol{\lambda}^j
\bsU^\dagger_j},
\end{eqnarray}
where each $\bsU_j\in\rS\rU(d)$ distributed according to the
normalized Haar measure. Note here that we make abuse of notations:
all $\boldsymbol{\lambda}^j$ can be viewed as diagonal matrices
$\diag\Pa{\lambda^{(j)}_1,\ldots,\lambda^{(j)}_d}$, or vectors
$\Pa{\lambda^{(j)}_1,\ldots,\lambda^{(j)}_d}$ according to the
context. Interestingly, if all $\boldsymbol{\lambda}^j$ are equal to
the same $\boldsymbol{\lambda}$, then \eqref{eq:multi-orbits} can be
viewed as a image of a random mixed-unitary channel $\Phi$:
$\rho_s=\Phi(\boldsymbol{\lambda})$.

Once we work out the spectral density of \eqref{eq:multi-orbits}, we
can use this to derive the spectral density of the equiprobable
mixture:
\begin{eqnarray}\label{eq:equi-mix}
\rho = \frac1n\sum^n_{j=1}\rho_j,
\end{eqnarray}
where $\rho_j\in\density{\complex^d}$, the set of all $d\times d$
density matrices, for $j=1,\ldots,n$ are chosen independently from
the following random state ensemble $\cE_{d,k}(d\leqslant k)$ which
are explained immediately. Such ensemble $\cE_{d,k}$ is obtained by
partial-tracing over the $k$-dimensional subsystem of a
$dk$-dimensional composite quantum system in pure states which are
Haar-distributed. Because we have already known the eigenvalue
distribution of the ensemble $\cE_{d,k}$ \cite{Zyczkowski2001},
i.e.,
\begin{eqnarray}\label{eq:pdk}
p_{d,k}(\boldsymbol{\lambda}):=p_{d,k}(\lambda_1,\ldots,\lambda_d)=C_{d,k}\delta\Pa{1-\sum^d_{j=1}\lambda_j}\prod^d_{j=1}\lambda^{k-d}_j\theta(\lambda_j)
\prod_{1\leqslant i<j\leqslant d}(\lambda_i-\lambda_j)^2.
\end{eqnarray}
The notations above are explained as follows. Here $C_{d,k}$ is the
normalization constant, given by
\begin{eqnarray*}
C_{d,k} =
\frac{\Gamma(dk)}{\prod^{d-1}_{j=0}\Gamma(k-j)\Gamma(d-j+1)}.
\end{eqnarray*}
Note that $\Gamma(z)$ is the Gamma function, defined for all complex
number with positive real part: $\Gamma(z) = \int^\infty_0
x^{z-1}e^{-x}\dif x$. In addition, $\delta$ is the Dirac delta
function \cite{Hoskins2009}. Besides, $\theta(x)$ is a function
defined by $\theta(x) = 1$ if $x>0$; otherwise, $\theta(x) = 0$. Now
we can answer the question \eqref{eq:equi-mix} based on the solution
to \eqref{eq:multi-orbits}. In fact, denote the spectral density of
$\rho_s$ by
$p\Pa{\boldsymbol{\lambda}^s|\boldsymbol{\lambda}^1,\ldots,\boldsymbol{\lambda}^n}$,
we conclude that the spectral density of $\rho$ in
\eqref{eq:equi-mix} is given by the following multiple integral:
$$
\int
p\Pa{\boldsymbol{\lambda}^s|\boldsymbol{\lambda}^1,\ldots,\boldsymbol{\lambda}^n}\prod^n_{j=1}p_{d,k}\Pa{\boldsymbol{\lambda}^j}\Br{\dif
\boldsymbol{\lambda}^j},
$$
where the function $p_{d,k}$ is from
\eqref{eq:pdk} and $\Br{\dif \boldsymbol{\lambda}^j }$ is the
Lebesgue volume element in $\real^d$, defined by $\Br{\dif
\boldsymbol{\lambda}^j } = \prod^d_{i=1}\dif \lambda^{(j)}_i$.

In view of this, let us focus on the derivation of the spectral
density of \eqref{eq:multi-orbits}. Before proceeding, some remarks
are necessary. Basically, what we are considered in the paper are
intimately related to Horn's problem (also called Horn's
conjecture), as mentioned previously. The Horn's problem asks for
the spectrum of eigenvalues of the sum of two given Hermitian
matrices with fixed eigenvalues. Specifically, given two Hermitian
matrices $\bsA$ and $\bsB$ with respective spectra $\bsa$ and
$\bsb$, the goal of Horn's problem is to identify the spectrum of
$\bsA+\bsB$. The solutions of Horn's problem, as vectors, form a
convex polytope whose describing linear inequalities have been
conjectured by Horn in 1962 \cite{Horn1962}. Although this problem
has already completely solved by Klyachko \cite{Klyachko1998} and
also by Knutson and Tao \cite{KT1999,KT2004}, later several other
proofs were presented, for instance, \cite{Alekseev2014}. Such
convex polytope are often defined by an exponential number of linear
inequalities when the matrix size are large enough. The general
problem is computationally intractable, it is also NP-hard. This
motivates us to get explicit computations for some small matrix
sizes. What we contributed in this paper is to derive explicit
expressions for the \emph{eigenvalue distribution} of the sum of
several matrices (say, two or more) instead of the determination of
the possible eigenvalues of the sum. Furthermore, these matrices are
restricted to be proportional to density matrices. Note that the
matrix size are restricted to no more than four in the paper since
analytical computation for matrix sizes larger than four seems too
complicated to be practical. In view of this reason, our method
provide a complete solution to the above problem algorithmically in
higher dimension.

We also consider the derivation of the probability density of the
diagonal part of $\rho_s$, defined in \eqref{eq:multi-orbits}. Let
us recall some notions related to it and its generalizations. A
well-known result of Isaac Schur \cite{Schur1923} indicates that the
diagonal elements $(a_{11},\ldots,a_{dd})$ of a $d\times d$
Hermitian matrix $\bsA=(a_{ij})$ solves a system of linear
inequalities involving the eigenvalues
$(\lambda_1,\ldots,\lambda_d)$. Indeed, if we viewing
$\bsa=(a_{11},\ldots,a_{dd})^\t$ and
$\boldsymbol{\lambda}=(\lambda_1,\ldots,\lambda_d)^\t$ as points in
$\real^d$, then by the Spectral Decomposition Theorem of a Hermitian
matrix, there exists a special unitary $\bsU\in\rS\rU(d)$ such that
$\bsA=\bsU\boldsymbol{\lambda}\bsU^\dagger$. Then $\bsa=(\bsU\star
\overline{\bsU})\boldsymbol{\lambda}$, where $\star$ stands for the
Schur product for both matrices with the same size, and the bar
means to taking the complex conjugate entrywise. That is, $\bsA\star
\bsB = (a_{ij}b_{ij})$. Clearly $\bsU\star \overline{\bsU}$ is a
$d\times d$ unistochastic matrix, a \emph{fortiori} bi-stochastic
matrix (with nonnegative entries, and both column-sum and row-sum
being equal to one). By Birkhoff-von Neumann theorem
\cite{Watrous2017}, we have obtained that $\bsa$ is in the convex
hull of the points $S_d\cdot\boldsymbol{\lambda}$. Here the action
of $S_d$ on the vector $\boldsymbol{\lambda}$ by permutating its
coordinates. Later, A. Horn shown \cite{Horn1954} the converse to
the above result holds true. Thus this convex hull is exactly the
set of diagonal parts of all elements from
$\cO_{\boldsymbol{\lambda}}$. A more general result related to
Horn's result is obtained by B. Kostant \cite{Kostant1974}. It is
easily seen from \eqref{eq:multi-orbits} that the diagonal part (as
a column vector) of $\rho_s$ in \eqref{eq:multi-orbits} is
determined by
\begin{eqnarray*}
\rho^D_s = \frac1n \sum^n_{j=1}(\bsU_j\star \overline{\bsU}_j)
\boldsymbol{\lambda}^j.
\end{eqnarray*}
The second goal of this paper is to identify the probability density
of $\rho^D_s$ in \eqref{eq:multi-orbits} when all
$\boldsymbol{\lambda}^j$ are fixed and $\bsU_j$ are
Haar-distributed.

The probability densities of the diagonal part and eigenvalues of
the mixture of several random qubits can be used to infer the
distribution of von Neumann entropy. As an application in quantum
information theory, we use our results to compute the average
entropy of the mixture of random quantum states and that of its
corresponding diagonal part. Furthermore, these computations can be
used to explain that the quantum coherence \cite{Baumgratz2014} of
the mixture decreases statistically as the number of components in
the mixture of qubits, as already noted in
\cite{Zhang2017a,Zhang2018}. In this process, we shall see that the
relative entropy of coherence, one of kinds of many quantum
coherence measures, defined via the relative entropy, naturally
relates the diagonal part and eigenvalues of a quantum state.

The paper is organized as follows. In
Section~\ref{sect:preliminary},  we present background tools related
to this paper, and recall the results obtained in \cite{Dooley1993}
by formulation used in \cite{Christandl2014}. Then, we consider the
equiprobable mixture of several qubit states, i.e., we derive the
spectral density of two qubit states and three qubit states
(Theorem~\ref{th:2}--Theorem~\ref{prop:3Qbit}), respectively, in
Section~\ref{sect:equi-mix}. Moreover, we present two examples to
demonstrate our method: In the first example, i.e.,
Example~\ref{exam:qutrit}, the density function of eigenvalues of
the equiprobable mixture of two qutrits with given spectra is
identified analytically; in the second example, i.e.,
Example~\ref{exam:qu4it}, we derive the density function of
eigenvalues of the equiprobable mixture of two two-qubits with given
spectra over some subregion of the support. Sequentially, in
Section~\ref{sect:app}, we present an application of our results in
quantum information theory. That is, the quantum coherence of the
mixture monotonously decreases statistically as the number of
components $n$. We conclude the paper with summary in
Section~\ref{sect:con-rem}.

\section{Preliminaries}\label{sect:preliminary}

As we shall see, Duistermaat-Heckman measure \cite{Guillemin2009} is
our central tool in the paper. In preparation for defining
Duistermaat-Heckman measure, we need to make an introduction about
push-forward measure or image measure from measure theory
\cite{Bogachev2007}. After that, we give the formal definition of
the moment map for a symplectic manifold \cite{McDuff2017}. Finally,
we focus on the product of coadjoint orbits where our problems will
be investigated. Note that we collect some notions (not new)
together in the section in order for the paper to be self-contained.

\subsection{Push-forward measure}

 In measure theory, a push-forward measure is
obtained by transferring a measure from one measurable space to
another using a measurable function. The following definition and
fact about push-forward measure can be found in \cite{Bogachev2007}.
\begin{definition}[Push-forward of a measure]
Given two measurable spaces $(\cX,\cF)$ and $(\cY,\cG)$, a
measurable mapping $\Phi:\cX\to\cY$ and a measure
$\mu:\cF\to[0,+\infty]$, the \emph{push-forward} of $\mu$ is defined
to be the measure $\Phi_*\mu: \cG\to[0,+\infty]$ given by
\begin{eqnarray*}
(\Phi_*\mu)(B) = \mu\Pa{\Phi^{-1}(B)}\quad \text{for}\quad B\in\cG.
\end{eqnarray*}
\end{definition}
The following result about the push-forward measure will be used in
the paper.
\begin{prop}[Change of variables formula]
Given two measurable spaces $(\cX,\cF)$ and $(\cY,\cG)$, a
measurable mapping $\Phi:\cX\to\cY$ and a measure
$\mu:\cF\to[0,+\infty]$. A measurable function $f$ on $\cY$ is
integrable with respect to the push-forward measure $\Phi_*\mu$ if
and only if the composition $\Phi\circ f$ is integrable with respect
to the measure $\mu$. In that case, the integrals coincide:
\begin{eqnarray*}
\int_{\cY} f\dif(\Phi_*\mu) = \int_{\cX} f\circ\Phi
\dif\mu:=\int_{\cX} (\Phi^*f)\dif\mu,
\end{eqnarray*}
where $\Phi^*f:=f\circ\Phi$ is the function on $\cX$, called the
pull-back of the function $f$ on $\cY$. In the notation of
distribution, the above fact can be represented by
\begin{eqnarray*}
\Inner{\Phi_*\mu}{f} = \Inner{\mu}{\Phi^*f}.
\end{eqnarray*}
Here $\Inner{\cdot}{\cdot}$ denotes the pairing between measures and
test functions.
\end{prop}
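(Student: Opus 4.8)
The plan is to prove this by the standard ``measure-theoretic induction'' (the standard machine), building up from indicator functions to general integrable functions in four stages. The only genuine input is the defining relation $(\Phi_*\mu)(B) = \mu\Pa{\Phi^{-1}(B)}$ from the preceding definition; everything after that is monotone convergence together with linearity of the integral.

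First I would verify the identity for indicator functions. For $B\in\cG$, observe that $\Phi^*\mathbf{1}_B = \mathbf{1}_B\circ\Phi = \mathbf{1}_{\Phi^{-1}(B)}$ as a function on $\cX$, since a point $x$ lands in $\Phi^{-1}(B)$ exactly when $\Phi(x)\in B$; this is measurable precisely because $\Phi$ is measurable. Then
\[
\int_{\cY}\mathbf{1}_B\dif(\Phi_*\mu) = (\Phi_*\mu)(B) = \mu\Pa{\Phi^{-1}(B)} = \int_{\cX}\mathbf{1}_{\Phi^{-1}(B)}\dif\mu = \int_{\cX}\Phi^*\mathbf{1}_B\dif\mu,
\]
which is exactly the claimed equality for $f=\mathbf{1}_B$.

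Second, I would extend to nonnegative simple functions $f = \sum_{i=1}^m c_i\mathbf{1}_{B_i}$ with $c_i\geqslant 0$ and $B_i\in\cG$. Because the pull-back $\Phi^*$ commutes with finite sums and both integrals are linear, the identity follows term by term from the indicator case. Third, for an arbitrary nonnegative measurable $f$ on $\cY$ I would choose simple functions $f_n\uparrow f$ pointwise; then $\Phi^*f_n = f_n\circ\Phi\uparrow f\circ\Phi = \Phi^*f$ pointwise on $\cX$, and two applications of the Monotone Convergence Theorem---one on $(\cX,\cF,\mu)$ and one on $(\cY,\cG,\Phi_*\mu)$---pass the equality to the limit.

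Finally, I would settle the general integrable case together with the integrability equivalence. Applying the nonnegative result to $\abs{f}$ gives
\[
\int_{\cY}\abs{f}\dif(\Phi_*\mu) = \int_{\cX}\abs{f}\circ\Phi\dif\mu = \int_{\cX}\abs{\Phi^*f}\dif\mu,
\]
so $f$ is $\Phi_*\mu$-integrable if and only if $\Phi^*f$ is $\mu$-integrable. When either side is finite, I would split $f = f^+ - f^-$ into positive and negative parts, apply the nonnegative identity to each, and subtract; since both pieces are finite the subtraction is legitimate and produces $\int_{\cY}f\dif(\Phi_*\mu) = \int_{\cX}\Phi^*f\dif\mu$. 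There is no real obstacle beyond careful bookkeeping: the one point to keep honest is that measurability of $\Phi$ is invoked at every stage to guarantee $\Phi^*f$ is measurable, so that its integral is even defined, and that the composition written as $\Phi\circ f$ in the statement should be read as $f\circ\Phi = \Phi^*f$.
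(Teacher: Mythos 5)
Your proof is correct. The paper offers no proof of this proposition at all---it is stated as a known fact imported from \cite{Bogachev2007}---and your four-stage argument (indicators, simple functions, monotone convergence for nonnegative $f$, then $f=f^+-f^-$ with the integrability equivalence via $\abs{f}$) is precisely the standard proof the cited reference supplies, including the correct reading of the paper's misprint ``$\Phi\circ f$'' as $f\circ\Phi=\Phi^*f$.
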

With this notion, we can describe the main notions used in the
paper.

\subsection{Moment map}

The notions mentioned in this part can be found in
\cite{Marsden1978,Kirillov}.

\begin{definition}[Symplectic
manifold] Assume that $M$ is a smooth manifold. $M$ is called
\emph{symplectic manifold} if there exists a closed non-degenerate
2-form $\omega_M$ on $M$. That is,
\begin{enumerate}[(i)]
\item \textbf{$\omega_M$ is a 2-form:} It is a anti-symmetric and
bilinear form on the product of two tangent spaces
$\T_mM\times\T_mM$ for each $m\in M$;
\item \textbf{$\omega_M$ is closed:} $\dif\omega_M=0$;
\item \textbf{$\omega_M$ is non-degenerate:} on each tangent space $\T_mM(m\in M)$: if
$\omega_M(\xi,\eta)=0$ for all $\eta\in\T_mM$, then $\xi=0$.
\end{enumerate}
Note that a closed non-degenerate 2-form is called \emph{symplectic
form}.
\end{definition}

\begin{definition}[Action of a Lie group on a manifold]
Let $M$ be a smooth manifold. An \emph{action} of a Lie group $G$ on
$M$ is a smooth mapping $\tau: G\times M\to M$, such that
\begin{enumerate}[(i)]
\item for all $m\in M$, $\tau(e,m)=m$ and
\item for every $g,h\in G$, $\tau(g,\tau(h,m))=\tau(gh,m)$ for all
$m\in M$.
\end{enumerate}
\end{definition}
For every $g\in G$, let $\tau_g: M\to M$ be given via $m\mapsto
\tau_g(m)$. The above definition can be rephrased as: the mapping
$g\mapsto \tau_g$ is a homomorphism of $G$ into the group of
diffeomorphisms of $M$. If $M$ is a \emph{vector space} and each
$\tau_g$ is a linear transformation, the action of $G$ on $M$ is
called a \emph{representation} of $G$ on $M$.

Suppose $\tau:G\times M\to M$ is a smooth action. If $\xi\in\g$, the
Lie algebra of $G$, then $\tau^\xi: \real\times M\to M$ is given via
$(t,\xi)\mapsto \tau(e^{t\xi},m)$ is an $\real$-action on $M$, that
is, $\tau^\xi$ is a flow on $M$. The corresponding vector field on
$M$ is given by $\xi_M(m):=\left.\frac{\dif}{\dif t}\right|_{t=0}
\tau_{e^{t\xi}}(m)$ is called the \emph{infinitesimal generator} of
the action corresponding to $\xi$.

The \emph{adjoint representation} $\Ad$ of $G$ on its Lie algebra
$\g$ is given by
\begin{eqnarray*}
G\times \g\to\g, \quad
(g,\xi)\mapsto g\cdot\xi:=\Ad(g)\xi=g\xi g^{-1}.
\end{eqnarray*}
This adjoint representation of $G$ on $\g$ induces a \emph{coadjoint
representation} $\Ad^*$ of $G$ on $\g^*$ (the dual space of $\g$):
\begin{eqnarray*}
G\times\g^*\to\g^*,\quad (g,\varphi)\mapsto g\cdot\varphi =
\Ad^*(g)\varphi=\Ad(g^{-1})^*\varphi,
\end{eqnarray*}
where
$\Inner{\Ad(g^{-1})^*\varphi}{\eta}=\Inner{\varphi}{\Ad(g^{-1})\eta}$.
That is, $\Ad(g^{-1})^*\varphi$ is the pull-back of $\varphi$ under
the mapping $\Ad(g^{-1})$.

\begin{definition}[Moment map]\label{def:momentmap}
Let $(M,\omega_M)$ be a connected symplectic manifold and
$\tau:G\times M\to M$ a \emph{symplectic action} of the Lie group
$G$ on $M$; that is, for each $g\in G$, the map $\tau_g:M\to M$ is
given via $\tau_g(m)=\tau(g,m)$ is symplectic, i.e.,
$\tau^*_g\omega_M=\omega_M$. We say that a mapping $\Phi_G:
M\to\g^*$ is a \emph{moment map} for the action if, for every
$\xi\in\g$,
\begin{eqnarray}\label{eq:exact-form-for-momentmap}
\dif\phi^\xi = \omega_M(\xi_M,\cdot):= \iota(\xi_M)\omega_M,
\end{eqnarray}
where $\Inner{\Phi_G}{\xi}=\phi^\xi: M\to\real$ is defined by
$\phi^\xi(m):=\Inner{\Phi_G(m)}{\xi}$, and $\xi_M$ is the
infinitesimal generator of the action corresponding to $\xi\in\g$.
Sometimes $(M,\omega_M,G,\Phi_G)$ is called a \emph{Hamiltonian
$G$-manifold}.
\end{definition}

\begin{exam}[The symplectic structure of a co-adjoint orbit]
Let $G$ be a compact connected Lie group with its Lie algebra $\g$.
Assume that $G$ acts on $\g$ by the adjoint action, and acts on
$\g^*$ by the co-adjoint action, as mentioned above. Fix
$\lambda\in\g^*$ and let $\cO_\lambda=G\cdot\lambda$ by the
co-adjoint action. The infinitesimal generator of the co-adjoint
action of $G$, corresponding to $\xi\in\g$, is given by
$\xi_{\cO_{\lambda}} (f) = - f\circ \ad(\xi),\quad f\in\cO_\lambda$.
The Kirillov-Kostant-Souriau 2-form on $\cO_\lambda$ is defined by
\begin{eqnarray}\label{eq:KKS-form}
\omega(\xi_{\cO_\lambda}(f), \eta_{\cO_\lambda}(f)) = -
f([\xi,\eta]),\quad f\in\cO_\lambda,\xi,\eta\in\g.
\end{eqnarray}
Moreover $\omega$ is a well-defined smooth and closed non-degenerate
2-form on $\cO_\lambda$ so that $(\cO_\lambda,\omega)$ is a
symplectic manifold. Furthermore, $\omega$ is $G$-invariant, thus a
symplectic form. The minus inclusion map
$\Phi_G:\cO_\lambda\hookrightarrow\g^*$ is a moment map, i.e.,
$\Phi_G$ is $G$-equivariant and $\dif\phi^\xi =
\iota(\xi_{\cO_\lambda})\omega\quad \xi\in\g$, where
$\phi^\xi=\Inner{\Phi_G}{\xi}:\cO_\lambda\to\real$. Note that
adjoint orbits in $\g$ can be identified with co-adjoint orbits in
$\g^*$ by choosing a positive definite $G$-invariant form on $\g$.
This convention will be used throughout in the paper.
\end{exam}

\subsection{Duistermaat-Heckman measure}

Throughout the paper, $K$ will denote a compact, connected Lie group
with maximal torus (maximal commutative subgroup) $T\subset K$, Weyl
group (the normalizer of $T$) $W$, respective Lie algebras
$\k,\liet$. We write $\pi_{K,T}:\k^*\to\liet^*$ for the projection
dual to the inclusion $\liet\subset\k$. Here $\k^*$ ($\liet^*$)
means the dual space of $\k$ ($\liet$). Let us also choose a
positive Weyl chamber $\liet^*_{\geqslant0}\subset\liet^*$; this
determines a set of positive roots
$\set{\alpha_1,\ldots,\alpha_R}\subset\mathrm{i}\liet^*$. All
positive roots are denoted simple by $\alpha>0$. We also denote by
$\liet^*_{>0}$ the interior of the positive Weyl chamber. The
notions related to theory of compact Lie groups and its Lie algebras
can be found in \cite{Hall2015}.

\begin{definition}[Duistermaat-Heckman measure]
Let $M$ be a compact, connected Hamiltonian $K$-manifold of
dimension $2m$, with symplectic form $\omega_M$ (a closed
non-degenerate 2-form) and a choice of moment map $\Phi_K:
M\to\mathrm{i}\k^*$, as in Definition~\ref{def:momentmap}. The
Liouville measure on $M$ is defined by $\mu_M:=\frac{\omega^{\wedge
m}_M}{(2\pi)^mm!}$, where $\omega^{\wedge
m}_M=\overbrace{\omega_M\wedge\cdots\wedge \omega_M}^m$ is the
$m$-th exterior power of $\omega_M$. More precisely, for a Borel
subset $B$ of $M$, the Liouville measure of $B$ is given by
$\mu_M(B)=\int_B\frac{\omega^{\wedge m}_M}{(2\pi)^mm!}$. The
non-Abelian \emph{Duistermaat-Heckman measure} $\dh^K_M$ is defined
as follows:
\begin{eqnarray*}
\dh^K_M = \frac1{p_K}(\tau_K)_*(\Phi_K)_*(\mu_M),
\end{eqnarray*}
where $\tau_K:\mathrm{i}\k^*\to\mathrm{i}\liet^*_{\geqslant0}$ is
defined as $\tau_K(\cO_\lambda)=\lambda$ for $\lambda\in
\mathrm{i}\liet^*_{\geqslant0}$ and
$p_K(\lambda)=\vol(\cO_\lambda)=\prod_{\alpha>0}\frac{\inner{\lambda}{\alpha}}{\inner{\alpha}{\varpi}}$
for $\varpi=\frac12\sum_{\alpha>0}\alpha$, half the sum of all
positive roots. Here $\vol(\cO_\lambda)$ is the symplectic volume of
such co-adjoint orbit $\cO_\lambda$ of dimension $2m$, it is
specifically given by $\vol(\cO_\lambda) =
\int_{\cO_\lambda}\frac{\omega^{\wedge
m}_{\cO_\lambda}}{(2\pi)^mm!}$, where the definition of
$\omega_{\cO_\lambda}$ is taken from \eqref{eq:KKS-form}.
\end{definition}
Thus the D-H measure associated with the co-adjoint action of $K$ on
a generic co-adjoint orbit $\cO_\lambda$ is a probability
distribution concentrated at the point $\lambda$. The \emph{Abelian}
D-H measure is defined as:
\begin{eqnarray*}
\dh^T_M = (\pi_{K,T})_*(\Phi_K)_*(\mu_M) = (\Phi_T)_*(\mu_M),
\end{eqnarray*}
where $\Phi_T:M\to\mathrm{i}\liet^*$ is given by
$\Phi_T=\pi_{K,T}\circ\Phi_K$. That is, the Liouville measure
$\mu_M$ is pushed forward along the moment map $\Phi_K$, and further
pushed forward along the map $\pi_{K,T}$. In more detail, for a
Borel subset $\cB$ in $\mathrm{i}\liet^*$, $\dh^T_M(\cB) =
\int_{\Phi^{-1}_T(\cB)} \frac{\omega^{\wedge m}_M}{(2\pi)^mm!}$.

\subsection{The product of co-adjoint orbits}

Firstly, we recall the following general fact which can be found in
the literature.
\begin{prop}
The product of symplectic manifolds $(M_1,\omega_1)$ and
$(M_2,\omega_2)$ is a symplectic manifold with respect to the form
$a_1\cdot p^*_1\omega_1+a_2\cdot p^*_2\omega_2$ for nonzero real
numbers $a_1,a_2\in\real$. Here $p_i:M_1\times M_2\to M_i$ is the
projection, where $i=1,2$.
\end{prop}

Consider the diagonal co-adjoint action of $K$ on the manifold
$M=\cO_{\lambda_1}\times\cO_{\lambda_2}$, where
$\lambda_1,\lambda_2\in\mathrm{i}\liet^*_{\geqslant0}$, which is
given via $h\cdot (f_1,f_2):=(\Ad^*(h)f_1,\Ad^*(h)f_2)$ for any
$h\in K$ and any $(f_1,f_2)\in
\cO_{\lambda_1}\times\cO_{\lambda_2}$. Let $\omega_1$ and $\omega_2$
be the Kirillov-Kostant-Souriau 2-forms defined over
$\cO_{\lambda_1}$ and $\cO_{\lambda_2}$, respectively. Denote by
$p_i:\cO_{\lambda_1}\times\cO_{\lambda_2}\to\cO_{\lambda_i} (i=1,2)$
be the projections. We can take $p^*_1\omega_1+p^*_2\omega_2$ as the
symplectic form $\omega$ on
$M=\cO_{\lambda_1}\times\cO_{\lambda_2}$. Let
$\lambda_1,\lambda_2\in\mathrm{i}\liet^*_{>0}$. Define the moment
map as follows: $\Phi_K:
\cO_{\lambda_1}\times\cO_{\lambda_2}\to\mathrm{i}\k^*,\quad
\Phi_K(f_1,f_2)=-(f_1+f_2)$. Clearly the map $\Phi_K$ is
$K$-equivariant in the sense that $\Phi_K(h\cdot(f_1,f_2)) = h\cdot
\Phi_K(f_1,f_2), \forall h\in K$. Next we check that $\Phi_K$
satisfies \eqref{eq:exact-form-for-momentmap}. Indeed, the
infinitesimal generator corresponding to $\xi\in\k$ is given by
$\xi_{\cO_{\lambda_1}\times\cO_{\lambda_2}} =
(\xi_{\cO_{\lambda_1}},\xi_{\cO_{\lambda_2}})$. Furthermore, $
(p_i)_*\xi_{\cO_{\lambda_1}\times\cO_{\lambda_2}} =
(p_i)_*(\xi_{\cO_{\lambda_1}},\xi_{\cO_{\lambda_2}})=
\xi_{\cO_{\lambda_i}}\quad i=1,2$. For $\xi,\eta\in\k$, we have
their infinitesimal generators on the product
$\cO_{\lambda_1}\times\cO_{\lambda_2}$ are
$\xi_{\cO_{\lambda_1}\times\cO_{\lambda_2}}$ and
$\eta_{\cO_{\lambda_1}\times\cO_{\lambda_2}}$, respectively. Denote
$\Inner{\Phi_K}{\xi}:=\phi^\xi$. Then, on the one hand,
\begin{eqnarray*}
\dif\phi^\xi(\eta_{\cO_{\lambda_1}\times\cO_{\lambda_2}}) =
\left.\frac{\dif}{\dif
t}\right|_{t=0}\phi^\xi\Pa{e^{t\eta}\cdot(f_1,f_2)}=
-(\Inner{f_1}{[\xi,\eta]}+\Inner{f_2}{[\xi,\eta]}).
\end{eqnarray*}
where $(f_1,f_2)\in \cO_{\lambda_1}\times\cO_{\lambda_2}$. That is,
\begin{eqnarray}\label{eq:differential-lhs}
\dif\phi^\xi(\eta_{\cO_{\lambda_1}\times\cO_{\lambda_2}}) =
\omega_1(\xi_{\cO_{\lambda_1}}(f_1),\eta_{\cO_{\lambda_1}}(f_1))+
\omega_2(\xi_{\cO_{\lambda_2}}(f_2),\eta_{\cO_{\lambda_2}}(f_2)).
\end{eqnarray}
On the other hand,
$(\iota(\xi_{\cO_{\lambda_1}\times\cO_{\lambda_2}})\omega)(\eta_{\cO_{\lambda_1}\times\cO_{\lambda_2}})
=
\omega(\xi_{\cO_{\lambda_1}\times\cO_{\lambda_2}}(f),\eta_{\cO_{\lambda_1}\times\cO_{\lambda_2}}(f))$,
where $f=(f_1,f_2)\in \cO_{\lambda_1}\times\cO_{\lambda_2}$. Since
$\omega=p^*_1\omega_1+p^*_2\omega_2$, it follows that
\begin{eqnarray*}
\omega(\xi_{\cO_{\lambda_1}\times\cO_{\lambda_2}}(f),\eta_{\cO_{\lambda_1}\times\cO_{\lambda_2}}(f))
=\omega_1(\xi_{\cO_{\lambda_1}}(f_1),\eta_{\cO_{\lambda_1}}(f_1))+\omega_2(\xi_{\cO_{\lambda_2}}(f_2),\eta_{\cO_{\lambda_2}}(f_2)).
\end{eqnarray*}
Thus
\begin{eqnarray}\label{eq:differential-rhs}
(\iota(\xi_{\cO_{\lambda_1}\times\cO_{\lambda_2}})\omega)(\eta_{\cO_{\lambda_1}\times\cO_{\lambda_2}})
=\omega_1(\xi_{\cO_{\lambda_1}}(f_1),\eta_{\cO_{\lambda_1}}(f_1))+\omega_2(\xi_{\cO_{\lambda_2}}(f_2),\eta_{\cO_{\lambda_2}}(f_2)).
\end{eqnarray}
By combining \eqref{eq:differential-lhs} and
\eqref{eq:differential-rhs}, we see that $\dif\Inner{\Phi_K}{\xi} =
\iota(\xi_{\cO_{\lambda_1}\times\cO_{\lambda_2}})\omega$.

Now we can consider the problem of describing the sum of two
coadjoint orbits (also equivalently identified with adjoint orbits
under the adjoint action of the special unitary group)
$\cO_{\lambda}+\cO_{\mu}$. This is the so-called Horn's problem. Let
$\lambda\in\mathrm{i}\liet^*_{>0}$ and
$\mu\in\mathrm{i}\liet^*_{\geqslant0}$. Then
\cite{Dooley1993,Christandl2014},
\begin{eqnarray}\label{eq:NA-DH-measure}
\dh^K_{\cO_\lambda\times\cO_\mu} = \sum_{w\in
W}(-1)^{l(w)}\delta_{w\lambda}*\dh^T_{\cO_\mu}
\end{eqnarray}
where $l(w)$ is the length of the Weyl group element $w$, and
$\delta_{\alpha}$ for the Dirac measure at $\alpha$; $*$ means the
convolution, the same below. Moreover, we also have the following
result \cite{Christandl2014}:
\begin{eqnarray}\label{eq:abelian-dh-measure-over-orbit}
\dh^T_{\cO_\mu} = \sum_{w\in W}(-1)^{l(w)}\delta_{w\mu}*
H_{-\alpha_1}*\cdots* H_{-\alpha_R},
\end{eqnarray}
where $H_{\omega}$ is the so-called \emph{Heaviside measure} which
is defined by $\Inner{H_\omega}{f} = \int^\infty_0
f(t\cdot\omega)\dif t$. Both results, i.e., \eqref{eq:NA-DH-measure}
and \eqref{eq:abelian-dh-measure-over-orbit}, will be employed to
derive the eigenvalue density of the mixture of several qubit states
in this paper. We summarize the above results into the following
proposition. Note that
$\delta_\alpha*\delta_\beta=\delta_{\alpha+\beta}$.
\begin{prop}
Let $\lambda\in\mathrm{i}\liet^*_{>0}$ and
$\mu,\nu\in\mathrm{i}\liet^*_{\geqslant0}$. Then
\begin{eqnarray}
\dh^K_{\cO_\lambda\times\cO_\mu} &=& \sum_{w,w'\in
W}(-1)^{l(w)+l(w')}\delta_{w\lambda+w'\mu}* H_{-\alpha_1}*\cdots*
H_{-\alpha_R},\\
\dh^K_{\cO_\lambda\times\cO_\mu\times\cO_\nu} &=& \sum_{w,w',w"\in
W}(-1)^{l(w)+l(w')+l(w")}\delta_{w\lambda+w'\mu+w"\nu}*
H_{-\alpha_1}*\cdots* H_{-\alpha_R}.
\end{eqnarray}
\end{prop}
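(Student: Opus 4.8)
The plan is to read both identities as corollaries of the two cited formulas \eqref{eq:NA-DH-measure} and \eqref{eq:abelian-dh-measure-over-orbit}, combined with the elementary algebra of convolution recorded just before the statement: $*$ is bilinear, associative and commutative, and $\delta_\alpha*\delta_\beta=\delta_{\alpha+\beta}$. Writing $H:=H_{-\alpha_1}*\cdots*H_{-\alpha_R}$ for brevity, the first identity needs only a substitution. First I would insert \eqref{eq:abelian-dh-measure-over-orbit} for $\dh^T_{\cO_\mu}$ into \eqref{eq:NA-DH-measure}, obtaining
\[
\dh^K_{\cO_\lambda\times\cO_\mu}=\sum_{w\in W}(-1)^{l(w)}\delta_{w\lambda}*\Pa{\sum_{w'\in W}(-1)^{l(w')}\delta_{w'\mu}*H}.
\]
Pulling the finite inner sum out through the bilinear convolution and using $\delta_{w\lambda}*\delta_{w'\mu}=\delta_{w\lambda+w'\mu}$ collapses the expression to the claimed double sum over $W\times W$, carrying a single factor $H$.

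For the second identity I first need a three-orbit analogue of \eqref{eq:NA-DH-measure}. Because the moment map on a product is additive, $\Phi_K(f_1,f_2,f_3)=-(f_1+f_2+f_3)$, and the Liouville measure on a product factorizes, the reduction underlying \eqref{eq:NA-DH-measure} still singles out one orbit for the Weyl antisymmetrization and leaves the remaining factor as its Abelian D--H measure:
\[
\dh^K_{\cO_\lambda\times\cO_\mu\times\cO_\nu}=\sum_{w\in W}(-1)^{l(w)}\delta_{w\lambda}*\dh^T_{\cO_\mu\times\cO_\nu}.
\]
Additivity of $\Phi_T=\pi_{K,T}\circ\Phi_K$ also realizes the push-forward along a sum as a convolution, so $\dh^T_{\cO_\mu\times\cO_\nu}=\dh^T_{\cO_\mu}*\dh^T_{\cO_\nu}$. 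Substituting \eqref{eq:abelian-dh-measure-over-orbit} into each factor and collapsing the delta convolutions exactly as before yields the sum $\sum_{w,w',w''}(-1)^{l(w)+l(w')+l(w'')}\delta_{w\lambda+w'\mu+w''\nu}$ convolved with the Heaviside part.

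The step I expect to be the genuine obstacle, and the one I would scrutinize before finalizing, is the bookkeeping of the Heaviside factors. Each Abelian measure $\dh^T_{\cO_\sigma}$ supplies one copy of $H$, so the convolution $\dh^T_{\cO_\mu}*\dh^T_{\cO_\nu}$ naively carries two copies rather than the single copy written in the statement. I would settle the count by a Fourier (Harish--Chandra) check: the orbital integral gives $\widehat{\dh^T_{\cO_\sigma}}(\xi)\propto A_\sigma(\xi)/\prod_{\alpha>0}\inner{\mathrm{i}\alpha}{\xi}$ with $A_\sigma(\xi)=\sum_{w}(-1)^{l(w)}e^{\mathrm{i}\inner{w\sigma}{\xi}}$, while the general identity $\widehat{\dh^T_M}(\xi)\propto\big(\prod_{\alpha>0}\inner{\mathrm{i}\alpha}{\xi}\big)^{-1}\sum_{w}(-1)^{l(w)}\widehat{\dh^K_M}(w^{-1}\xi)$ holds (the orbit-volume factor $p_K$ cancels in the derivation). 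Comparing the two expressions for a product of $N$ orbits forces the antisymmetrized non-Abelian transform to have denominator power $N-1$: for $N=2$ this is the single $H$ of \eqref{eq:NA-DH-measure}, recovering the first identity, whereas for $N=3$ it is $H*H$. I would therefore expect the correct three-orbit formula to carry two copies of $H_{-\alpha_1}*\cdots*H_{-\alpha_R}$, and I would reconcile the displayed statement with this count---or locate the compensating cancellation that reduces it to one---before committing to the proof.
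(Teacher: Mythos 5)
Your proof of the first identity is correct and is precisely the paper's own (implicit) argument: the paper offers no proof beyond the remark $\delta_\alpha*\delta_\beta=\delta_{\alpha+\beta}$, i.e., substitute \eqref{eq:abelian-dh-measure-over-orbit} into \eqref{eq:NA-DH-measure}, pull the finite sums through the bilinear convolution, and collapse the deltas, exactly as you do.

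The obstacle you flag in the second identity is not a gap in your argument but an error in the paper's displayed statement, and your count is the correct one. The three-orbit analogue of \eqref{eq:NA-DH-measure} is $\dh^K_{\cO_\lambda\times\cO_\mu\times\cO_\nu}=\sum_{w\in W}(-1)^{l(w)}\delta_{w\lambda}*\dh^T_{\cO_\mu}*\dh^T_{\cO_\nu}$; each Abelian factor contributes one copy of $H:=H_{-\alpha_1}*\cdots*H_{-\alpha_R}$ and nothing cancels, so the right-hand side carries $H*H$, not $H$. The paper itself computes with exactly this (correct) version later: in the subsection on the mixture of three qubit states it writes the decomposition above, and its $\rS\rU(2)$ computation ends in $\cdots*H_{-\bsh}*H_{-\bsh}$ (two copies; here $R=1$), which is what produces the piecewise-\emph{linear} factors $(a_3-t)$, etc., in the density \eqref{eq:3qubit-mix}, as opposed to the constant density of the two-orbit case; the four-qubit computation in the appendix likewise carries three copies $H_{-\bsh}*H_{-\bsh}*H_{-\bsh}$. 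In general $N$ orbits yield $N-1$ copies, matching your Fourier-side count of denominator power $N-1$. So the second display of the proposition should read $\sum_{w,w',w''\in W}(-1)^{l(w)+l(w')+l(w'')}\delta_{w\lambda+w'\mu+w''\nu}*\Pa{H_{-\alpha_1}*\cdots*H_{-\alpha_R}}^{*2}$; with that correction your derivation is complete, and there is no compensating cancellation to look for.
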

Of course, we can generalize further the above proposition to
compute the non-Abelian Duistermaat-Heckman measure
$\dh^K_{\cO_\lambda\times\cO_{\mu_1}\times\cdots\times\cO_{\mu_q}}$,
$\lambda\in\mathrm{i}\liet^*_{>0}$ and
$\mu_1,\ldots,\mu_q\in\mathrm{i}\liet^*_{\geqslant0}$. But this not
the goal of this paper.

Next let us focus on the case where $K=\rS\rU(d)$, the set of all
$d\times d$ unitary matrices with unit determinant, its Lie algebra
is $\k=\su(d)$, the set of all $d\times d$ skew-Hermitian matrices
with trace zero, and its maximal torus $T$, the maximal commutative
subgroup of $K$, with its Lie algebra $\liet$ being identified with
the set of all $d\times d$ diagonal matrices with imaginary entries
of zero trace. The Weyl group $W$ of $\rS\rU(d)$ is $S_d$ (up to
isomorphism). In such case, theoretically, we can do analytical
computation about Duistermaat-Heckman measure albeit this problem
has exponential complexity. With the previous preparation, in the
next section, we derive some explicit expressions for qubit
situations.

\section{Main results}\label{sect:equi-mix}

In the following two subsections, let $K=\rS\rU(d)$ for $d=2$. Then
$T=\Set{\diag(e^{\mathrm{i}\theta},e^{-\mathrm{i}\theta}):\theta\in\real}$
with its Lie algebra
$\liet=\Set{\diag(\mathrm{i}\theta,-\mathrm{i}\theta):\theta\in\real}$.
Thus $\bsh=\diag(1,-1)$ is the basis of $\liet$, and $\liet=
\mathrm{i}\real \cdot \bsh\cong\real$. Its unique positive root
$\alpha\in\liet^*$ is given by
\begin{eqnarray}\label{eq:bsh}
\alpha(\bsh)=2,\quad \text{ where } \bsh=\diag(1,-1).
\end{eqnarray}
Hence $\alpha\cong \bsh$ via the Hilbert-Schmidt inner product
$\alpha(\bsh)=\inner{\bsh}{\bsh}$. Besides, the Weyl group is given
by $W\cong S_2$, the permutation group of degree 2. Let
$\lambda\in\mathrm{i}\liet^*_{>0}$. Then
\eqref{eq:abelian-dh-measure-over-orbit} reduces to the following
\begin{eqnarray*}
\dh^T_{\cO_\lambda} = \sum_{w\in S_2}(-1)^{l(w)}\delta_{w\lambda}*
H_{-\alpha}.
\end{eqnarray*}
\subsection{The mixture of two qubit states}

For the mixture of two qubit states, we have that the Abelian D-H
measure over the manifold $\cO_\mu\times\cO_\nu$ is given by the
following convolution: $\dh^T_{\cO_\mu\times\cO_\nu} =
\dh^T_{\cO_\mu}*\dh^T_{\cO_\nu}$. Thus the non-Abelian D-H measure
is the following: $\dh^K_{\cO_\mu\times\cO_\nu} = \sum_{w\in
W}(-1)^{l(w)}\delta_{w\mu}*\dh^T_{\cO_\nu}$, where
$\mu\in\mathrm{i}\liet^*_{>0}$ is represented by $\mu\cong \mu\cdot
\bsh$ for $\mu>0$. Now, for $\mu>0,\nu\geqslant0$,
\begin{eqnarray*}
\dh^K_{\cO_\mu\times\cO_\nu} = \left.\Pa{\delta_{(\mu+\nu)\cdot
\bsh} + \delta_{-(\mu+\nu)\cdot \bsh} - \delta_{(\mu-\nu)\cdot \bsh}
- \delta_{(\nu-\mu)\cdot \bsh}}*
H_{-\bsh}\right|_{\mathrm{i}\liet^*_{\geqslant0}}.
\end{eqnarray*}
Therefore, we see that $\dh^K_{\cO_\mu\times\cO_\nu} =
\Pa{\delta_{(\mu+\nu)\cdot \bsh}- \delta_{\abs{\mu-\nu}\cdot \bsh}}*
H_{-\bsh}$. Furthermore,
\begin{eqnarray*}
\Inner{\dh^K_{\cO_\mu\times\cO_\nu}}{f} = \int^{\mu+\nu}_0
f(t\cdot\bsh)\dif t - \int^{\abs{\mu-\nu}}_0 f(t\cdot\bsh)\dif t=
\int^{\mu+\nu}_{\abs{\mu-\nu}} f(t\cdot\bsh)\dif t.
\end{eqnarray*}
According to the definition of D-H measure $\dh^K_M =
\frac1{p_K}(\tau_K)_*(\Phi_K)_*\mu_M$, where $K=\rS\rU(2)$ and
$M=\cO_\mu\times\cO_\nu$. Multiplying the non-Abelian
Duistermaat-Heckman measure by the symplectic volume polynomial
$p_K(\lambda\cdot\bsh)=2\lambda$, thus we have
\begin{eqnarray*}
p_K\frac{\dh^K_M}{\vol(M)} =
(\tau_K)_*(\Phi_K)_*\Pa{\frac{\mu_M}{\vol(M)}},
\end{eqnarray*}
where $\vol(M) = \vol(\cO_\mu)\vol(\cO_\nu)=4\mu\nu$. Next,
\begin{eqnarray*}
\Inner{(\tau_K)_*(\Phi_K)_*\Pa{\frac{\mu_M}{\vol(M)}}}{f}
=\frac1{4\mu\nu}\Inner{\dh^K_M}{p_Kf}=
\frac1{2\mu\nu}\int^{\mu+\nu}_{\abs{\mu-\nu}}
f(\lambda\cdot\bsh)\lambda\dif \lambda.
\end{eqnarray*}
That is, the density of $\lambda$ with respect to the measure
$(\tau_K)_*(\Phi_K)_*\Pa{\frac{\mu_M}{\vol(M)}}$ is given by the
following analytical formula:
\begin{eqnarray}\label{eq:mix-2-qubit}
p(\lambda) = \frac{\lambda}{2\mu\nu},\quad
\lambda\in\Br{\abs{\mu-\nu},\mu+\nu}.
\end{eqnarray}
A direct consequence of \eqref{eq:mix-2-qubit} can be obtained
immediately.
\begin{prop}\label{prop:lmn}
The probability density function of an eigenvalue $s$ of the mixture
$\rho^w=w\rho_1+(1-w)\rho_2 (w\in(0,1))$ of two random density
matrices, chosen uniformly from respective unitary orbits $\cO_a$
and $\cO_b$ with $a,b$ are fixed in $\Pa{0,\frac12}$, i.e.,
$\cO_a:=\cO_{(1-a,a)}$ and $\cO_b:=\cO_{(1-b,b)}$, is given by
\begin{eqnarray}\label{eq:density-eigen}
f_w(s|a,b) =
\frac1{4w(1-w)}\times\frac{\abs{s-\frac12}}{\Pa{\frac12-a}\Pa{\frac12-b}},
\end{eqnarray}
where $s \in[t_0(w),t_1(w)]\cup[1-t_1(w),1-t_0(w)]$. Here
\begin{eqnarray}\label{eq:t0-t1-of-w}
t_0(w):=wa+(1-w)b,\quad
t_1(w):=\frac12-\Abs{w\Pa{\frac12-a}-(1-w)\Pa{\frac12-b}}.
\end{eqnarray}
\end{prop}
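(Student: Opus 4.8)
The plan is to specialize the general two–qubit formula \eqref{eq:mix-2-qubit} to the convex combination $\rho^w=w\rho_1+(1-w)\rho_2$ by tracking how the weights $w,1-w$ rescale the orbit parameters and then how the resulting eigenvalue of the $2\times 2$ density matrix relates to the torus coordinate $\lambda$. First I would fix the identification: a qubit with spectrum $(1-a,a)$, $a\in(0,\tfrac12)$, sits on the orbit $\cO_a$, and in the $\liet\cong\real$ coordinate $\bsh=\diag(1,-1)$ its traceless part corresponds to the radius $\mu=\tfrac12-a$ (since the deviation of the eigenvalues from $\tfrac12$ is $\pm(\tfrac12-a)$). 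Thus the scaled matrices $w\rho_1$ and $(1-w)\rho_2$ contribute radii $\mu=w\Pa{\tfrac12-a}$ and $\nu=(1-w)\Pa{\tfrac12-b}$ respectively, because multiplying a Hermitian matrix by a scalar scales its traceless part by the same scalar.

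Next I would substitute these $\mu,\nu$ into \eqref{eq:mix-2-qubit}. The density there is $p(\lambda)=\frac{\lambda}{2\mu\nu}$ on $\lambda\in\Br{\abs{\mu-\nu},\mu+\nu}$, where $\lambda$ is the radial torus coordinate of the mixture $\rho^w$, i.e.\ the distance of its eigenvalues from $\tfrac12$. So an eigenvalue $s$ of $\rho^w$ satisfies $\abs{s-\tfrac12}=\lambda$, giving the two symmetric branches $s=\tfrac12\pm\lambda$. I would then perform the change of variables from $\lambda$ to $s$: since $\dif\lambda=\dif s$ (up to sign on each branch) and the single interval in $\lambda$ splits into two mirror-image intervals in $s$, the factor $\lambda=\abs{s-\tfrac12}$ appears directly in the numerator, and the normalization $2\mu\nu=2w(1-w)\Pa{\tfrac12-a}\Pa{\tfrac12-b}$ must be halved on each branch to preserve total mass, producing the prefactor $\frac1{4w(1-w)}$ multiplying $\frac{\abs{s-\frac12}}{\Pa{\frac12-a}\Pa{\frac12-b}}$, which is exactly \eqref{eq:density-eigen}.

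Finally I would verify the support. The $\lambda$-interval endpoints $\abs{\mu-\nu}$ and $\mu+\nu$ translate to $s$-values via $s=\tfrac12\pm\lambda$. The outer endpoint $\lambda=\mu+\nu=\tfrac12-t_0(w)$ after rearranging $t_0(w)=wa+(1-w)b$, so $s=\tfrac12\pm(\tfrac12-t_0)$ gives the extreme values $t_0(w)$ and $1-t_0(w)$; the inner endpoint $\lambda=\abs{\mu-\nu}=\Abs{w\Pa{\tfrac12-a}-(1-w)\Pa{\tfrac12-b}}=\tfrac12-t_1(w)$ gives $s=t_1(w)$ and $s=1-t_1(w)$. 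Assembling the two branches yields exactly $s\in[t_0(w),t_1(w)]\cup[1-t_1(w),1-t_0(w)]$ as claimed in \eqref{eq:t0-t1-of-w}.

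The step I expect to be the main obstacle is bookkeeping the change of variables correctly, specifically justifying the factor of $2$ that converts the single $\lambda$-interval of mass one into two $s$-branches each carrying half the mass, and confirming that the Jacobian contributes no extra factor on either branch. I must also check the consistency condition $\mu>\nu$ or $\mu<\nu$ (equivalently which of the two $t_1$ expressions inside the absolute value dominates) so that the inner endpoint is correctly placed; the absolute values in both \eqref{eq:density-eigen} and \eqref{eq:t0-t1-of-w} are precisely what makes the formula uniform across these cases, so the remaining work is to confirm that the algebra of endpoints matches regardless of the sign of $w\Pa{\tfrac12-a}-(1-w)\Pa{\tfrac12-b}$.
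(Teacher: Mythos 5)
Your proposal is correct and follows essentially the same route as the paper's own proof: specialize \eqref{eq:mix-2-qubit} via $\mu=w\Pa{\tfrac12-a}$, $\nu=(1-w)\Pa{\tfrac12-b}$, $\lambda=\Abs{s-\tfrac12}$, then translate the endpoints $\abs{\mu-\nu}$ and $\mu+\nu$ into $t_0(w),t_1(w)$ and their mirror images. If anything, you are more explicit than the paper about why the prefactor becomes $\tfrac1{4w(1-w)}$ rather than $\tfrac1{2w(1-w)}$ (the unit mass splitting evenly over the two symmetric branches $s=\tfrac12\pm\lambda$), a step the paper passes over silently.
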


\begin{proof}
In fact, let $\rho_a\in
\cO_a:=\Set{\bsU\diag(1-a,a)\bsU^\dagger:\bsU\in\rS\rU(2)}$ and
$\rho_b\in\cO_b$, consider the mixture of $\rho_a$ and $\rho_b$,
i.e., $\rho^w_s = w\rho_a+(1-w)\rho_b$ with $a,b\in\Pa{0,\frac12}$.
Then let $\mu=w\Pa{\frac12-a},\nu=(1-w)\Pa{\frac12-b}$ and
$\lambda=\Abs{\frac12-s}$ for $s\in[0,1]$. Then
$\lambda\in\Br{\abs{\mu-\nu},\mu+\nu}$ can be expressed as
$$
wa+(1-w)b\leqslant s\leqslant \min\Pa{wa+(1-w)(1-b),w(1-a)+(1-w)b}.
$$
Note that
$$
\min\Pa{wa+(1-w)(1-b),w(1-a)+(1-w)b}=\frac12-\Abs{w\Pa{\frac12-a}-(1-w)\Pa{\frac12-b}}.
$$
Therefore we have that
\begin{eqnarray*}
f_w(s|a,b) =\frac1{4w(1-w)}\times
\frac{\abs{s-\frac12}}{\Pa{\frac12-a}\Pa{\frac12-b}},
\end{eqnarray*}
where $s \in[t_0(w),t_1(w)]\cup[1-t_1(w),1-t_0(w)]$. This completes
the proof.
\end{proof}
This extends the result obtained in \cite{Zhang2017c}. Recall that
any qubit density matrix can be represented as
\begin{eqnarray}\label{eq:Bloch-rep}
\rho(\bsr)=\frac12(\I_2+\bsr\cdot\boldsymbol{\sigma}),
\end{eqnarray}
where $\bsr=(r_x,r_y,r_z)\in\real^3$ is the Bloch vector with its
length $r:=\abs{\bsr}\leqslant1$, and
$\boldsymbol{\sigma}=(\sigma_x,\sigma_y,\sigma_z)$, where
\begin{eqnarray*}
\sigma_x=\Pa{\begin{array}{cc}
               0 & 1 \\
               1 & 0
             \end{array}
},\quad \sigma_y=\Pa{\begin{array}{cc}
                       0 & -\mathrm{i} \\
                       \mathrm{i} & 0
                     \end{array}
},\quad\sigma_z=\Pa{\begin{array}{cc}
                      1 & 0 \\
                      0 & -1
                    \end{array}
}
\end{eqnarray*}
are three Pauli matrices. The relationship between both eigenvalues
of a qubit density matrix and the length of its corresponding Bloch
vector is given by: $\lambda_{\pm}(\rho(\bsr))=\frac12(1\pm r)$.
Note also that the probability density for the length $r$ of the
Bloch vector $\bsr$ in the Bloch representation \eqref{eq:Bloch-rep}
of a random qubit $\rho\in\cE_{2,2}$, i.e., by partial-tracing over
a Haar-distributed pure two-qubit state is given by
\cite{Zhang2018}:
\begin{eqnarray}\label{eq:Bloch-lenghth}
p^{(1)}(r) = 3r^2, \quad r\in[0,1].
\end{eqnarray}
Then \eqref{eq:density-eigen} can be reformulated as the following
form.
\begin{cor}\label{cor:rr1r2}
The conditional probability density function of the length $r$ of
Bloch vector $\boldsymbol{r}$ of the mixture:
$\rho_w(\bsr)=w\rho(\bsr_1)+(1-w)\rho(\bsr_2)(w\in(0,1))$, where
$r_1,r_2\in(0,1)$ are fixed, is given by
\begin{eqnarray}\label{eq:rr1r2}
p_w(r|r_1,r_2) = \frac1{2w(1-w)}\times\frac{r}{r_1r_2},
\end{eqnarray}
where $r\in[r^-_w,r^+_w]$. Here $r^-_w:=\abs{wr_1-(1-w)r_2}$ and
$r^+_w:=wr_1+(1-w)r_2$.
\end{cor}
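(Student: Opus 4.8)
The plan is to obtain the corollary directly from Proposition~\ref{prop:lmn} (equivalently from \eqref{eq:mix-2-qubit}) by reparametrizing the eigenvalue picture in the Bloch picture through a single change of variables; no new integration is needed, since $r_1,r_2$ are held fixed exactly as $a,b$ were. The whole argument hinges on the dictionary between the spectral parameters $a,b,s$ and the Bloch lengths $r_1,r_2,r$, supplied by the relation $\lambda_\pm(\rho(\bsr))=\frac12(1\pm r)$. First I would set up this dictionary: writing a qubit spectrum as $(1-a,a)$ with $a\in\Pa{0,\frac12}$, the relation forces $a=\frac12(1-r_1)$, and hence the crucial simplification $\frac12-a=\frac{r_1}{2}$; likewise $\frac12-b=\frac{r_2}{2}$, and the hypotheses $r_1,r_2\in(0,1)$ match $a,b\in\Pa{0,\frac12}$ exactly. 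For the mixture $\rho_w(\bsr)$ the same relation gives $\Abs{s-\frac12}=\frac{r}{2}$, so the eigenvalue $s$ and the Bloch length $r$ are tied by $r=2\Abs{s-\frac12}$, i.e. $\lambda=\frac{r}{2}$ in the notation $\lambda=\Abs{s-\frac12}$ of \eqref{eq:mix-2-qubit}.

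Next I would carry out the substitution. From \eqref{eq:mix-2-qubit} the density of $\lambda$ is $p(\lambda)=\frac{\lambda}{2\mu\nu}$ on $\Br{\abs{\mu-\nu},\mu+\nu}$, where (as in the proof of Proposition~\ref{prop:lmn}) $\mu=w\Pa{\frac12-a}=\frac{wr_1}{2}$ and $\nu=(1-w)\Pa{\frac12-b}=\frac{(1-w)r_2}{2}$, so that $\mu\nu=\frac{w(1-w)r_1r_2}{4}$. Setting $\lambda=\frac{r}{2}$ with Jacobian factor $\frac12$ gives $p_w(r)=p(\lambda)\cdot\frac12=\frac{\lambda}{4\mu\nu}=\frac{r}{8\mu\nu}=\frac{r}{2w(1-w)r_1r_2}$, which is \eqref{eq:rr1r2}. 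The support transforms under $r=2\lambda$ from $\Br{\abs{\mu-\nu},\mu+\nu}$ to $\Br{\abs{wr_1-(1-w)r_2},\,wr_1+(1-w)r_2}=\Br{r^-_w,r^+_w}$, as claimed.

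The only delicate point, and the step I expect to require care, is the bookkeeping of the two spectral branches under the folding map $r=2\Abs{s-\frac12}$: the eigenvalues $s$ and $1-s$ collapse to the same Bloch length, so one must be sure that the quoted $p(\lambda)$ is genuinely the density of $\lambda=\Abs{s-\frac12}$ with both branches already folded in, rather than of a single eigenvalue of the full matrix. A normalization check, $\int_{\abs{\mu-\nu}}^{\mu+\nu}\frac{\lambda}{2\mu\nu}\dif\lambda=1$, confirms this and guarantees that the factor of two arising from the two branches and the factor $\frac12$ from the Jacobian are each counted exactly once; the remainder of the argument is the elementary substitution above.
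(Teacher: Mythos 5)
Your proposal is correct and is essentially the paper's own argument: the paper treats Corollary~\ref{cor:rr1r2} as an immediate reformulation of Proposition~\ref{prop:lmn} (equivalently of \eqref{eq:mix-2-qubit}) under the dictionary $\frac12-a=\frac{r_1}2$, $\frac12-b=\frac{r_2}2$, $\Abs{s-\frac12}=\frac r2$, which is exactly the change of variables you carry out, with the Jacobian and support bookkeeping made explicit. Your normalization check $\int_{\abs{\mu-\nu}}^{\mu+\nu}\frac{\lambda}{2\mu\nu}\,\dif\lambda=1$ correctly confirms that \eqref{eq:mix-2-qubit} is already the folded density, so no branch factor is double-counted.
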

In particular, for $w=\frac12$, the above result in
Corollary~\ref{cor:rr1r2} is reduced to the one obtained in
\cite{Zhang2018}. From Probability Theory, we know that
\begin{eqnarray*}
p^{(2)}_w(r) =
\iint_{R_w(r)}p_w(r|r_1,r_2)p^{(1)}(r_1)p^{(1)}(r_2)\dif r_1\dif
r_2,
\end{eqnarray*}
where $R_w(r)$ is a family of sections, parameterized by
$r\in[0,1]$, where arbitrary $w\in(0,1)$ is fixed:
\begin{eqnarray*}
R_w(r) :=\Set{(r_1,r_2)\in[0,1]^2: \abs{wr_1-(1-w)r_2}\leqslant
r\leqslant wr_1+(1-w)r_2},\quad r\in[0,1].
\end{eqnarray*}
Without loss of generality, we assume that $w\geqslant 1-w$, i.e.,
$w\geqslant\frac12$. If $w=\frac12$, then we will drop those
corresponding subindexes $w$ of the quantities
$p^{(2)}_w(r),R_w(r),p_w(r|r_1,r_2)$ and they reduces as
$p^{(2)}(r),R(r),p(r|r_1,r_2)$.

\begin{thrm}\label{th:2}
The probability density function $p^{(2)}_w$ of the length $r$ of
the Bloch vector $\bsr$ of the mixture:
$\rho_w(\bsr)=w\rho_1+(1-w)\rho_2(w\in\Pa{\frac12,1})$, where
$\rho_j\in\cE_{2,2}(j=1,2)$, is given by
\begin{eqnarray}\label{eq:pdf2}
p^{(2)}_w(r) = \begin{cases}\frac{3(1-r)^2r\Pa{r^2+2r-12w^2+12w-3}}{16w^3(1-w)^3},& r\in[2w-1,1],\\
\frac{3r^2}{w^3},& r\in\Br{0,2w-1}.\end{cases}
\end{eqnarray}
Moreover, we have that $\lim_{w\nearrow 1}p^{(2)}_w(r)=3r^2\text{
and } \lim_{w\searrow \frac12}p^{(2)}_w(r)=12r^2(r^3-3r+2)$.
\end{thrm}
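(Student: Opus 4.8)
The plan is to evaluate directly the double integral displayed just before the theorem, feeding in the two ingredients already in hand: the conditional law $p_w(r\mid r_1,r_2)=\frac{1}{2w(1-w)}\frac{r}{r_1r_2}$ from Corollary~\ref{cor:rr1r2} and the single-qubit Bloch-length law $p^{(1)}(r_i)=3r_i^2$ from \eqref{eq:Bloch-lenghth}. Substituting, the $r_1r_2$ in the denominator cancels against the $r_i^2$ factors, leaving
\begin{eqnarray*}
p^{(2)}_w(r) = \frac{9r}{2w(1-w)}\iint_{R_w(r)} r_1 r_2\,\dif r_1\,\dif r_2.
\end{eqnarray*}
First I would normalize the geometry by the linear change of variables $u=w r_1$, $v=(1-w)r_2$, which carries the unit square onto the rectangle $[0,w]\times[0,1-w]$ and turns the two inequalities defining $R_w(r)$ into the symmetric triangle inequalities $\abs{u-v}\leqslant r\leqslant u+v$. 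Collecting the Jacobian factors gives
\begin{eqnarray*}
p^{(2)}_w(r) = \frac{9r}{2w^3(1-w)^3}\iint_{R'} uv\,\dif u\,\dif v,\qquad R'=\Set{(u,v): 0\leqslant u\leqslant w,\ 0\leqslant v\leqslant 1-w,\ \abs{u-v}\leqslant r\leqslant u+v}.
\end{eqnarray*}

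The decisive geometric observation, and the source of the two branches in \eqref{eq:pdf2}, is to locate when the right edge $u=w$ of the rectangle meets $R'$. On that edge the constraint $u-v\leqslant r$ forces $v\geqslant w-r$, while always $v\leqslant 1-w$; these are compatible exactly when $r\geqslant w-(1-w)=2w-1$ (here I use the standing assumption $w\geqslant\frac12$, so $2w-1\geqslant0$). Hence for $r\in[0,2w-1]$ the cut $u=w$ is inactive, and for each fixed $v\in[0,1-w]$ the inner integral runs over $u\in[\abs{v-r},\,v+r]$, the lower endpoint being $\abs{v-r}=\max(v-r,\,r-v,\,0)$. The key simplification is that $\int_{\abs{v-r}}^{v+r} u\,\dif u=\tfrac12\big[(v+r)^2-(v-r)^2\big]=2vr$ identically, so the remaining $v$-integral collapses to $\int_0^{1-w}2v^2 r\,\dif v=\tfrac{2}{3}r(1-w)^3$, which reproduces $p^{(2)}_w(r)=\frac{3r^2}{w^3}$ on this branch.

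For $r\in[2w-1,1]$ the right edge is active, so the upper limit of the $u$-integral becomes $\min(v+r,w)$ with crossover at $v=w-r$, while nonemptiness additionally requires $v\geqslant\max(0,r-w)$. I would therefore split the $v$-range at $v=w-r$ (and shift its lower end to $r-w$ when $r>w$), perform the elementary $u$-integrals $\int u\,\dif u$ with upper limits $v+r$ and $w$ respectively, integrate the resulting polynomials in $v$, and recombine the pieces. I expect this bookkeeping---getting every sub-interval of $v$ together with its $u$-limits correct and verifying that the separate pieces glue into a single quartic in $r$---to be the main obstacle, since it is where sign and boundary errors typically creep in; everything else is forced by the two formulas above. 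The algebra should simplify to the first branch of \eqref{eq:pdf2}. Finally I would confirm the two stated limits: as $w\nearrow1$ the interval $[2w-1,1]$ collapses to a point and the surviving branch $3r^2/w^3\to3r^2$ on all of $[0,1]$; as $w\searrow\frac12$ the branch $3r^2/w^3$ disappears and the first branch tends to $12r^2(1-r)^2(r+2)=12r^2(r^3-3r+2)$, using the identity $(1-r)^2(r+2)=r^3-3r+2$.
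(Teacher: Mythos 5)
Your proposal is correct, and it takes a genuinely different---and cleaner---route than the paper's own proof. The paper evaluates the same double integral directly in the $(r_1,r_2)$ coordinates, which forces it to track the relative order of the breakpoints $\frac{1-w}2$, $1-w$, $2w-1$, $w$; since that order changes at $w=\frac23$, the paper splits into two regimes for $w$ and then into five (resp.\ four) $r$-intervals, computing up to three iterated integrals in each of nine sub-cases and checking that they all collapse to the two-branch formula \eqref{eq:pdf2}. Your rescaling $u=wr_1$, $v=(1-w)r_2$ removes this bookkeeping at the root: the constraints become the symmetric triangle inequalities $\abs{u-v}\leqslant r\leqslant u+v$ on the rectangle $[0,w]\times[0,1-w]$, the only structural dichotomy is whether the edge $u=w$ meets the region (which happens exactly when $r\geqslant 2w-1$), and this is precisely where the two branches of \eqref{eq:pdf2} come from---a conceptual explanation the paper's case enumeration never makes visible. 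Moreover the identity $\int_{\abs{v-r}}^{v+r}u\,\dif u=2vr$, valid regardless of the sign of $v-r$, eliminates the absolute-value splits that generate most of the paper's sub-cases, so your branch $r\in[0,2w-1]$ is a one-line computation, and your second branch needs only two pieces (split at $v=w-r$ when $2w-1\leqslant r\leqslant w$, single piece with lower limit $r-w$ when $r\geqslant w$) instead of the paper's many. The one thing you do not execute is the polynomial algebra on that second branch; this is routine, your integration limits are the correct ones, and both sub-cases do reduce to the same quartic $\frac{3(1-r)^2r\pa{r^2+2r-12w^2+12w-3}}{16w^3(1-w)^3}$ (one can spot-check, e.g., $w=0.6$ with $r=0.5$ and $r=0.8$), so there is no gap---but a complete write-up should display that computation and the gluing of the two sub-cases rather than assert them. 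Your verification of the two limiting statements, including the factorization $(1-r)^2(r+2)=r^3-3r+2$, is also correct and is essentially all the paper says about them.
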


\begin{proof}
Without loss of generality, we fix arbitrary $w\in\Pa{\frac12,1}$.
In the following discussion, we omit the integrand for simplicity.\\
(i) For $w\in\Pa{\frac23,1}$, $w>2w-1>1-w$. Then we see that \\
(i1) If $r\in[w,1]$, then
\begin{eqnarray*}
p^{(2)}_w(r) =\int^1_{1+\frac{r-1}w}\dif
r_1\int^1_{\frac{r-wr_1}{1-w}}\dif r_2=
\frac{3(1-r)^2r\Pa{r^2+2r-12w^2+12w-3}}{16w^3(1-w)^3}.
\end{eqnarray*}
(i2) If $r\in[2w-1,w]$, then
\begin{eqnarray*}
p^{(2)}_w(r) =\int^{\frac rw}_{1+\frac{r-1}w}\dif
r_1\int^1_{\frac{r-wr_1}{1-w}}\dif r_2+\int^1_{\frac rw}\dif
r_1\int^1_{\frac{wr_1-r}{1-w}}\dif r_2=
\frac{3(1-r)^2r\Pa{r^2+2r-12w^2+12w-3}}{16w^3(1-w)^3}.
\end{eqnarray*}
(i3) If $r\in[1-w,2w-1]$, then
\begin{eqnarray*}
p^{(2)}_w(r) = \int^{\frac rw}_{1+\frac{r-1}w}\dif
r_1\int^1_{\frac{r-wr_1}{1-w}}\dif r_2+\int^{\frac{1+r}w-1}_{\frac
rw}\dif r_1\int^1_{\frac{wr_1-r}{1-w}}\dif r_2 =\frac{3r^2}{w^3}.
\end{eqnarray*}
(i4) If $r\in\Br{\frac{1-w}2,1-w}$, then
\begin{eqnarray*}
p^{(2)}_w(r) = \int^{\frac{1-r}w-1}_0\dif
r_1\int^{\frac{r+wr_1}{1-w}}_{\frac{r-wr_1}{1-w}}\dif
r_2+\int^{\frac rw}_{\frac{1-r}w-1}\dif
r_1\int^1_{\frac{r-wr_1}{1-w}}\dif r_2+\int^{\frac{1+r}w-1}_{\frac
rw}\dif r_1\int^1_{\frac{wr_1-r}{1-w}}\dif r_2 = \frac{3r^2}{w^3}.
\end{eqnarray*}
(i5) If $r\in\Br{0,\frac{1-w}2}$, then
\begin{eqnarray*}
p^{(2)}_w(r) =\int^{\frac rw}_0\dif
r_1\int^{\frac{r+wr_1}{1-w}}_{\frac{r-wr_1}{1-w}}\dif r_2 +
\int^{\frac{1-r}w-1}_{\frac rw}\dif
r_1\int^{\frac{wr_1+r}{1-w}}_{\frac{wr_1-r}{1-w}}\dif r_2 +
\int^{\frac{1+r}w-1}_{\frac{1-r}w-1}\dif
r_1\int^1_{\frac{wr_1-r}{1-w}}\dif r_2= \frac{3r^2}{w^3}.
\end{eqnarray*}
(ii) For $w\in\left(\frac12,\frac23\right]$, $w>1-w\geqslant2w-1>\frac{1-w}2$. Then we see that \\
(ii-1) If $r\in[w,1]$, then
\begin{eqnarray*}
p^{(2)}_w(r) =\int^1_{1+\frac{r-1}w}\dif
r_1\int^1_{\frac{r-wr_1}{1-w}}\dif r_2=
\frac{3(1-r)^2r\Pa{r^2+2r-12w^2+12w-3}}{16w^3(1-w)^3}.
\end{eqnarray*}
(ii-2) If $r\in[1-w,w]$, then
\begin{eqnarray*}
p^{(2)}_w(r) =\int^{\frac rw}_{1+\frac{r-1}w}\dif
r_1\int^1_{\frac{r-wr_1}{1-w}}\dif r_2+\int^1_{\frac rw}\dif
r_1\int^1_{\frac{wr_1-r}{1-w}}\dif
r_2=\frac{3(1-r)^2r\Pa{r^2+2r-12w^2+12w-3}}{16w^3(1-w)^3}.
\end{eqnarray*}
(ii-3) If $r\in[2w-1,1-w]$, then
\begin{eqnarray*}
p^{(2)}_w(r) &=& \int^{\frac{1-r}w-1}_0\dif
r_1\int^{\frac{r+wr_1}{1-w}}_{\frac{r-wr_1}{1-w}}\dif
r_2+\int^{\frac rw}_{\frac{1-r}w-1}\dif
r_1\int^1_{\frac{r-wr_1}{1-w}}\dif r_2+\int^1_{\frac rw}\dif
r_1\int^1_{\frac{wr_1-r}{1-w}}\dif
r_2\\
&=&\frac{3(1-r)^2r\Pa{r^2+2r-12w^2+12w-3}}{16w^3(1-w)^3}.
\end{eqnarray*}
(ii-4) If $r\in\Br{0,2w-1}$, then
\begin{eqnarray*}
p^{(2)}_w(r) =\int^{\frac rw}_0\dif
r_1\int^{\frac{r+wr_1}{1-w}}_{\frac{r-wr_1}{1-w}}\dif r_2 +
\int^{\frac{1-r}w-1}_{\frac rw}\dif
r_1\int^{\frac{wr_1+r}{1-w}}_{\frac{wr_1-r}{1-w}}\dif r_2 +
\int^{\frac{1+r}w-1}_{\frac{1-r}w-1}\dif
r_1\int^1_{\frac{wr_1-r}{1-w}}\dif r_2=\frac{3r^2}{w^3}.
\end{eqnarray*}
From the above reasoning, we obtain the formula \eqref{eq:pdf2} for
$p^{(2)}_w(r)$ whenever $w\in\Pa{\frac12,1}$. This completes the
proof.
\end{proof}

To illustrate our methods, we choose the equiprobable mixture as a
toy model. Essentially, our methods applies to any probabilistic
mixture of qubits. Let $2\leqslant n\in\natural$. Denote by
\begin{eqnarray*}
\rho(\bss_n) = \frac1n\sum^n_{j=1}\rho(\bsr_j)\Longleftrightarrow
\bss_n = \frac1n\sum^n_{j=1}\bsr_j.
\end{eqnarray*}
And denote by $p^{(n)}(s_n)$ the distribution density of
$s_n=\abs{\bss_n}\in[0,1]$. Since
\begin{eqnarray*}
\rho(\bss_n) = \frac{n-1}n
\rho(\bss_{n-1})+\frac1n\rho(\bsr_n)\Longleftrightarrow
\bss_n=\frac{n-1}n\bss_{n-1}+\frac1n\bsr_n,
\end{eqnarray*}
it follows that
\begin{eqnarray}\label{eq:anyN}
p^{(n)}(s_n) = \iint_{R_{\frac{n-1}n}(s_n)}
f_{\frac{n-1}n}(s_n|s_{n-1},r_n)p^{(n-1)}(s_{n-1})p^{(1)}(r_n)\dif
s_{n-1}\dif r_n,
\end{eqnarray}
where $f_{\frac{n-1}n}(s_n|s_{n-1},r_n) =\frac{n^2}{2(n-1)}
\frac{s_n}{s_{n-1}r_n},\quad p^{(1)}(r_n)=3r^2_n$, and
\begin{eqnarray*}
R_{\frac{n-1}n}(s_n) = \Set{(s_{n-1},r_n)\in[0,1]^2:
\Abs{\frac{n-1}n s_{n-1}-\frac1n r_n}\leqslant s_n\leqslant
\frac{n-1}n s_{n-1}+\frac1n r_n}.
\end{eqnarray*}
When $n=2$, we give the detailed proof about the following elegant
result, the detailed proof (of Theorem~\ref{th:345}) for the larger
number $n>3$ can be found in Appendix~\ref{app:a}.

\begin{thrm}\label{th:2}
The probability density function of the length $r$ of the Bloch
vector $\bsr$ of the equiprobable mixture:
$\rho(\bsr)=\frac12\sum^2_{j=1}\rho_j$, where
$\rho_j\in\cE_{2,2}(j=1,2)$, is given by $p^{(2)}(r) =
12r^2(r^3-3r+2), r\in[0,1]$.
\end{thrm}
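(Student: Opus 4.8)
The plan is to evaluate the master recursion \eqref{eq:anyN} at $n=2$, where the mixing weight is $w=\frac{n-1}{n}=\frac12$. Setting $w=\frac12$ in Corollary~\ref{cor:rr1r2} gives the conditional density $f_{1/2}(r\,|\,r_1,r_2)=\frac{2r}{r_1r_2}$, while each factor $\rho_j\in\cE_{2,2}$ contributes the single-qubit Bloch-length density $p^{(1)}(r_i)=3r_i^2$ from \eqref{eq:Bloch-lenghth}. Multiplying these three factors collapses the integrand to the clean monomial $18\,r\,r_1r_2$, so that
\begin{eqnarray*}
p^{(2)}(r)=18\,r\iint_{R_{1/2}(r)}r_1r_2\,\dif r_1\dif r_2,\qquad R_{1/2}(r)=\Set{(r_1,r_2)\in[0,1]^2:\abs{r_1-r_2}\leqslant 2r\leqslant r_1+r_2}.
\end{eqnarray*}
Everything then reduces to integrating $r_1r_2$ over the planar region $R_{1/2}(r)$, which is symmetric under $r_1\leftrightarrow r_2$.

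The key step is to describe $R_{1/2}(r)$ as $r$ sweeps $[0,1]$ and carry out the resulting elementary integrals. I would split into two regimes. When $r\geqslant\frac12$ one has $2r\geqslant1\geqslant\abs{r_1-r_2}$, so the band constraint is vacuous and $R_{1/2}(r)$ is simply the corner triangle $\Set{r_1+r_2\geqslant2r}\cap[0,1]^2$ with vertices $(2r-1,1),(1,2r-1),(1,1)$; here $r_2$ runs over $[\,2r-r_1,1\,]$ for $r_1\in[\,2r-1,1\,]$, and a direct computation of $\iint r_1r_2$ yields $\frac{2r}{3}\Pa{r^3-3r+2}$. When $r\leqslant\frac12$ both constraints are active and $R_{1/2}(r)$ is the polygon trapped between the line $r_1+r_2=2r$ and the diagonal band $\abs{r_1-r_2}\leqslant2r$; exploiting the $r_1\leftrightarrow r_2$ symmetry, I would integrate over the half $r_1\geqslant r_2$ and double. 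In each regime the integral is a routine polynomial in $r$, and I expect both to collapse to the same value $\frac{2r}{3}\Pa{r^3-3r+2}$, whence $p^{(2)}(r)=12r^2\Pa{r^3-3r+2}$ uniformly on $[0,1]$.

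The main obstacle will be the geometric bookkeeping of the $r$-dependent region, together with verifying that the two regimes produce the \emph{same} polynomial, so that the final answer is a single formula with no case split. This collapse is precisely what the preceding theorem predicts: putting $w=\frac12$ in \eqref{eq:pdf2} sends the breakpoint $2w-1$ to $0$ (eliminating the lower branch) and makes the coefficient $-12w^2+12w-3$ vanish, leaving $\frac{3(1-r)^2 r\cdot r(r+2)}{16(1/8)(1/8)}=12r^2(r+2)(1-r)^2=12r^2(r^3-3r+2)$ after expanding $(r+2)(1-r)^2$. I would use this, together with the stated boundary limit $\lim_{w\searrow\frac12}p^{(2)}_w(r)=12r^2(r^3-3r+2)$, as an independent cross-check of the direct computation, and finally confirm the normalization $\int_0^1 12r^2(r^3-3r+2)\,\dif r=1$ to ensure no prefactor has been dropped.
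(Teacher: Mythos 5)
Your proposal is correct, and at bottom it is the paper's own strategy: insert the conditional density $f_{1/2}(r\,|\,r_1,r_2)=\frac{2r}{r_1r_2}$ from Corollary~\ref{cor:rr1r2} and the marginals $p^{(1)}(r_i)=3r_i^2$ of \eqref{eq:Bloch-lenghth} into the master integral and evaluate $18r\iint_{R(r)}r_1r_2\,\dif r_1\dif r_2$ by elementary case analysis. Where you genuinely depart from the paper is the decomposition of $R(r)$. The paper slices the region directly in $r_1$ and therefore needs \emph{three} cases, $r\in\Br{0,\frac14}$, $\Br{\frac14,\frac12}$, $\Br{\frac12,1}$, the first two each requiring three slabs, because the ordering of the breakpoints $2r$ and $1-2r$ on the $r_1$-axis flips at $r=\frac14$. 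Your symmetrization ($r_1\leftrightarrow r_2$; integrate over $r_1\geqslant r_2$ and double) removes that flip entirely: the half-region is exactly $\Set{r\leqslant r_1\leqslant 1,\ |r_1-2r|\leqslant r_2\leqslant r_1}$, the inner integral is $\int_{|r_1-2r|}^{r_1}r_2\,\dif r_2=2r(r_1-r)$, and
\begin{eqnarray*}
p^{(2)}(r)=36r\int_r^1 r_1\cdot 2r(r_1-r)\,\dif r_1=12r^2(r^3-3r+2).
\end{eqnarray*}
In fact this single computation is valid on \emph{all} of $[0,1]$, not just on $\Br{0,\frac12}$: for $r\geqslant\frac12$ one has $r_1\leqslant1\leqslant2r$, so the lower limit of the inner integral is again $2r-r_1=|r_1-2r|$ and the identical polynomial results; you could therefore drop your separate corner-triangle regime and prove the theorem with no case split whatsoever. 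So your route buys a strictly shorter argument than the paper's (one regime versus three), at the price of the small observation that the kink of $|r_1-2r|$ at $r_1=2r$ is harmless because only $(r_1-2r)^2$ enters. Your two cross-checks are also sound and consistent with what the paper records: at $w=\frac12$ the coefficient $-12w^2+12w-3$ in \eqref{eq:pdf2} vanishes and the breakpoint $2w-1$ degenerates to $0$, recovering $12r^2(r+2)(1-r)^2=12r^2(r^3-3r+2)$, and indeed $\int_0^1 12r^2(r^3-3r+2)\,\dif r=1$.
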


\begin{proof}
(1) If $r\in\Br{\frac12,1}$, then $R(r)=\Set{(r_1,r_2)\in[0,1]^2:
2r-1\leqslant r_1\leqslant 1, 2r-r_1\leqslant r_2\leqslant 1}$. Thus
$$
p^{(2)}(r) =18r\int^1_{2r-1}\dif r_1\Pa{r_1\int^1_{2r-r_1}r_2\dif
r_2}=12r^2(r^3-3r+2).
$$
(2) If $r\in\Br{\frac14,\frac12}$, then
\begin{eqnarray*}
R(r)&=&\Set{(r_1,r_2):0\leqslant r_1\leqslant 1-2r, 2r-r_1\leqslant
r_2\leqslant 2r+r_1}\\
&&\cup \Set{(r_1,r_2):1-2r\leqslant r_1\leqslant 2r, 2r-r_1\leqslant
r_2\leqslant 1}\\
&&\cup\Set{(r_1,r_2):2r\leqslant r_1\leqslant1, r_1-2r\leqslant
r_2\leqslant1}.
\end{eqnarray*}
Thus
\begin{eqnarray*}
p^{(2)}(r) &=& 18r\int^{1-2r}_0\dif
r_1\Pa{r_1\int^{2r+r_1}_{2r-r_1}r_2\dif r_2}+18r\int^{2r}_{1-2r}\dif
r_1\Pa{r_1\int^1_{2r-r_1}r_2\dif r_2}\\
&&+18r\int^1_{2r}\dif r_1\Pa{r_1\int^1_{r_1-2r}r_2\dif r_2}
=12r^2(r^3-3r+2).
\end{eqnarray*}
(3) If $r\in\Br{0,\frac14}$, then
\begin{eqnarray*}
R(r)&=&\Set{(r_1,r_2):0\leqslant r_1\leqslant 2r, 2r-r_1\leqslant
r_2\leqslant 2r+r_1}\\
&&\cup \Set{(r_1,r_2):2r\leqslant r_1\leqslant 1-2r, r_1-2r\leqslant
r_2\leqslant r_1+2r}\\
&&\cup\Set{(r_1,r_2):1-2r\leqslant r_1\leqslant1, r_1-2r\leqslant
r_2\leqslant1}.
\end{eqnarray*}
Thus
\begin{eqnarray*}
p^{(2)}(r) &=& 18r\int^{2r}_0\dif
r_1\Pa{r_1\int^{2r+r_1}_{2r-r_1}r_2\dif r_2}+18r\int^{1-2r}_{2r}\dif
r_1\Pa{r_1\int^{r_1+2r}_{r_1-2r}r_2\dif r_2}\\
&&+18r\int^1_{1-2r}\dif r_1\Pa{r_1\int^1_{r_1-2r}r_2\dif r_2}
=12r^2(r^3-3r+2).
\end{eqnarray*}
In summary, we conclude that $p^{(2)}(r) = 12r^2(r^3-3r+2),
r\in[0,1]$. We are done.
\end{proof}
In fact, for any $n$, we can use \eqref{eq:anyN} to derive the
density function $p^{(n)}(r)$ of $r=\abs{\bsr}$ for the equiprobable
mixture: $\rho(\bsr)=\frac1n\sum^n_{j=1}\rho_j$, where
$\rho_j\in\cE_{2,2}$ for $j=1,\ldots,n$. Indeed, for $n=3,4,5$, we
have the following result:
\begin{thrm}\label{th:345}
The probability density function $p^{(n)}(r),r\in[0,1]$, of the
length $r$ of the Bloch vector $\bsr$ of the equiprobable mixture:
$\rho(\bsr)=\frac1n\sum^n_{j=1}\rho_j$, where
$\rho_j\in\cE_{2,2}(j=1,\ldots,n)$, can be identified explicitly.
Specifically,
\begin{enumerate}[(i)]
\item for $n=3$, the density function is given by
\begin{eqnarray*}
p^{(3)}(r) =\begin{cases}f^{(3)}_R(r),& r\in\Br{\frac13,1},\\
f^{(3)}_L(r),&r\in\Br{0,\frac13}.
\end{cases}
\end{eqnarray*}
where $f^{(3)}_i(r)(i=L,R)$ are given in the following
\begin{eqnarray*}
f^{(3)}_R(r)&=&\frac{6561}{2240} (1-r)^4 r\Pa{9 r^3+36 r^2+27 r-2},\\
f^{(3)}_L(r)&=&-\frac{243}{1120} r^2 \Pa{243 r^6-1701 r^4+945
r^2-175}.
\end{eqnarray*}
\item for $n=4$, the density function is given by
\begin{eqnarray*}
p^{(4)}(r) = \begin{cases}f^{(4)}_R(r),& r\in\Br{\frac12,1},\\
f^{(4)}_L(r),&r\in\Br{0,\frac12}.
\end{cases}
\end{eqnarray*}
where $f^{(4)}_i(r)(i=L,R)$ are given in the following
\begin{eqnarray*}
f^{(4)}_R(r) &=& \frac{512}{175}(1 - r)^6 r\Pa{16 r^4 + 96 r^3 + 156 r^2 + 56 r  -9},\\
f^{(4)}_L(r) &=& - \frac{128}{175}r^2 \Pa{192 r^9 - 2160 r^7 + 960
r^6 + 3780 r^5 - 3528 r^4 + 720 r^2 -85}.
\end{eqnarray*}
\item for $n=5$, the density function is given by
\begin{eqnarray*}
p^{(5)}(r) = \begin{cases}f^{(5)}_R(r),& r\in\Br{\frac35,1},\\
f^{(5)}_M(r),&r\in\Br{\frac15,\frac35},\\
f^{(5)}_L(r),&r\in\Br{0,\frac15}.
\end{cases}
\end{eqnarray*}
where $f^{(5)}_i(r)(i=L,M,R)$ are given in the following
{\scriptsize
\begin{eqnarray*}
f^{(5)}_R(r) &=& \frac{1953125}{16400384} (1-r)^8 r \Pa{625 r^5 +
5000 r^4 + 12750 r^3 + 11300 r^2 + 1825 r - 612},\\
f^{(5)}_M(r) &=& -\frac5{4100096} r \left(244140625 r^{13}
-3808593750 r^{11} + 2792968750 r^{10}+ 12568359375 r^9 -
19103906250 r^8 - 670312500 r^7 \right.\\
&&~~~~~~~~~~~~~~~~~~\left.+ 18098437500 r^6 - 12978590625 r^5+ 2511437500 r^4 + 360038250 r^3 + 44625750 r^2 - 75822175 r + 67086\right),\\
f^{(5)}_L(r) &=& \frac{125}{8200192} r^2 \Pa{29296875 r^{12} -
457031250 r^{10} + 1508203125 r^8- 938437500 r^6 + 316441125 r^4 -
63050130 r^2 + 5855707}.
\end{eqnarray*}}
\end{enumerate}
\end{thrm}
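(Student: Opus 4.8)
The plan is to establish all three cases by the same mechanism: recursive application of \eqref{eq:anyN}, taking as base data the density $p^{(2)}(r)=12r^2(r^3-3r+2)$ established above together with $p^{(1)}(r)=3r^2$ from \eqref{eq:Bloch-lenghth}. The first observation I would record is that, after substitution, the integrand in \eqref{eq:anyN} is a genuine polynomial in $(s_{n-1},r_n)$. Writing the kernel as $f_{\frac{n-1}n}(s_n|s_{n-1},r_n)=\frac{n^2}{2(n-1)}\frac{s_n}{s_{n-1}r_n}$, the product $f_{\frac{n-1}n}\,p^{(n-1)}(s_{n-1})\,p^{(1)}(r_n)$ equals $\frac{3n^2}{2(n-1)}\,s_n\cdot\frac{p^{(n-1)}(s_{n-1})}{s_{n-1}}\cdot r_n$, and since each piece of $p^{(m)}$ carries at least one factor of $r$, the quotient $p^{(n-1)}(s_{n-1})/s_{n-1}$ is again polynomial. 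Hence every required quantity is an elementary double integral of a polynomial over a polygon, and the only real work is identifying that polygon.

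Next I would analyze the region $R_{\frac{n-1}n}(s_n)$. With $w=\frac{n-1}n$ (so $1-w=\frac1n$), the constraints $\Abs{w s_{n-1}-(1-w)r_n}\leqslant s_n\leqslant w s_{n-1}+(1-w)r_n$ cut the unit square by the lines $w s_{n-1}\pm(1-w)r_n=s_n$ and $(1-w)r_n-w s_{n-1}=s_n$, exactly as in the dissection already carried out for the two-qubit mixtures. As $s_n$ sweeps $[0,1]$, the combinatorial type of this polygon changes at the thresholds produced by $2w-1=\frac{n-2}n$ and $1-w=\frac1n$: for $n=3$ these collapse to the single value $\frac13$, for $n=5$ they give $\frac15$ and $\frac35$, while for $n=4$ only $\frac12=2w-1$ survives as a genuine breakpoint (the candidate $\frac14$ leaving the analytic form unchanged). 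These are precisely the breakpoints in the statement. Superimposed on them are the internal breakpoints of the \emph{input} $p^{(n-1)}$ (none for $p^{(2)}$, one at $\frac13$ inside $p^{(3)}$, one at $\frac12$ inside $p^{(4)}$), which further subdivide $R_{\frac{n-1}n}(s_n)$. Within each regime I would decompose the region into rectangles and triangles, integrate out $r_n$ first over the band between the two active lines and then $s_{n-1}$, mirroring the explicit limits displayed in the $n=2$ proof.

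Executing this feeds $p^{(2)}$ into \eqref{eq:anyN} to produce the pair $f^{(3)}_R,f^{(3)}_L$; the resulting $p^{(3)}$ fed back in gives $p^{(4)}$, and once more $p^{(5)}$, the degrees climbing to $13$--$14$ as the number of convolutions grows. At each level I would validate the output by three independent checks: continuity of $p^{(n)}$ across its breakpoints, the normalization $\int_0^1 p^{(n)}(r)\,\dif r=1$, and consistency with the already-established $n=2$ limits $\lim_{w\nearrow1}p^{(2)}_w(r)=3r^2$ and $\lim_{w\searrow\frac12}p^{(2)}_w(r)=12r^2(r^3-3r+2)$.

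I expect the genuine obstacle to be organizational rather than conceptual: the number of polygonal sub-cases multiplies at each step because the region thresholds $\frac{n-2}n,\frac1n$ interact with the inherited breakpoints of $p^{(n-1)}$, and the polynomial algebra becomes heavy (this is exactly why the computation is pushed no further than $n=5$, i.e.\ matrix size four). To keep the bookkeeping error-free I would, in parallel, exploit the isotropy of the Bloch vectors: each $\bsr_j$ is rotationally invariant, so $\bss_n=\frac1n\sum_{j}\bsr_j$ is a scaled sum of independent isotropic vectors in $\real^3$, whose radial density is multiplicative under the Fourier--Bessel transform $\widehat{g}(k)=\int_0^\infty g(r)\frac{\sin(kr)}{kr}\,\dif r$. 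Computing $\big(\widehat{p^{(1)}}(k/n)\big)^{n}$ and inverting reproduces the same piecewise polynomials, furnishing a convenient safeguard against sign and limit errors in the direct region integrations.
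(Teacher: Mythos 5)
Your proposal is correct and takes essentially the same route as the paper's own proof in Appendix~\ref{app:a}: the paper likewise iterates \eqref{eq:anyN} starting from $p^{(1)}(r)=3r^2$ and $p^{(2)}(r)=12r^2(r^3-3r+2)$, splits $[0,1]$ at exactly the thresholds you identify (including the inherited breakpoints of $p^{(n-1)}$ at $\frac13$ and $\frac12$), and evaluates the resulting polynomial double integrals region by region, with a variant pairing $\rho=\frac12\bigl(\rho(\bsr_{12})+\rho(\bsr_{34})\bigr)$ recorded as a second proof for $n=4$. Your Fourier--Bessel cross-check is an additional safeguard not used in the paper, but the core argument coincides.
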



The density curves of the mixtures for $n=2,3,4,5$, mentioned in
Theorem~\ref{th:2} and Theorem~\ref{th:345}, are plotted in the same
coordinate system, see Figure~\ref{fig:p2345}.
\begin{figure}[ht]\centering
{\begin{minipage}[b]{0.7\linewidth}
\includegraphics[width=1\textwidth]{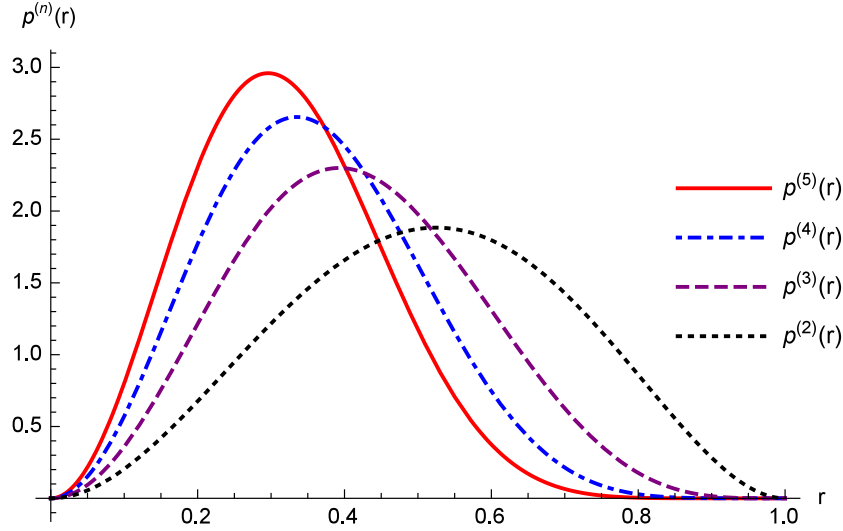}
\end{minipage}}
\caption{The density curves for the number $n=2,3,4,5$.}
\label{fig:p2345}
\end{figure}
We see from Figure~\ref{fig:p2345} that the points at which the peak
values are attained are moved closer to $y$-axis from right to left,
at the same time, the peak values become larger and larger. This
implies that the mixture of qubits gradually approaches the
completely mixed state when the component number $n$ increases in
the mixture.

Next, we consider to derive the density of a diagonal entry of the
mixture of two qubit states. In fact,
\begin{eqnarray*}
    &&\dh^T_{\cO_\mu\times\cO_\nu} = \dh^T_{\cO_\mu}*\dh^T_{\cO_\nu}\\
    &&= \Pa{\delta_{\mu\cdot\bsh} -
        \delta_{-\mu\cdot\bsh}}*\Pa{\delta_{\nu\cdot\bsh} -
        \delta_{-\nu\cdot\bsh}}* H_{-\bsh}* H_{-\bsh}\\
    &&=\Pa{\delta_{(\mu+\nu)\cdot\bsh} + \delta_{-(\mu+\nu)\cdot\bsh} -
        \delta_{(\mu-\nu)\cdot\bsh} - \delta_{-(\mu-\nu)\cdot\bsh}}*
    H_{-\bsh}* H_{-\bsh}.
\end{eqnarray*}
Based on this, we see that, $t\cdot\bsh\in\mathrm{i}\liet^*_{>0}$,
i.e., for $t>0$,
\begin{eqnarray*}
    \Inner{\dh^T_{\cO_\mu\times\cO_\nu}}{f} = \int^{\mu+\nu}_0 f(t\cdot\bsh)(\mu+\nu-t)\dif t -
    \int^{\abs{\mu-\nu}}_0f(t\cdot\bsh)(\abs{\mu-\nu}-t)\dif t.
\end{eqnarray*}
The density of $t\cdot\bsh\in\mathrm{i}\liet^*_{>0}$ with respect to
the measure $\dh^T_{\cO_\mu\times\cO_\nu}$ is as follows.
\begin{eqnarray}\label{eq:diag2}
\begin{cases}\mu+\nu - \abs{\mu-\nu},&
t\in\Br{0,\abs{\mu-\nu}};\\ \mu+\nu - t, &
t\in\Br{\abs{\mu-\nu},\mu+\nu}.\end{cases}
\end{eqnarray}

\begin{prop}
The probability density function of a diagonal entry of the mixture
$\rho^w_s=w\rho_1+(1-w)\rho_2 (w\in(0,1))$ of two random density
matrices, chosen uniformly from respective unitary orbits $\cO_a$
and $\cO_b$ with $a,b$ are fixed in $\Pa{0,\frac12}$ is given by
\begin{eqnarray}\label{eq:diagw}
q_w(x|a,b) = \frac1{4w(1-w)\Pa{\frac12-a}\Pa{\frac12-b}}
\begin{cases}x - t_0(w),& x\in\Br{t_0(w),t_1(w)};\\
t_1(w)-t_0(w), & x\in\Br{t_1(w),1-t_1(w)};\\
-x+(1-t_0(w)), & x\in\Br{1-t_1(w),1-t_0(w)}.
\end{cases}
\end{eqnarray}
Here the notations $t_0(w)$ and $t_1(w)$ can be found in
\eqref{eq:t0-t1-of-w} in Proposition~\ref{prop:lmn}. In particular,
for $w=\frac12$, we get that
\begin{eqnarray}\label{eq:diag0.5}
q(x|a,b) = \frac1{\Pa{\frac12-a}\Pa{\frac12-b}}
\begin{cases}x - t_0,& x\in\Br{t_0,t_1};\\
t_1-t_0, & x\in\Br{t_1,1-t_1};\\
-x+(1-t_0), & x\in\Br{1-t_1,1-t_0}.
\end{cases}
\end{eqnarray}
Here $t_0=\frac{a+b}2$ and $t_1=\frac{1-\abs{a-b}}2$ for given $a$
and $b$.
\end{prop}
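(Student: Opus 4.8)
The plan is to obtain the claim by transporting the Abelian Duistermaat--Heckman density already recorded in \eqref{eq:diag2} through the same change of variables used to pass from \eqref{eq:mix-2-qubit} to Proposition~\ref{prop:lmn}. The density in \eqref{eq:diag2} is written only for $t\cdot\bsh\in\mathrm{i}\liet^*_{>0}$, i.e. for $t>0$, but the measure $\dh^T_{\cO_\mu\times\cO_\nu}$ comes from the Weyl-symmetric expression \eqref{eq:abelian-dh-measure-over-orbit} and is therefore invariant under $t\mapsto-t$. First I would extend \eqref{eq:diag2} to an even density on $t\in\Br{-(\mu+\nu),\mu+\nu}$; it is a symmetric trapezoid, equal to the constant $\mu+\nu-\abs{\mu-\nu}=2\min(\mu,\nu)$ on $\Br{-\abs{\mu-\nu},\abs{\mu-\nu}}$ and decaying linearly to $0$ at $t=\pm(\mu+\nu)$.

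Second I would normalize. The total mass of $\dh^T_{\cO_\mu\times\cO_\nu}$ equals the Liouville volume $\vol(M)=\vol(\cO_\mu)\vol(\cO_\nu)=4\mu\nu$, and a one-line integration of the trapezoid confirms this, so dividing by $4\mu\nu$ yields a probability density and produces the prefactor $1/(4\mu\nu)$. It is worth stressing the contrast with the eigenvalue computation of Proposition~\ref{prop:lmn}: the \emph{non-Abelian} measure lives on the positive Weyl chamber, so reading off the law of a single eigenvalue required halving the mass, whereas the \emph{Abelian} measure already sits on the full $\liet^*$, so the signed diagonal component $t$ (hence the diagonal entry $x=\tfrac12+t$) is read off directly with no extra folding factor.

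Third comes the substitution. Exactly as in Proposition~\ref{prop:lmn} I would set $\mu=w\Pa{\tfrac12-a}$ and $\nu=(1-w)\Pa{\tfrac12-b}$, and identify the traceless part of the diagonal matrix $\diag(x,1-x)$ of $\rho^w_s$ with $\Pa{x-\tfrac12}\bsh$, i.e. $t=x-\tfrac12$ (with $\dif t=\dif x$, so no Jacobian). The breakpoints then translate to $\mu+\nu=\tfrac12-t_0(w)$ and $\abs{\mu-\nu}=\tfrac12-t_1(w)$ for $t_0,t_1$ as in \eqref{eq:t0-t1-of-w}. Feeding these into the even trapezoid gives the three regimes of \eqref{eq:diagw}: the linear tail $\mu+\nu-t$ becomes $-x+(1-t_0(w))$ on $\Br{1-t_1(w),1-t_0(w)}$; the flat top $2\min(\mu,\nu)=t_1(w)-t_0(w)$ covers $\Br{t_1(w),1-t_1(w)}$; and its reflection across $x=\tfrac12$ supplies the increasing branch $x-t_0(w)$ on $\Br{t_0(w),t_1(w)}$. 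Specializing $w=\tfrac12$ collapses $t_0\to\tfrac{a+b}2$ and $t_1\to\tfrac{1-\abs{a-b}}2$, which is \eqref{eq:diag0.5}.

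The hard part will be the bookkeeping across $x=\tfrac12$ rather than any genuine analysis: since \eqref{eq:diag2} only records $t>0$, one must correctly glue the two linear tails and the doubled flat top into a single trapezoid in $x$ and verify that the reflection sends the decaying branch to the increasing branch $x-t_0(w)$ without sign slips. A useful consistency check at the end is that the resulting trapezoid integrates to $\frac{(t_1-t_0)(1-t_0-t_1)}{4\mu\nu}=\frac{2\min(\mu,\nu)\cdot 2\max(\mu,\nu)}{4\mu\nu}=1$, confirming both the breakpoints and the normalization simultaneously.
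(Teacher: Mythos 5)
Your proposal is correct and follows essentially the same route as the paper's proof: both transport the Abelian density \eqref{eq:diag2} through the substitution $\mu=w\Pa{\tfrac12-a}$, $\nu=(1-w)\Pa{\tfrac12-b}$, $t=\pm\Pa{x-\tfrac12}$, extend across $x=\tfrac12$ by symmetry (the paper's ``relaxing the constraint'' step), and then normalize. Your only addition is making the terse steps explicit — identifying the normalization constant as the Liouville volume $4\mu\nu$ and checking $\frac{(t_1-t_0)(1-t_0-t_1)}{4\mu\nu}=1$ — which is a welcome sanity check but not a different argument.
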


\begin{proof}

Now for the mixture $\rho^w_s = w\rho_1+(1-w)\rho_2(w\in(0,1))$
where $\rho_1\in\cO_a$ and $\rho_2\in\cO_b$, let $\rho^{w,D}_s
=\diag(x,1-x)$, where $x\in\Br{0,\frac12}$. Then for
$a,b\in\Pa{0,\frac12}$, assume that
$\mu=w\Pa{\frac12-a},\nu=(1-w)\Pa{\frac12-b}, t=\frac12-x$. By
substituting these parameters into the above expression
\eqref{eq:diag2} and relaxing the constraint $x\in\Br{0,\frac12}$ to
$x\in\Br{0,1}$ at the same time keeping $a,b$ fixed in
$\Pa{0,\frac12}$, then after normalizing it, we get that the desired
identity \eqref{eq:diagw}. This result can also be derived from the
method used in \cite{Zhang2017c}. Now for the weight $w=\frac12$, by
substituting these parameters into the above expression, we get
Eq.~\eqref{eq:diag0.5}. It can also be written as
\begin{eqnarray*}
q(x|a,b) = \frac1{2\Pa{\frac12-a}\Pa{\frac12-b}}
(\abs{x-t_0}+\abs{x-(1-t_0)} - \abs{x-t_1}-\abs{x-(1-t_1)}).
\end{eqnarray*}
We have done it.
\end{proof}
\subsection{The mixture of three qubit states}
Let $\lambda\in\mathrm{i}\liet^*_{>0}$, and $\mu,\nu\in
\mathrm{i}\liet^*_{\geqslant0}$. For the mixture of three qubit
states, we have that the Abelian D-H measure over the manifold
$\cO_\lambda\times\cO_\mu\times\cO_\nu$ is given by the following
convolution: $\dh^T_{\cO_\lambda\times\cO_\mu\times\cO_\nu} =
\dh^T_{\cO_\lambda}*\dh^T_{\cO_\mu}*\dh^T_{\cO_\nu}$. Thus the
non-Abelian D-H measure is the following:
$\dh^K_{\cO_\lambda\times\cO_\mu\times\cO_\nu} = \sum_{w\in
W}(-1)^{l(w)} \delta_{w\lambda}*\dh^T_{\cO_\mu}*\dh^T_{\cO_\nu}$. It
follows that
\begin{eqnarray*}
&&\dh^K_{\cO_\lambda\times\cO_\mu\times\cO_\nu} = \sum_{w\in
W}(-1)^{l(w)} \delta_{w\lambda}*\dh^T_{\cO_\mu}*\dh^T_{\cO_\nu}\\
&&=\left. \Pa{\sum_{w\in S_2}(-1)^{l(w)}
\delta_{w\lambda}}*\Pa{\sum_{w\in
S_2}(-1)^{l(w)}\delta_{w\mu}}*\Pa{\sum_{w\in
S_2}(-1)^{l(w)}\delta_{w\nu}}* H_{-\alpha}* H_{-\alpha}\right|_{\mathrm{i}\liet^*_{\geqslant0}}\\
&&= \left.(\delta_{\lambda\cdot \bsh} - \delta_{-\lambda\cdot
\bsh})*(\delta_{\mu\cdot \bsh} - \delta_{-\mu\cdot
\bsh})*(\delta_{\nu\cdot
\bsh} - \delta_{-\nu\cdot \bsh})* H_{-\bsh}* H_{-\bsh}\right|_{\mathrm{i}\liet^*_{\geqslant0}}\\
&&= \left.\Pa{\delta_{(\lambda+\mu+\nu)\cdot \bsh} +
\delta_{-(\lambda+\mu-\nu)\cdot \bsh}
+\delta_{(-\lambda+\mu-\nu)\cdot
\bsh}+\delta_{(\lambda-\mu-\nu)\cdot \bsh} }* H_{-\bsh}*
H_{-\bsh}\right|_{\mathrm{i}\liet^*_{\geqslant0}}\\
&&~~~-\left.\Pa{\delta_{-(\lambda+\mu+\nu)\cdot \bsh}  +
\delta_{(\lambda+\mu-\nu)\cdot \bsh} +
\delta_{(\lambda-\mu+\nu)\cdot \bsh}+\delta_{(-\lambda+\mu+\nu)\cdot
\bsh}}* H_{-\bsh}*
H_{-\bsh}\right|_{\mathrm{i}\liet^*_{\geqslant0}}.
\end{eqnarray*}
Without loss of generality, we assume that
$\lambda\geqslant\mu\geqslant\nu\geqslant0$ by the permutation
symmetry of the triple $(\lambda,\mu,\nu)$. Denote
$a_3:=\lambda+\mu+\nu,\quad a_2:=\lambda+\mu-\nu,\quad
a_1:=\lambda-\mu+\nu$. Under the above assumption, i.e.,
$\lambda\geqslant\mu\geqslant\nu\geqslant0$, we cannot identify the
sign of $-a_0:=-\lambda+\mu+\nu$ or $a_0:=\lambda-\mu-\nu$. In fact,
$a_3\geqslant a_2\geqslant a_1\geqslant\abs{a_0}\geqslant0$.
Therefore, $\dh^K_{\cO_\lambda\times\cO_\mu\times\cO_\nu} =
\Pa{\delta_{a_3\cdot \bsh}+\sign(a_0)\delta_{\abs{a_0}\cdot
\bsh}-\delta_{a_2\cdot \bsh} - \delta_{a_1\cdot \bsh}}* H_{-\bsh}*
H_{-\bsh}$, where $\sign(a_0)$ is the sign function that extracts
the sign of a real number. Furthermore, we have
\begin{eqnarray*}
\Inner{\dh^K_{\cO_\lambda\times\cO_\mu\times\cO_\nu}}{f}
&=&\int^{a_3}_0f[(a_3-t)\cdot \bsh]t\dif t
+\sign(a_0)\int^{\abs{a_0}}_0f[(\abs{a_0}-t)\cdot
\bsh]t\dif t\\
&&-\int^{a_2}_0f[(a_2-t)\cdot \bsh]t\dif t -\int^{a_1}_0
f[(a_1-t)\cdot \bsh]t\dif t.
\end{eqnarray*}
Thus we have
\begin{eqnarray*}
\Inner{\dh^K_{\cO_\lambda\times\cO_\mu\times\cO_\nu}}{f}
&=&\int^{a_3}_0f(t\cdot \bsh)(a_3-t)\dif t
+\int^{\abs{a_0}}_0f(t\cdot
\bsh)\sign(a_0)(\abs{a_0}-t)\dif t\\
&&-\int^{a_2}_0f(t\cdot \bsh)(a_2-t)\dif t -\int^{a_1}_0 f(t\cdot
\bsh)(a_1-t)\dif t.
\end{eqnarray*}
According to the definition of non-Abelian D-H measure, we see that
\begin{eqnarray*}
\dh^K_M = \frac1{p_K}(\tau_K)_*(\Phi_K)_*\mu_M,
\end{eqnarray*}
where $K=\rS\rU(2)$ and $M=\cO_\lambda\otimes\cO_\mu\otimes\cO_\nu$.
Multipling the non-Abelian Duistermaat-Heckman measure by the
symplectic volume polynomial $p_K(\zeta\cdot\bsh)=2\zeta$, thus we
see that
\begin{eqnarray*}
\bP_{\mathrm{eig}}:=p_K\frac{\dh^K_M}{\vol(M)} =
(\tau_K)_*(\Phi_K)_*\Pa{\frac{\mu_M}{\vol(M)}},
\end{eqnarray*}
where $\vol(M) =
\vol(\cO_\lambda)\vol(\cO_\mu)\vol(\cO_\nu)=8\lambda\mu\nu$. Next,
\begin{eqnarray*}
&&\Inner{\bP_{\mathrm{eig}}}{f} =
\Inner{(\tau_K)_*(\Phi_K)_*\Pa{\frac{\mu_M}{\vol(M)}}}{f} =
\Inner{p_K\frac{\dh^K_M}{\vol(M)}}{f}\\
&&=\Inner{\frac{p_K}{8\lambda\mu\nu}\dh^K_M}{f}=\frac1{8\lambda\mu\nu}\Inner{\dh^K_M}{p_Kf}.
\end{eqnarray*}
Then for $\zeta\in\mathrm{i}\liet^*_{\geqslant0}$, we have
\begin{eqnarray*}
\Inner{\dh^K_M}{p_Kf} &=&2\int^{a_3}_0 f(\zeta\cdot
\bsh)\zeta(a_3-\zeta)\dif \zeta +2\int^{\abs{a_0}}_0 f(\zeta\cdot
\bsh)\sign(a_0)\zeta(\abs{a_0}-\zeta)\dif \zeta\\
&&-2\int^{a_2}_0 f(\zeta\cdot \bsh)\zeta(a_2-\zeta)\dif \zeta
-2\int^{a_1}_0
 f(\zeta\cdot \bsh)\zeta(a_1-\zeta)\dif \zeta.
\end{eqnarray*}
Therefore
\begin{eqnarray*}
4\lambda\mu\nu\Inner{\bP_{\mathrm{eig}}}{f}&=&\int^{a_3}_0
f(\zeta\cdot \bsh)\zeta(a_3-\zeta)\dif \zeta +\int^{\abs{a_0}}_0
f(\zeta\cdot
\bsh)\sign(a_0)\zeta(\abs{a_0}-\zeta)\dif \zeta\\
&&-\int^{a_2}_0 f(\zeta\cdot \bsh)\zeta(a_2-\zeta)\dif \zeta
-\int^{a_1}_0
 f(\zeta\cdot \bsh)\zeta(a_1-\zeta)\dif \zeta.
\end{eqnarray*}
From this, we see that the density of $\zeta$ is given by the
following analytical formula:
\begin{eqnarray}\label{eq:3qubit-mix}
\dif\bP_{\mathrm{eig}}/\dif\zeta=p(\zeta|\lambda,\mu,\nu)
=\frac{\zeta}{4\lambda\mu\nu}
\begin{cases}
(1-\sign(a_0))\zeta,&\text{if }\zeta\in[0,\abs{a_0}],\\
\zeta+a_3-a_1-a_2,&\text{if }\zeta\in[\abs{a_0},a_1],\\
a_3-a_2,&\text{if }\zeta\in[a_1,a_2],\\
a_3-\zeta,&\text{if }\zeta\in[a_2,a_3].
\end{cases}
\end{eqnarray}

\begin{thrm}\label{prop:3Qbit}
The probability density function $p(s|a,b,c)$ of the minimal
eigenvalue $s$ of the equiprobable mixture of three random density
matrices, chosen uniformly from respective unitary orbits
$\cO_a:=\cO_{(a,1-a)},\cO_b:=\cO_{(b,1-b)}$, and
$\cO_c:=\cO_{(c,1-c)}$ where $a,b,c$ are fixed in $\Pa{0,\frac12}$
with $a\geqslant b\geqslant c$, is given by
\begin{eqnarray}\label{eq:3qeigen}
p(s|a,b,c)
=\frac{27}{4}\frac{\frac12-s}{\Pa{\frac12-a}\Pa{\frac12-b}\Pa{\frac12-c}}\Phi_{a,b,c}(s),
\end{eqnarray}
where
$$
\Phi_{a,b,c}(s) =
\begin{cases}\Pa{1+\sign\Pa{\frac16 -
\frac{a+b-c}3}}\Pa{\frac12-s},&\text{if } s\in\Br{T_3,
\frac12},\\
-s+\frac23-\frac{a+b-c}3,&\text{if } s\in\Br{T_2,
T_3},\\
T_1-T_0,&\text{if } s\in\Br{T_1,
T_2},\\
s-T_0,&\text{if } s\in\Br{T_0, T_1}.
\end{cases}
$$
Here
\begin{eqnarray}\label{eq:Tj}
\begin{cases}
T_0=\frac{a+b+c}3,\\
T_1=\frac13-\frac{a-b-c}3,\\
T_2=\frac13-\frac{-a+b-c}3,\\
T_3=\frac12-\abs{\frac16-\frac{a+b-c}3}.
\end{cases}
\end{eqnarray}
\end{thrm}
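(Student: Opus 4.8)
The plan is to obtain \eqref{eq:3qeigen} as a specialization of the eigenvalue density \eqref{eq:3qubit-mix} for the sum of three $\su(2)$ coadjoint orbits, via an explicit affine change of variables that converts the abstract chamber coordinates $\lambda,\mu,\nu,\zeta$ into the qubit data $a,b,c,s$.

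First I would translate the qubit data into the coadjoint picture. A qubit with spectrum $(a,1-a)$, $a\in\Pa{0,\frac12}$, has traceless part $\Pa{a-\frac12}\bsh$ with $\bsh=\diag(1,-1)$, hence sits on the orbit $\cO_{(\frac12-a)\bsh}$ with nonnegative chamber coordinate $\frac12-a$; likewise for $b,c$. Since each state enters the equiprobable mixture with weight $\frac13$ and the traceless part of $\rho=\frac13\sum_j\rho_j$ is $\sum_j\frac13\Pa{\rho_j-\frac12\I_2}$, the three relevant orbits carry chamber coordinates $\frac13\Pa{\frac12-a}$, $\frac13\Pa{\frac12-b}$, $\frac13\Pa{\frac12-c}$ along $\bsh$. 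The hypothesis $a\geqslant b\geqslant c$ gives $\frac12-a\leqslant\frac12-b\leqslant\frac12-c$, so to meet the ordering $\lambda\geqslant\mu\geqslant\nu\geqslant0$ required by \eqref{eq:3qubit-mix} I set $\lambda=\frac13\Pa{\frac12-c}$, $\mu=\frac13\Pa{\frac12-b}$, $\nu=\frac13\Pa{\frac12-a}$.

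Next I would compute the four breakpoints $a_3=\lambda+\mu+\nu$, $a_2=\lambda+\mu-\nu$, $a_1=\lambda-\mu+\nu$, $a_0=\lambda-\mu-\nu$ in terms of $a,b,c$, finding
\begin{eqnarray*}
a_3=\tfrac12-\tfrac{a+b+c}3,\quad a_2=\tfrac16+\tfrac{a-b-c}3,\quad a_1=\tfrac16-\tfrac{a-b+c}3,\quad a_0=-\tfrac16+\tfrac{a+b-c}3,
\end{eqnarray*}
and then verify the key identities $\frac12-a_3=T_0$, $\frac12-a_2=T_1$, $\frac12-a_1=T_2$, and $\frac12-\abs{a_0}=T_3$ against \eqref{eq:Tj}, together with $\sign(a_0)=-\sign\Pa{\frac16-\frac{a+b-c}3}$. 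The minimal eigenvalue of $\rho=\frac12\I_2+\zeta\bsh$ is $s=\frac12-\zeta$, so I substitute $\zeta=\frac12-s$ into \eqref{eq:3qubit-mix}; the Jacobian is one, so the density transforms by plain substitution. The prefactor $\frac{\zeta}{4\lambda\mu\nu}$ becomes $\frac{27}{4}\cdot\frac{\frac12-s}{\Pa{\frac12-a}\Pa{\frac12-b}\Pa{\frac12-c}}$ because $4\lambda\mu\nu=\frac4{27}\Pa{\frac12-a}\Pa{\frac12-b}\Pa{\frac12-c}$, and each of the four branches of \eqref{eq:3qubit-mix}, with its $\zeta$-interval $[0,\abs{a_0}]$, $[\abs{a_0},a_1]$, $[a_1,a_2]$, $[a_2,a_3]$, maps onto the corresponding branch of $\Phi_{a,b,c}(s)$ on $[T_3,\frac12]$, $[T_2,T_3]$, $[T_1,T_2]$, $[T_0,T_1]$.

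The step I expect to need the most care is the bookkeeping of the ordering reversal and the sign term: because the map $x\mapsto\frac12-x$ reverses order, the orbit carrying the smallest spectral parameter ($c$) must be assigned to the largest coordinate $\lambda$, and \eqref{eq:3qubit-mix} keeps $\sign(a_0)$ explicit, so an error in tracking $\sign(a_0)=-\sign\Pa{\frac16-\frac{a+b-c}3}$ would corrupt precisely the boundary branch near $s=\frac12$. Everything else is a routine affine substitution, and once the four branch identities are checked the claimed formula \eqref{eq:3qeigen} follows.
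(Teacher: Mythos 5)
Your proposal is correct and follows essentially the same route as the paper's own proof: both specialize the orbit-sum density \eqref{eq:3qubit-mix} via the substitution $\lambda=\frac13\Pa{\frac12-c}$, $\mu=\frac13\Pa{\frac12-b}$, $\nu=\frac13\Pa{\frac12-a}$, $\zeta=\frac12-s$, identify $\frac12-a_3,\frac12-a_2,\frac12-a_1,\frac12-\abs{a_0}$ with $T_0,T_1,T_2,T_3$, and track $\sign(a_0)=-\sign\Pa{\frac16-\frac{a+b-c}3}$. Your write-up is merely more explicit about the ordering reversal, the prefactor $4\lambda\mu\nu=\frac4{27}\Pa{\frac12-a}\Pa{\frac12-b}\Pa{\frac12-c}$, and the branch-by-branch verification, all of which the paper leaves implicit.
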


\begin{proof}
In fact, let $\rho_x\in\cO_x$ for $x\in\set{a,b,c}$, consider the
equiprobable mixture of $\rho_a,\rho_b,\rho_c$, i.e., $\rho_s =
\frac13(\rho_a+\rho_b+\rho_c)$ with $a,b,c,s\in\Br{0,\frac12}$.
Assume that $a\geqslant b\geqslant c$. Let
$$
\lambda:=\frac13\Pa{\frac12-c},\mu:=\frac13\Pa{\frac12-b},
\nu:=\frac13\Pa{\frac12-a}, \zeta:=\frac12-s.
$$
Then $\lambda\geqslant\mu\geqslant\nu\geqslant0$ and
$$
a_3=\frac12 - \frac{a+b+c}3, \quad a_2=\frac16+\frac{a-b-c}3,\quad
a_1=\frac16+\frac{-a+b-c}3, \quad a_0 = -\frac16 + \frac{a+b-c}3.
$$
That is,
\begin{eqnarray*}
&&\frac12 - a_3=\frac{a+b+c}3:=T_0, \quad \frac12 -
a_2=\frac13-\frac{a-b-c}3:=T_1,\\
&&\frac12 - a_1=\frac13-\frac{-a+b-c}3:=T_2, \quad \frac12 -
\abs{a_0} = \frac12 - \abs{\frac16 - \frac{a+b-c}3}:=T_3.
\end{eqnarray*}
Substituting these new symbols into \eqref{eq:3qubit-mix} directly
gives the desired result, i.e., Eq.~\eqref{eq:3qeigen}.
\end{proof}

\begin{remark}
If we do not require $s$ being the minimal eigenvalue, then we get
that the probability density function of an eigenvalue $s$ of the
equiprobable mixture of three random density matrices, chosen
uniformly from respective unitary orbits $\cO_a,\cO_b$, and $\cO_c$
where $a,b,c$ are fixed in $\Pa{0,\frac12}$ with $a\geqslant
b\geqslant c$, is given
\begin{eqnarray*}
p(s|a,b,c)
=\frac{27}{8}\frac{\frac12-s}{\Pa{\frac12-a}\Pa{\frac12-b}\Pa{\frac12-c}}\widetilde\Phi_{a,b,c}(s),
\end{eqnarray*}
where
$$
\widetilde\Phi_{a,b,c}(s) =
\begin{cases}
s-(1-T_0), &\text{if }s\in\Br{1-T_1,1-T_0},\\
-(T_1-T_0), &\text{if }s\in\Br{1-T_2,1-T_1},\\
-s+\frac13+\frac{a+b-c}3,&\text{if }s\in\Br{1-T_3,1-T_2},\\
\Pa{1+\sign\Pa{\frac16 -
\frac{a+b-c}3}}\Pa{\frac12-s},&\text{if } s\in\Br{T_3,
1-T_3},\\
-s+\frac23-\frac{a+b-c}3,&\text{if } s\in\Br{T_2,
T_3},\\
T_1-T_0,&\text{if } s\in\Br{T_1,
T_2},\\
s-T_0,&\text{if } s\in\Br{T_0,
T_1}.\\
\end{cases}
$$
In fact, we obtain the following result by the method in
\cite{Zhang2017c}:
\begin{eqnarray*}
\widetilde\Phi_{a,b,c}(s) &=& (\abs{s-T_0} - \abs{s-(1-T_0)}) -(\abs{s-T_1} - \abs{s-(1-T_1)})\\
&& -(\abs{s-T_2} - \abs{s-(1-T_2)}) -(\abs{s-T_3} -
\abs{s-(1-T_3)}).
\end{eqnarray*}
\end{remark}

\begin{cor}
The conditional probability density function of the length $r$ of
Bloch vector of the equiprobable mixture:
$\rho(\bsr)=\frac{\rho(\bsr_1)+\rho(\bsr_2)+\rho(\bsr_3)}{3}$, where
$r_1=\abs{\bsr_1},r_2=\abs{\bsr_2},r_3=\abs{\bsr_3}\in[0,1]$ are
fixed with $r_1\leqslant r_2\leqslant r_3$ and
$r_1+r_2-r_3\geqslant0$, is given by
\begin{eqnarray}\label{eq:p(r)}
p(r|r_1,r_2,r_3)=\frac94\frac1{r_1r_2r_3}
\begin{cases}
-3r^2+(r_1+r_2+r_3)r,&r\in\Br{\frac{-r_1+r_2+r_3}{3},\frac{r_1+r_2+r_3}{3}},\\
2r_1r,&r\in\Br{\frac{r_1-r_2+r_3}{3},\frac{-r_1+r_2+r_3}{3}},\\
3r^2+(r_1+r_2-r_3)r,&r\in\Br{\frac{r_1+r_2-r_3}{3},\frac{r_1-r_2+r_3}{3}},\\
r^2,&r\in\Br{0,\frac{r_1+r_2-r_3}{3}}.
\end{cases}
\end{eqnarray}
\end{cor}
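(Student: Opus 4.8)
The plan is to reduce the corollary to Theorem~\ref{prop:3Qbit} by translating the eigenvalue language into the Bloch-length language through the relation $\lambda_{\pm}(\rho(\bsr))=\frac12(1\pm r)$. First I would fix the dictionary between the two parametrizations. A qubit on the orbit $\cO_x=\cO_{(x,1-x)}$ with $x\in(0,\frac12)$ has smaller eigenvalue $x$ and Bloch length $1-2x$, so setting $a=\frac{1-r_1}{2}$, $b=\frac{1-r_2}{2}$, $c=\frac{1-r_3}{2}$ gives $\frac12-a=\frac{r_1}{2}$, $\frac12-b=\frac{r_2}{2}$, $\frac12-c=\frac{r_3}{2}$; moreover the ordering $r_1\leqslant r_2\leqslant r_3$ is exactly $a\geqslant b\geqslant c$, matching the hypothesis of Theorem~\ref{prop:3Qbit}. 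Since the mixture $\rho(\bsr)=\frac13\sum_{j=1}^3\rho(\bsr_j)$ again corresponds to $\bsr=\frac13\sum_{j=1}^3\bsr_j$ by linearity of \eqref{eq:Bloch-rep}, its minimal eigenvalue is $s=\frac12(1-r)$, i.e. $\frac12-s=\frac r2$. I would also record that the standing assumption $r_1+r_2-r_3\geqslant0$ is precisely $\frac16-\frac{a+b-c}3\geqslant0$, which fixes the branch of the sign function and of $\abs{a_0}$ appearing in $T_3$.

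With this dictionary in hand, the second step is a change of variables $s\mapsto r=1-2s$ applied to the density \eqref{eq:3qeigen}. Because this map is affine with $\abs{\dif s/\dif r}=\frac12$, the transformed density is $p(r|r_1,r_2,r_3)=\tfrac12\,p(s|a,b,c)$. Substituting $\frac12-s=\frac r2$ and $\frac12-x=\frac{r_x}{2}$ into the prefactor $\frac{27}{4}\,\frac{\frac12-s}{(\frac12-a)(\frac12-b)(\frac12-c)}$ collapses it to a multiple of $\frac{r}{r_1r_2r_3}$, and together with the Jacobian this produces the overall constant $\frac94\frac1{r_1r_2r_3}$ of \eqref{eq:p(r)}.

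The third step is essentially bookkeeping: translate the four breakpoints $T_0,T_1,T_2,T_3$ of \eqref{eq:Tj} and the four pieces of $\Phi_{a,b,c}(s)$ into the variable $r$. Under $r=1-2T_j$ the thresholds become $\frac{r_1+r_2+r_3}3$, $\frac{-r_1+r_2+r_3}3$, $\frac{r_1-r_2+r_3}3$ and $\frac{r_1+r_2-r_3}3$, so the four intervals for $s$ map, in reversed order, onto the four intervals for $r$ listed in \eqref{eq:p(r)}. On each interval I would simplify the corresponding branch of $\Phi$: the affine pieces $s-T_0$, $T_1-T_0$ and $-s+\frac23-\frac{a+b-c}3$ become affine functions of $r$ with the advertised slopes, while on the innermost interval the factor $1+\sign(\frac16-\frac{a+b-c}3)$ equals $2$ by hypothesis, leaving a pure quadratic in $r$. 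Multiplying each simplified branch by $\frac{27r}{2r_1r_2r_3}$ then yields \eqref{eq:p(r)}.

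The only real obstacle is the combinatorial care required in this last step: the substitution $r=1-2s$ reverses orientation, so the ordering of the four intervals is inverted, and one must correctly resolve the absolute value in $T_3$ and the sign term in the top branch of $\Phi$ using $r_1+r_2-r_3\geqslant0$. Verifying continuity of $p(\cdot|r_1,r_2,r_3)$ across the three interior breakpoints is a convenient consistency check that catches any misassignment of the branches or mis-tracking of the constant.
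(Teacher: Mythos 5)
Your proposal is correct and is essentially the paper's own proof: the paper establishes the corollary exactly by the substitution $(r_1,r_2,r_3)=(1-2a,1-2b,1-2c)$, $r=1-2s$ into Theorem~\ref{prop:3Qbit}, leaving ``some computations'' implicit, and your write-up merely makes explicit the Jacobian factor $\tfrac12$, the translation of the breakpoints $T_j$, and the role of the hypothesis $r_1+r_2-r_3\geqslant 0$ in fixing the sign term.

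One substantive remark: your closing continuity check is not idle, because it exposes a discrepancy with the printed formula. Carrying out your substitution on the innermost interval gives $\Phi_{a,b,c}(s)=2\left(\tfrac12-s\right)=r$, hence a density
$\tfrac12\cdot\tfrac{27}{4}\cdot\tfrac{(r/2)\cdot r}{(r_1/2)(r_2/2)(r_3/2)}=\tfrac{27r^2}{2r_1r_2r_3}=\tfrac94\cdot\tfrac{6r^2}{r_1r_2r_3}$
on $\left[0,\tfrac{r_1+r_2-r_3}{3}\right]$; that is, the last branch of \eqref{eq:p(r)} should read $6r^2$ rather than $r^2$. This is confirmed three ways: continuity at $r=\tfrac{r_1+r_2-r_3}{3}$, where the third branch evaluates to $6r^2$ while the printed fourth branch gives $r^2$; normalization (with $r^2$ the total mass in the pure-state case is $\tfrac{31}{36}$, with $6r^2$ it is $1$); and the classical three-step random-flight density for $r_1=r_2=r_3=1$, which equals $\tfrac{27r^2}{2}$ on $\left[0,\tfrac13\right]$ and $\tfrac{27}{4}r(1-r)$ on $\left[\tfrac13,1\right]$, matching your transformed branches only with the coefficient $6$. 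So your plan, executed faithfully, proves the corrected statement; the $r^2$ in the corollary as printed is a typo that your consistency check would catch.
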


\begin{proof}
We have already known that two eigenvalues of a qubit density matrix
can be identified by the length of its Bloch vector:
$\lambda_\pm(\rho)=\frac12\Pa{1\pm r}$, where $r\in[0,1]$. Now let
$(r_1,r_2,r_3)=(1-2a,1-2b,1-2c)$ and $r=1-2s$, where $a,b,c,s$ are
from Proposition~\ref{prop:3Qbit}. Using these new symbols
$(r_1,r_2,r_3,r)$ instead of $(a,b,c,s)$, some computations give the
desired result.
\end{proof}

Similarly, we can derive the density of diagonal part $\rho^D_s$ as
follows:
\begin{eqnarray}\label{eq:q(x|a,b,c)}
q(x|a,b,c)=
\frac{27}{16}\frac1{(\frac1{2}-a)(\frac1{2}-b)(\frac1{2}-c)}\Psi_{a,b,c}(x),
\end{eqnarray}
where
\begin{eqnarray*}
\Psi_{a,b,c}(x):=\begin{cases}
(x-T_0)^2,&x\in[T_0,T_1],\\
(T_1-T_0)(2x-T_0-T_1),&x\in[T_1,T_2],\\
-x^2+T_0^2-T_1^2-T_2^2+2x(1-T_3),&x\in[T_2,T_3],\\
-2x^2+2x+T_0^2-T_1^2-T_2^2-T_3^2,&x\in[T_3,1-T_3],\\
-(x-T_3)^2+(T_3-1)^2+T_0^2-T_1^2-T_2^2,&x\in[1-T_3,1-T_2],\\
(T_1-T_0)(2-2x-T_0-T_1),&x\in[1-T_2,1-T_1],\\
(x-(1-T_0))^2,&x\in[1-T_1,1-T_0].
\end{cases}
\end{eqnarray*}
Here $T_j,j=0,1,2,3$ are from \eqref{eq:Tj}.

As a demonstration, we plot the density function of diagonal part
$\rho^D_s$ and the spectral density function of an eigenvalue of
$\rho_s$, where $\rho_s=\frac13(\rho_1+\rho_2+\rho_3)$, in the qubit
situation, see Figure~\ref{fig:qx-vs-ps}.
\begin{figure}[ht]\centering
\subfigure[$q(x|a,b,c)$ vs $x$] {\begin{minipage}[b]{0.47\linewidth}
\includegraphics[width=1\textwidth]{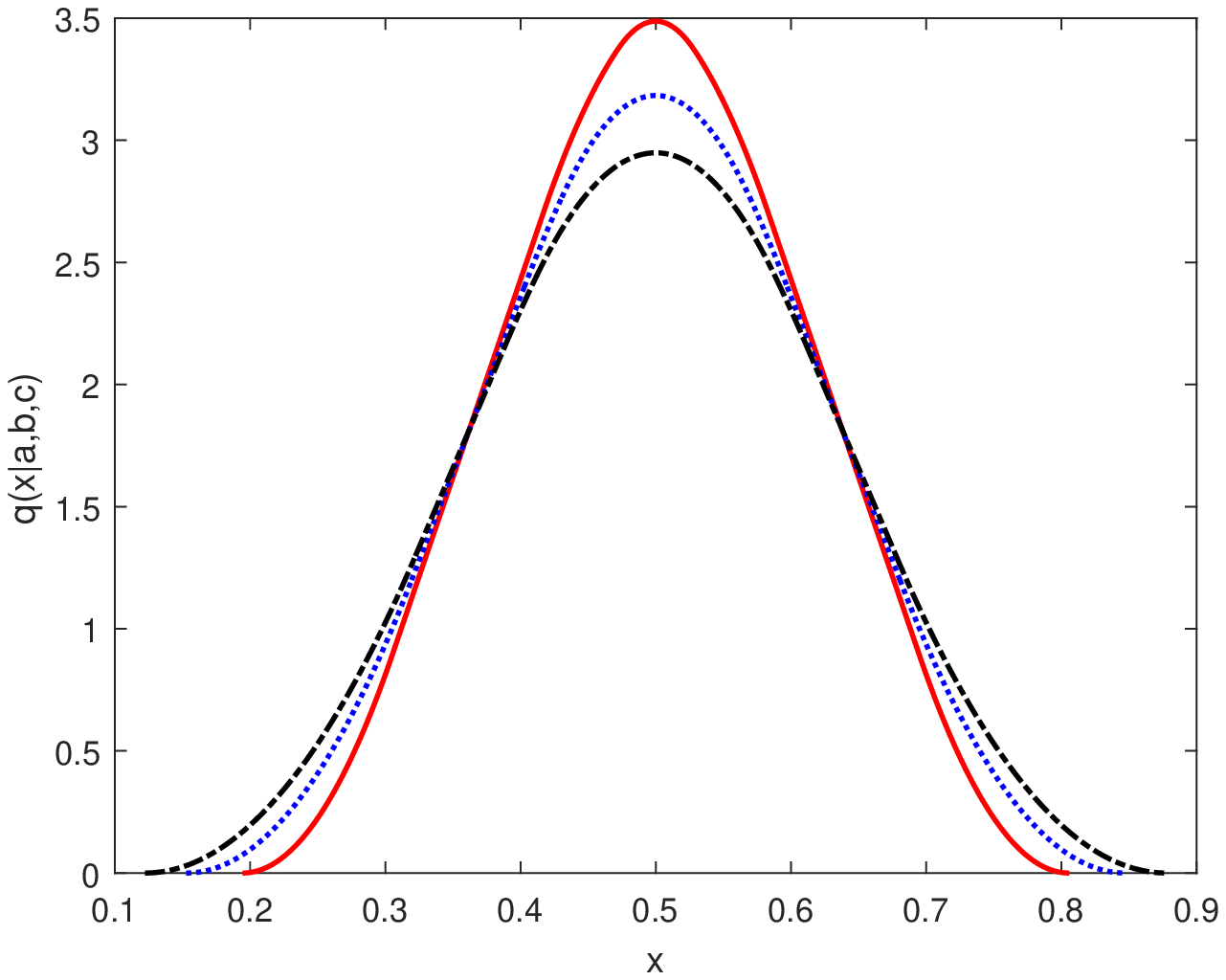}
\end{minipage}}
\subfigure[$p(s|a,b,c)$ vs $s$] {\begin{minipage}[b]{0.47\linewidth}
\includegraphics[width=1\textwidth]{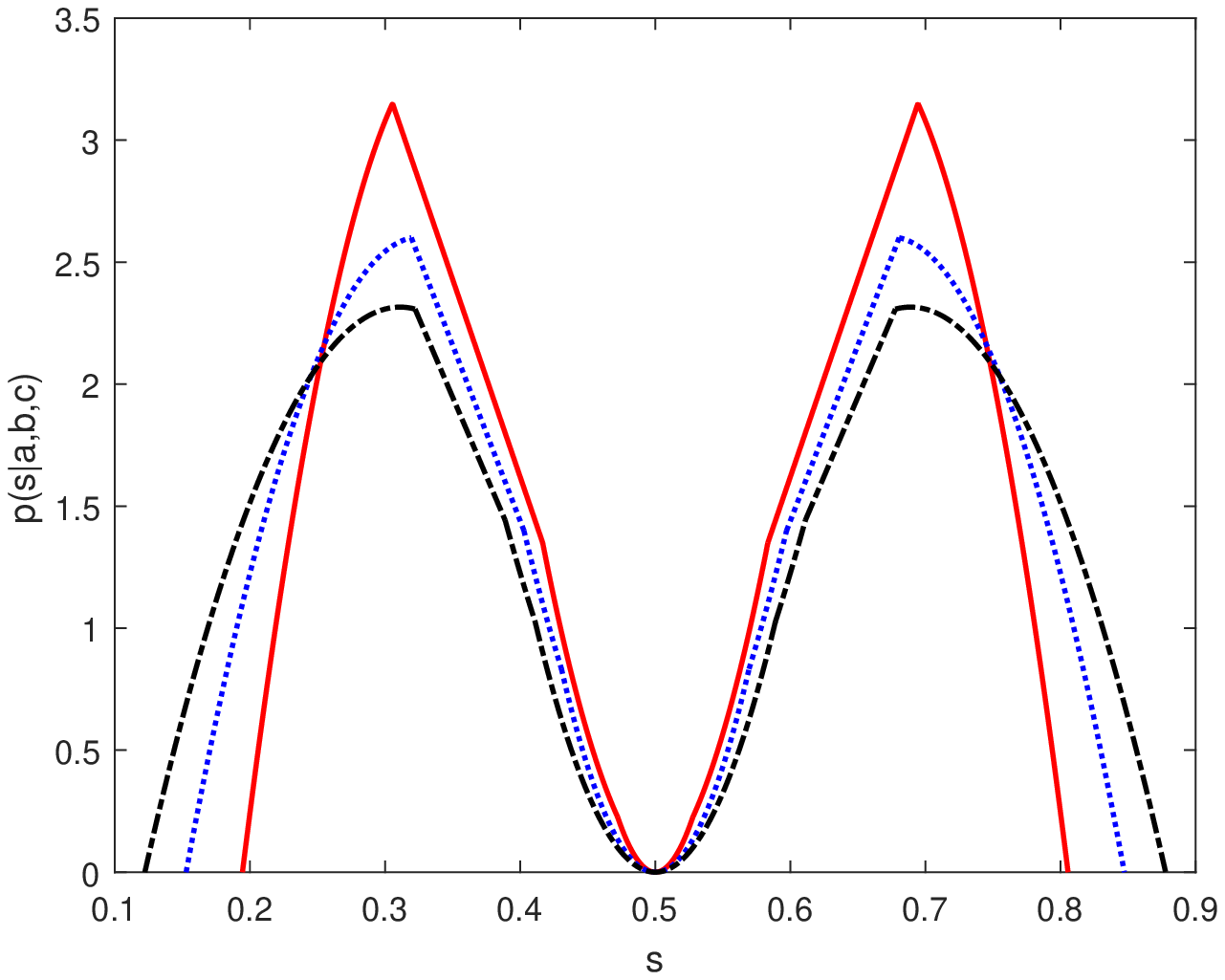}
\end{minipage}}
\caption{(a) The density function of the diagonal part $\rho^D_s$:
$q(x|a,b,c)$ versus $x$. (b) The density function of an eigenvalue
of $\rho_s$: $p(s|a,b,c)$ versus $s$. In either two figures, they
are demonstrated with red line corresponding to
$(a,b,c)=\Pa{\frac13,\frac16,\frac1{12}}$, blue line corresponding
to $(a,b,c)=\Pa{\frac14,\frac18,\frac1{12}}$, black line
corresponding to $(a,b,c)=\Pa{\frac15,\frac1{10},\frac1{15}}$.}
\label{fig:qx-vs-ps}
\end{figure}
From Figure~\ref{fig:qx-vs-ps}, we see that the graphs of their
distribution densities are symmetric with respect to the vertical
line $\frac12$ in the coordinate system. We find that for the
mixture of three random density matrices, the density of a generic
eigenvalue taking $\frac12$ is vanished (see
Fig~\ref{fig:qx-vs-ps}(a)), but the density of a generic diagonal
entry taking $\frac12$ is largest (see Fig~\ref{fig:qx-vs-ps}(b)).
Note that
$\Pa{\frac15,\frac1{10},\frac1{15}}\prec\Pa{\frac14,\frac18,\frac1{12}}\prec\Pa{\frac13,\frac16,\frac1{12}}$
\footnote{Here $\prec$ means the majorization. That is, for two
$d$-dimensional real vectors $\bsp=(p_1,\ldots,p_d)$ and
$\bsq=(q_1,\ldots,q_d)$, we say that $\bsp$ is majorized by $\bsq$,
denoted by $\bsp\prec\bsq$, if $\sum^k_{j=1}p^\downarrow_j\leqslant
\sum^k_{j=1}q^\downarrow_j$ for all $k\in\set{1,\ldots,d-1}$, where
$v^\downarrow$ represents the vector $v$ with entries arranged in
non-increasing order.}. The common feature in the
Figure~\ref{fig:qx-vs-ps}(a\&b) is the graphs rising up largely in
the sense of the majorization order, i.e., the top of the graph
corresponding to $\bsp$ is lower than that of the graph
corresponding to $\bsq$ if $\bsp\prec\bsq$.

\begin{figure}[ht]\centering
{\begin{minipage}[b]{0.7\linewidth}
\includegraphics[width=1\textwidth]{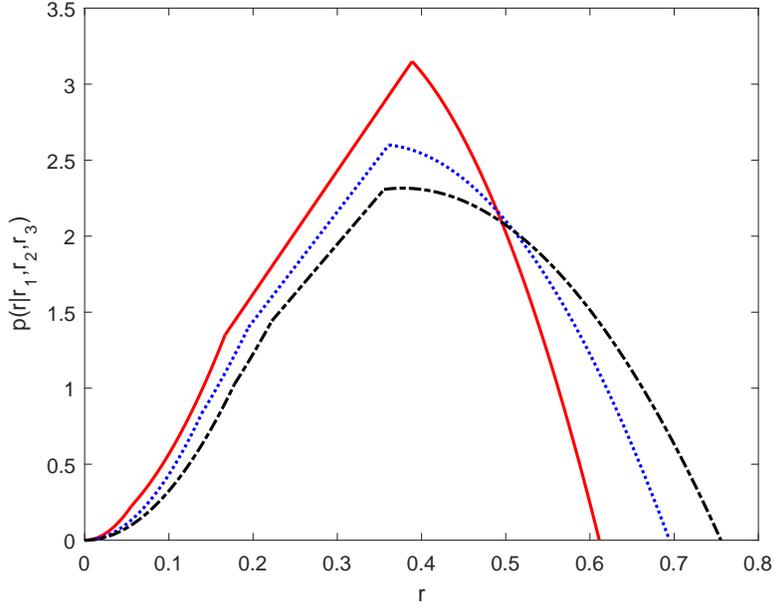}
\end{minipage}}
\caption{Reformulation of $p(s|a,b,c)$ in terms of Bloch lengths via
$a=\frac{1-r_1}2,b=\frac{1-r_2}2,c=\frac{1-r_3}2$ and $s=\frac{1\pm
r}2$: $p(r|r_1,r_2,r_3)$ versus $r$ with red line corresponding to
$(r_1,r_2,r_3)=\Pa{\frac13,\frac23,\frac56}$, blue line
corresponding to $(r_1,r_2,r_3)=\Pa{\frac12,\frac34,\frac56}$, black
line corresponding to
$(r_1,r_2,r_3)=\Pa{\frac35,\frac45,\frac{13}{15}}$.} \label{fig:pr}
\end{figure}
\subsection{The mixture of two qutrit states}

Given $K=\rS\rU(3)$. Then
$T=\Set{\diag(e^{\mathrm{i}\theta_1},e^{\mathrm{i}\theta_2},e^{\mathrm{i}\theta_3}):\theta_1,\theta_2,\theta_3\in\real\bigwedge\sum^3_{j=1}\theta_j=0}$
with its Lie algebra
$\liet=\Set{\diag(\mathrm{i}\theta_1,\mathrm{i}\theta_2,\mathrm{i}\theta_3):\theta_1,\theta_2,\theta_3\in\real\bigwedge\sum^3_{j=1}\theta_j=0}$.
Thus $\bsh_1=\diag(1,-1,0),\bsh_2=\diag(0,1,-1)$ is the basis of
$\liet$, and $\liet= \mathrm{i}\real\cdot \bsh_1\oplus
\mathrm{i}\real\cdot \bsh_2\cong\real^2$. All positive roots
$\alpha_{ij}\in\liet^*$ with $i<j$, where $i,j\in\set{1,2,3}$, are
given by $\alpha_{12}\cong\bsh_1,\alpha_{13}\cong\bsh_1+\bsh_2,
\alpha_{23}\cong\bsh_2$. Note that the above realizations are via
the Hilbert-Schmidt inner product. Furthermore, we can construct an
orthonormal basis for $\mathrm{i}\liet$ as
$\bsh_x=\frac{\bsh_1+\bsh_2}{\sqrt{2}},\bsh_y=\frac{\bsh_1-\bsh_2}{\sqrt{6}}$.
Throughout this section, denote $u:=\sqrt{\frac12}$ and
$v:=\sqrt{\frac32}$. Then all positive roots $\alpha_{ij}\in\liet^*$
with $i<j$, where $i,j\in\set{1,2,3}$, via $(\bsh_x,\bsh_y)$, are
given by $\alpha_{12}\cong u\bsh_x+v\bsh_y,\alpha_{13}\cong
2u\bsh_x, \alpha_{23}\cong u\bsh_x-v\bsh_y$. Besides, the Weyl group
is given by $W=S_3$. Let $\lambda\in\mathrm{i}\liet^*_{>0}$ and
$\mu\in\mathrm{i}\liet^*_{\geqslant0}$. We see that
\eqref{eq:NA-DH-measure} reduces to the following:
$\dh^K_{\cO_\lambda\times\cO_\mu} = \Pa{\sum_{w\in
S_3}(-1)^{l(w)}\delta_{w\lambda}}*\dh^T_{\cO_\mu}$, where
$\dh^T_{\cO_\mu} = \sum_{w\in S_3}(-1)^{l(w)}\delta_{w\mu}*
H_{-\alpha_{12}}*H_{-\alpha_{13}}*H_{-\alpha_{23}}$. So,
\begin{eqnarray}\label{eq:sum-qutrit}
\dh^K_{\cO_\lambda\times\cO_\mu} = \Pa{\sum_{w\in
S_3}(-1)^{l(w)}\delta_{w\lambda}}*\Pa{\sum_{w\in
S_3}(-1)^{l(w)}\delta_{w\mu}}*
H_{-\alpha_{12}}*H_{-\alpha_{13}}*H_{-\alpha_{23}}\big|_{\mathrm{i}\liet^*_{\geqslant0}}.
\end{eqnarray}
We see that
$\lambda=(\lambda_1,\lambda_2,\lambda_3)\in\mathrm{i}\liet^*_{>0}$,
i.e., $\lambda_1>\lambda_2>\lambda_3$ and $\sum^3_{j=1}\lambda_j=0$.
Note that $\lambda_1>0>\lambda_3$. With the orthonormal basis
$\set{\bsh_x,\bsh_y}$ of $\mathrm{i}\liet$, $\lambda\cong
u(\lambda_1-\lambda_3)\bsh_x + v(\lambda_1+\lambda_3)\bsh_y\cong
(u(\lambda_1-\lambda_3),v(\lambda_1+\lambda_3))$. Thus
\begin{eqnarray*}
\Pa{\sum_{w\in S_3}(-1)^{l(w)}\delta_{w\lambda}}*\Pa{\sum_{w\in
S_3}(-1)^{l(w)}\delta_{w\mu}}= \sum_{w,w'\in
S_3}(-1)^{l(w)+l(w')}\delta_{w\lambda+w'\mu}.
\end{eqnarray*}
Specifically, we see that, via $\set{\bsh_x,\bsh_y}$,
\begin{eqnarray*}
(\lambda_1,\lambda_2,\lambda_3) =
\Pa{u(\lambda_1-\lambda_3),v(\lambda_1+\lambda_3)},
(\lambda_1,\lambda_3,\lambda_2) =
\Pa{u(2\lambda_1+\lambda_3),-v\lambda_3},\\
(\lambda_2,\lambda_1,\lambda_3) =
\Pa{-u(\lambda_1+2\lambda_3),-v\lambda_1},(\lambda_2,\lambda_3,\lambda_1)
= \Pa{-u(2\lambda_1+\lambda_3),-v\lambda_3},
\\
(\lambda_3,\lambda_1,\lambda_2) =
\Pa{u(\lambda_1+2\lambda_3),-v\lambda_1},(\lambda_3,\lambda_2,\lambda_1)
= \Pa{-u(\lambda_1-\lambda_3),v(\lambda_1+\lambda_3)}.
\end{eqnarray*}
Then
\begin{eqnarray*}
\sum_{w\in S_3}(-1)^{l(w)}\delta_{w\lambda} &=&
\delta_{\Pa{u(\lambda_1-\lambda_3),v(\lambda_1+\lambda_3)}}+\delta_{\Pa{u(\lambda_1+2\lambda_3),-v\lambda_1}}+\delta_{\Pa{-u(2\lambda_1+\lambda_3),-v\lambda_3}}\\
&&- \delta_{\Pa{u(2\lambda_1+\lambda_3),-v\lambda_3}} -
\delta_{\Pa{-u(\lambda_1+2\lambda_3),-v\lambda_1}} -
\delta_{\Pa{-u(\lambda_1-\lambda_3),v(\lambda_1+\lambda_3)}}.
\end{eqnarray*}


\begin{prop}\label{lem:iter-convol}
The measure $H_{-\alpha_{12}}*H_{-\alpha_{13}}*H_{-\alpha_{23}}=
H_{-\bsh_1}*H_{-\bsh_1-\bsh_2}*H_{-\bsh_2}$ has Lebesgue density:
\begin{eqnarray}\label{eq:pdfmomentpolytope}
f(\nu) =\frac1{\sqrt{3}}\times
\begin{cases}0,&\text{if } \nu\in C_0;\\
-\nu_1,&\text{if
}\nu\in C_1;\\
\nu_3,&\text{if } \nu\in C_2.
\end{cases}
\end{eqnarray}
Here $C_0$ is the complementary region of the positive cone formed
by three vectors $\set{-\bsh_1,-\bsh_1-\bsh_2,-\bsh_2}$, $C_1$ is
the positive cone generated by $\set{-\bsh_1,-\bsh_1-\bsh_2}$, and
$C_2$ is the positive cone generated by
$\set{-\bsh_1-\bsh_2,-\bsh_2}$.
\end{prop}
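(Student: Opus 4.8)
The plan is to realize the triple convolution as a pushforward of Lebesgue measure on the positive octant and then read off the density by integrating over the one-dimensional fibres. Since $\Inner{H_\omega}{f}=\int_0^\infty f(t\omega)\dif t$, the measure $H_{-\bsh_1}*H_{-\bsh_1-\bsh_2}*H_{-\bsh_2}$ is exactly the pushforward of $\dif t_1\dif t_2\dif t_3$ on $\real^3_{\geqslant0}$ under the linear map $L(t_1,t_2,t_3)=-t_1\bsh_1-t_2(\bsh_1+\bsh_2)-t_3\bsh_2$. Writing $\nu=\diag(\nu_1,\nu_2,\nu_3)$ with $\nu_1+\nu_2+\nu_3=0$, this gives $\nu_1=-(t_1+t_2)$, $\nu_2=t_1-t_3$, $\nu_3=t_2+t_3$. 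First I would note that the image of $L$ is precisely the positive cone generated by $-\bsh_1$ and $-\bsh_2$, split by the middle ray $-\bsh_1-\bsh_2$ into the sub-cones $C_1$ and $C_2$; on the complement $C_0$ the fibre is empty, so the density vanishes there.

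The key computation is the fibre length. I would factor the convolution as $\Pa{H_{-\bsh_1}*H_{-\bsh_1-\bsh_2}}*H_{-\bsh_2}$: the inner convolution of two ray-measures is the pushforward of $\dif t_1\dif t_2$ under a linear isomorphism onto $C_1$, hence is constant there, equal to the reciprocal Jacobian $1/\Abs{\det[-\bsh_1,\,-\bsh_1-\bsh_2]}$ evaluated in the orthonormal frame $\Set{\bsh_x,\bsh_y}$. In that frame the two generators are $(-u,-v)$ and $(-2u,0)$, so the Jacobian is $2uv=\sqrt3$ and the inner convolution equals $\tfrac1{\sqrt3}$ on $C_1$ and zero elsewhere. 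Convolving with $H_{-\bsh_2}$ then smears this uniform slab along $-\bsh_2$, so the value at $\nu$ is $\tfrac1{\sqrt3}$ times the length of $\Set{s\geqslant0:\nu+s\bsh_2\in C_1}$. The membership conditions $\nu_2+s\geqslant0$ and $\nu_3-s\geqslant0$ force $s\in[\max(0,-\nu_2),\nu_3]$, whose length is a linear function of $\nu$ that differs according to the sign of $\nu_2$, i.e.\ according to which side of the middle ray the point lies on. I would then verify that this length is one of the forms $-\nu_1,\nu_3$ on each sub-cone, vanishes on the two extreme edges $-\bsh_1$ and $-\bsh_2$, and is continuous across $\nu_2=0$, thereby reproducing the piecewise-linear structure of \eqref{eq:pdfmomentpolytope}.

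I expect the main obstacle to be purely the bookkeeping of the fibre endpoints, namely determining which linear form is produced on $C_1$ versus $C_2$; it is easy to conflate the two cones or to misplace the $\max(0,-\nu_2)$ endpoint. As independent checks I would recompute the density with the opposite grouping $H_{-\bsh_1}*\Pa{H_{-\bsh_1-\bsh_2}*H_{-\bsh_2}}$, and also via the single change of variables $(t_1,t_2,t_3)\mapsto(\nu_2,\nu_3,t_3)$, whose Jacobian is $1$; integrating out $t_3$ reproduces the same segment length as a density in the $(\nu_2,\nu_3)$ coordinates, with the prefactor $\tfrac1{\sqrt3}$ re-emerging as the Jacobian $\dif\nu_2\dif\nu_3=\tfrac1{\sqrt3}\dif\bsh_x\dif\bsh_y$ back to the orthonormal frame. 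Finally, the reflection $\diag(\nu_1,\nu_2,\nu_3)\mapsto\diag(-\nu_3,-\nu_2,-\nu_1)$ fixes $-\bsh_1-\bsh_2$ and swaps $C_1\leftrightarrow C_2$ while exchanging $-\nu_1\leftrightarrow\nu_3$, which pins down the assignment of the two linear forms to the two sub-cones unambiguously.
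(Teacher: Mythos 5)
Your method is sound, and it is essentially the route of the paper's second proof: realize the triple convolution as the pushforward of Lebesgue measure on $\real^3_{\geqslant0}$, note that $H_{-\bsh_1}*H_{-\bsh_1-\bsh_2}$ has constant density $\tfrac{1}{\sqrt{3}}$ on $\mathrm{cone}(-\bsh_1,-\bsh_1-\bsh_2)$ (your Jacobian $2uv=\sqrt{3}$ is correct), and then read off the full density as $\tfrac{1}{\sqrt{3}}$ times the length of the fibre $\Set{s\geqslant0:\nu+s\bsh_2\in \mathrm{cone}(-\bsh_1,-\bsh_1-\bsh_2)}$, with endpoints $s\in\Br{\max(0,-\nu_2),\nu_3}$. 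All of that bookkeeping is right.

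However, your final step cannot conclude the way you announce it. Since $\nu+s\bsh_2=\diag(\nu_1,\nu_2+s,\nu_3-s)$, the fibre length is $\nu_3$ when $\nu_2\geqslant0$, i.e.\ on $C_1$, and $\nu_2+\nu_3=-\nu_1$ when $\nu_2\leqslant0$, i.e.\ on $C_2$; hence your computation yields $f=\nu_3/\sqrt{3}$ on $C_1$ and $f=-\nu_1/\sqrt{3}$ on $C_2$, which is exactly the opposite pairing to \eqref{eq:pdfmomentpolytope}. You do not reproduce the stated formula --- you contradict it, and your answer is the correct one, so you should present it as a correction. Three observations confirm this. First, the density of such a pushforward must vanish on the extreme edges, where the fibre degenerates to a single point: at $\nu=\diag(-1,9/10,1/10)\in C_1$ the fibre is $s\in[0,1/10]$, so the density equals $\nu_3/\sqrt{3}=1/(10\sqrt{3})$, not $-\nu_1/\sqrt{3}=1/\sqrt{3}$ as the proposition would have it. Second, the paper's first (wall-crossing) proof actually derives your assignment: there the chamber called $C_1$ is the one bordered by the wall containing $\omega_1=(-u,v)\cong-\bsh_2$, i.e.\ the statement's $C_2$, and the value obtained on it is $-\nu_1/\sqrt{3}$; the statement's labels are swapped relative to its own proof. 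Third, the reformulation \eqref{eq:binary-f} should accordingly read $(-\abs{y}-\sqrt{3}x)/(3\sqrt{2})$, which vanishes on the boundary rays $y=\pm\sqrt{3}x$ as it must. Finally, one caveat about your own argument: the reflection $\diag(\nu_1,\nu_2,\nu_3)\mapsto\diag(-\nu_3,-\nu_2,-\nu_1)$ does not ``pin down the assignment unambiguously,'' because both your pairing and the swapped one in the proposition are invariant under it (it exchanges $C_1\leftrightarrow C_2$ and $-\nu_1\leftrightarrow\nu_3$ simultaneously); the disambiguation comes solely from your endpoint computation, equivalently from the vanishing of the density on the two extreme edges.
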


\begin{proof}[The first proof]
In the present proof, we follow up the method used in
\cite{Christandl2014}, where Paradan's wall-crossing formula
\cite{Boysal2009} are heavily used. The measure
$H_{-\bsh_1}*H_{-\bsh_1-\bsh_2}*H_{-\bsh_2}$ is, in fact, the
non-Abelian Duistermaat-Heckman measure that is on the closures of
the regular chambers containing the vertex $(0,0)$ given by the
convolution
$\delta_{(0,0,0)}*H_{-\bsh_1}*H_{-\bsh_1-\bsh_2}*H_{-\bsh_2}\cong
\delta_{(0,0)}* H_{\Pa{-u,-v}}* H_{(-2u,0)}* H_{\Pa{-u,v}}$. Its
density is denoted by $f(\nu)\cong
f(u(\nu_1-\nu_3),v(\nu_1+\nu_3))$. Thus $\omega_1 =\Pa{-u,v},
\omega_2=\Pa{-2u,0},\omega_3=\Pa{-u,-v}$. \\
(i) Clearly $f\equiv0$ on $C_0$. The wall $W_{01}$ separating $C_0$
and $C_1$ is given by the equation:
$\frac{x}{-1}=\frac{y}{\sqrt{3}}$. Its normal vector
$\xi_{01}=(-\sqrt{3},-1)$. Just only one weights
$\omega_1=\Pa{-u,v}$ lies on the linear hyperplane spanned by
$W_{01}$ (other weights are outside of $W_{01}$:
$\omega_2=\Pa{-2u,0},\omega_3=\Pa{-u,-v}$). Consider the
push-forward of Lebesgue measure on $\real_{\geqslant0}$ along the
linear map $P_{W_{01}}:t\mapsto t\omega_1$. Its density with respect
to $\dif w$ is given by a single homogeneous polynomial on the wall
$W_{01}$. Denote by $f_{W_{01}}$ any polynomial function extending
it to all of $\mathrm{i}\liet^*$, the Lie algebra $\rS\rU(3)$.
Clearly $f_{W_{01}}=\sqrt{2}$. Indeed,
\begin{eqnarray*}
f^{-1}_{W_{01}}=\dif\lambda (\omega_1,\xi_{01}/\norm{\xi_{01}}^2) =
\norm{\begin{array}{cc}
        -u & v \\
        -\frac{\sqrt{3}}4 & -\frac14
      \end{array}
}=\frac1{\sqrt{2}}.
\end{eqnarray*}
Hence
\begin{eqnarray*}
f(\nu) =
\sqrt{2}\mathrm{Res}_{z=0}\Pa{\frac{e^{\Inner{\nu}{\bsx+z\xi_{01}}}}
{\Inner{\omega_2}{\bsx+z\xi_{01}}\Inner{\omega_3}{\bsx+z\xi_{01}}}}_{\bsx=0},
\end{eqnarray*}
Thus
\begin{eqnarray*}
f(\nu)
=\sqrt{2}\times\mathrm{Res}_{z=0}\Pa{\frac{\exp\Pa{-\sqrt{6}\nu_1z}}{6z^2}}.
\end{eqnarray*}
Therefore, on the chamber $C_1$, $ f(\nu) =
-\frac1{\sqrt{3}}\nu_1$.\\
(ii) The wall $W_{12}$ separating $C_1$
and $C_2$ is given by the equation: $y=0$. Its normal vector
$\xi_{12}=(0,-1)$. Just only one weights $\omega_2=(-2u,0)$ lies on
the linear hyperplane spanned by $W_{12}$ (other weights are outside
of $W_{12}$: $\omega_1=\Pa{-u,v},\omega_3=\Pa{-u,-v}$). Consider the
push-forward of Lebesgue measure on $\real_{\geqslant0}$ along the
linear map $P_{W_{12}}:t\mapsto t\omega_2$. Its density with respect
to $\dif w$ is given by a single homogeneous polynomial on the wall
$W_{12}$. Denote by $f_{W_{12}}$ any polynomial function extending
it to all of $\mathrm{i}\liet^*$, the Lie algebra $\rS\rU(3)$.
Clearly $f_{W_{12}}=\frac1{\sqrt{2}}$. Indeed,
\begin{eqnarray*}
f^{-1}_{W_{12}}=\dif\lambda (\omega_2,\xi_{12}/\norm{\xi_{12}}^2) =
\norm{\begin{array}{cc}
        -2u & 0 \\
        0 & -1
      \end{array}
}=\sqrt{2}.
\end{eqnarray*}
Hence
\begin{eqnarray*}
f(\nu) -  \Pa{-\frac1{\sqrt{3}}\nu_1}=
\frac1{\sqrt{2}}\mathrm{Res}_{z=0}\Pa{\frac{e^{\Inner{\nu}{\bsx+z\xi_{12}}}}
{\Inner{\omega_1}{\bsx+z\xi_{12}}\Inner{\omega_3}{\bsx+z\xi_{12}}}}_{\bsx=0}.
\end{eqnarray*}
Thus
\begin{eqnarray*}
f(\nu) + \frac1{\sqrt{3}}\nu_1=
\frac1{\sqrt{2}}\times\mathrm{Res}_{z=0}\Pa{\frac{\exp\Pa{-\sqrt{\frac32}(\nu_1+\nu_3)z}}{-\frac32z^2}}
\end{eqnarray*}
Therefore, on the chamber $C_2$, $f(\nu) = \frac1{\sqrt{3}}\nu_3$.
In summary, we obtain the desired identity
\eqref{eq:pdfmomentpolytope}. This completes the proof.
\end{proof}
If we denote by $(x,y)=\Pa{u(\nu_1-\nu_3),v(\nu_1+\nu_3)}$, then the
above result can be reformulated as
\begin{eqnarray}\label{eq:binary-f}
f(x,y) =
\begin{cases}\frac{\abs{y}-\sqrt{3}x}{3\sqrt{2}},&\text{if
}(x,y)\in\Set{(x,y)\in\real^2: x\leqslant0,\sqrt{3}x\leqslant
y\leqslant-\sqrt{3}x},\\
0,&\text{otherwise}.
\end{cases}
\end{eqnarray}
The support of this binary function in \eqref{eq:binary-f} and its
graph are depicted in Figure~\ref{fig:iter-supp}.
\begin{figure}[ht]\centering
\subfigure[Chambers and support (\emph{Shadow region})]
{\begin{minipage}[b]{0.47\linewidth}
\includegraphics[width=.7\textwidth]{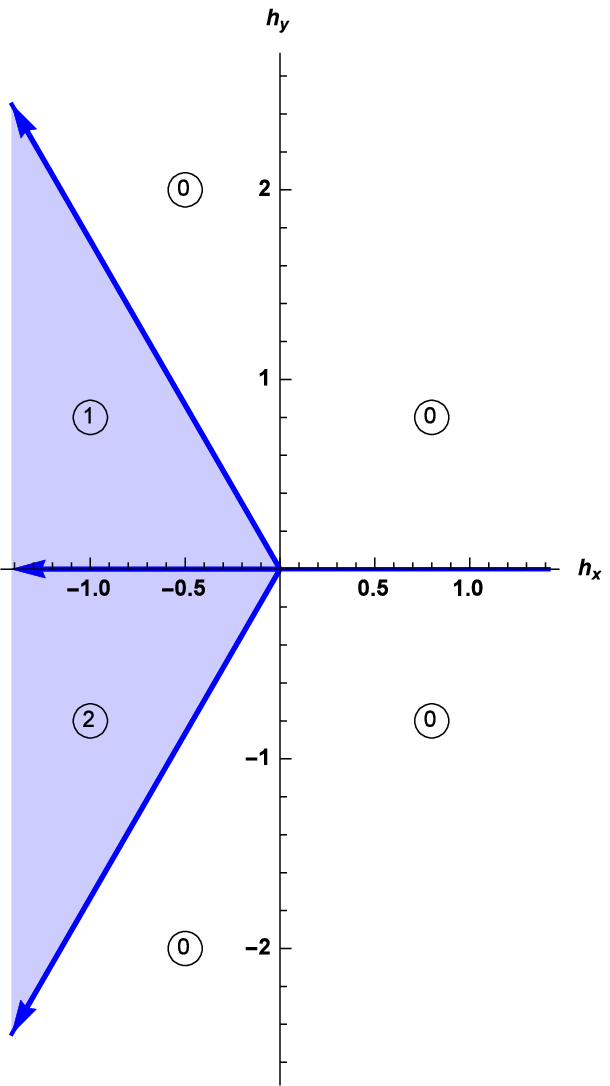}
\end{minipage}}\centering
\subfigure[The density function \eqref{eq:binary-f} of the iterated
convolution computed in Proposition~\ref{lem:iter-convol}]
{\begin{minipage}[b]{0.47\linewidth}
\includegraphics[width=1\textwidth]{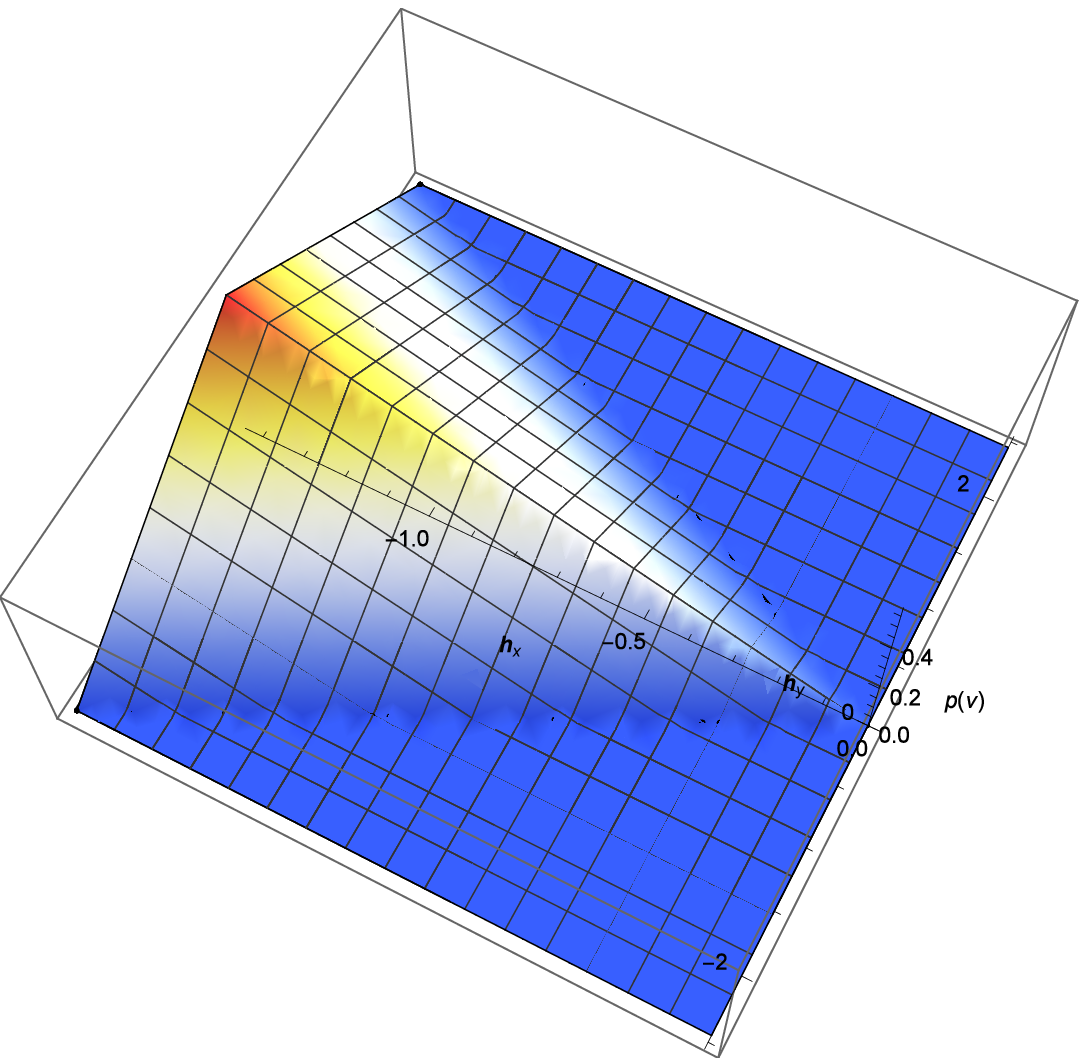}
\end{minipage}}
\caption{The support of the iterated convolution in
Proposition~\ref{lem:iter-convol} and its density over the support}
\label{fig:iter-supp}
\end{figure}

\begin{proof}[The second proof]
The measure $H_{-\bsh_1}*H_{-\bsh_1-\bsh_2}*H_{-\bsh_2}$ is
equivalently given by the convolution
$$
H_{\Pa{-u,-v}}* H_{(-2u,0)}* H_{\Pa{-u,v}}.
$$
Denote by $\sC$ the positive cone generated by three vectors
$(-u,-v),(-2u,0),(-u,v)$, i.e., $\sC_3:=C_1\cup C_2$ in
Figure~\ref{fig:iter-supp}, and $C_0=\real^2\backslash \sC_3$. We
can directly compute its density:
\begin{eqnarray*}
&&\int^\infty_0\dif t_1\int^\infty_0\dif t_2\int^\infty_0\dif t_3
\delta\Pa{t_1\Pa{\begin{array}{c}
                   -u \\
                   -v
                 \end{array}
}+t_2\Pa{\begin{array}{c}
           -u \\
           v
         \end{array}
}+t_3\Pa{\begin{array}{c}
           -2u \\
           0
         \end{array}
}-\Pa{\begin{array}{c}
        x \\
        y
      \end{array}
}}\\
&&= \int^\infty_0\dif t_1\int^\infty_0\dif t_2\int^\infty_0\dif t_3
\delta\Pa{\Pa{\begin{array}{c}
                   u(-t_1-t_2) \\
                   u\sqrt{3}(-t_1+t_2)
                 \end{array}
}+\Pa{\begin{array}{c}
           -2ut_3-x \\
           -y
         \end{array}
}}.
\end{eqnarray*}
By change of variables, i.e.,
$s_1=u(t_1+t_2),s_2=\sqrt{3}u(t_1-t_2)$, we get that the density is
given by
\begin{eqnarray*}
&&\frac1{\sqrt{3}}\int^\infty_0\dif s_1\int^\infty_{-\infty}\dif
s_2\int^\infty_0\dif t_3 \delta\Pa{\Pa{\begin{array}{c}
                   -s_1 \\
                   -s_2
                 \end{array}
}-\Pa{\begin{array}{c}
           2ut_3+x \\
           y
         \end{array}
}}\\
&&=\frac1{\sqrt{3}}\int^\infty_0\dif
t_3\mathbf{\I}_{\sC}\Pa{-\Pa{\begin{array}{c}
           2ut_3+x \\
           y
         \end{array}
}}
\end{eqnarray*}
Now $-\Pa{\begin{array}{c}
           2ut_3+x \\
           y
         \end{array}
}\in\sC$ if and only if
$\abs{y}\leqslant\sqrt{3}\abs{2ut_3+x}\leqslant\sqrt{3}x$ and
$x\leqslant2ut_3+x\leqslant -\frac{\abs{y}}{\sqrt{3}}$, i.e.,
$0\leqslant t_3\leqslant
\frac{\abs{y}}{\sqrt{6}}-\frac{x}{\sqrt{2}}$. Therefore the density
is given by
$\frac1{\sqrt{3}}\int^{\frac{\abs{y}}{\sqrt{6}}-\frac{x}{\sqrt{2}}}_0\dif
t_3=\frac{\abs{y}-\sqrt{3}x}{3\sqrt{2}}$. The support for this
density function is $\sC_3=\Set{(x,y)\in\real^2: x\leqslant0,
\sqrt{3}x\leqslant y\leqslant -\sqrt{3}x}$.
\end{proof}

\begin{exam}\label{exam:qutrit}
Let $u=\sqrt{\frac12}$ and $v=\sqrt{\frac32}$. Let
$\lambda_1=\frac16,\lambda_2=0,\lambda_3=-\frac16$ and
$\mu_1=\frac13,\mu_2=-\frac1{12},\mu_3=-\frac14$. The density of the
non-Abelian D-H measure is given by the restriction to the positive
Weyl chamber of an alternating sum of 36 copies of the density
described in Proposition~\ref{lem:iter-convol}. Note that the
geometry of the support of the density described in
Proposition~\ref{lem:iter-convol}, it is easily seen that only
summands for points in the shaded region, i.e.,
$\Set{(x,y)\in\real^2: x\geqslant0,-\sqrt{3}x\leqslant y\leqslant
\sqrt{3}x}$ contribute, see Figure~\ref{fig:qutrit-supp} (a).
Clearly the shaded region is a convex cone generated by three
positive roots $\Set{\alpha_{12},\alpha_{13},\alpha_{23}}$. From
Eq.~\eqref{eq:sum-qutrit}, we can pick out the following ten points
which are falling in such convex cone:{\small
\begin{eqnarray*}
\Pa{\frac{11u}{12},\frac
v{12}},\Pa{\frac{7u}{12},\frac{5v}{12}},\Pa{\frac
{7u}{12},\frac{v}{12}},\Pa{\frac{5u}{12},\frac
v4},\Pa{\frac{5u}{12},-\frac
v{12}},\\
\Pa{\frac{u}3,-\frac{v}6},\Pa{\frac{3u}4,\frac v4},\Pa{\frac
{3u}4,-\frac v{12}},\Pa{\frac u2,-\frac v3},\Pa{\frac u4,\frac
v{12}}.
\end{eqnarray*}}
Therefore,
\begin{eqnarray*}
\dh^K_{\cO_\lambda\times\cO_\mu}
&=&\left(\delta_{\Pa{\frac{11u}{12},\frac v{12}}}+
\delta_{\Pa{\frac{7u}{12},\frac{5v}{12}}} + \delta_{\Pa{\frac
{7u}{12},\frac{v}{12}}}+\delta_{\Pa{\frac{5u}{12},\frac
v4}}+\delta_{\Pa{\frac{5u}{12},-\frac v{12}}} +
\delta_{\Pa{\frac{u}3,-\frac{v}6}}\right.\\
&&\left.~~~- 2\delta_{\Pa{\frac{3u}4,\frac v4}} - \delta_{\Pa{\frac
{3u}4,-\frac v{12}}} -\delta_{\Pa{\frac u2,-\frac v3}}-
2\delta_{\Pa{\frac u4,\frac
v{12}}}\right)*H_{-\alpha_{12}}*H_{-\alpha_{13}}*H_{-\alpha_{23}}\big|_{\mathrm{i}\liet^*_{\geqslant0}}.
\end{eqnarray*}
Here the positive Weyl chamber $\mathrm{i}\liet^*_{\geqslant0}$ is
identified with $\Set{(x,y)\in \real^2:
x\geqslant0,-\frac1{\sqrt{3}}x\leqslant y\leqslant
\frac1{\sqrt{3}}x}$, see Figure~\ref{fig:qutrit-supp}(a).
\begin{figure}[ht]\centering
\subfigure[Chambers and support $\sS_3=\bigcup^6_{j=1}C_j$]
{\begin{minipage}[b]{0.47\linewidth}
\includegraphics[width=.7\textwidth]{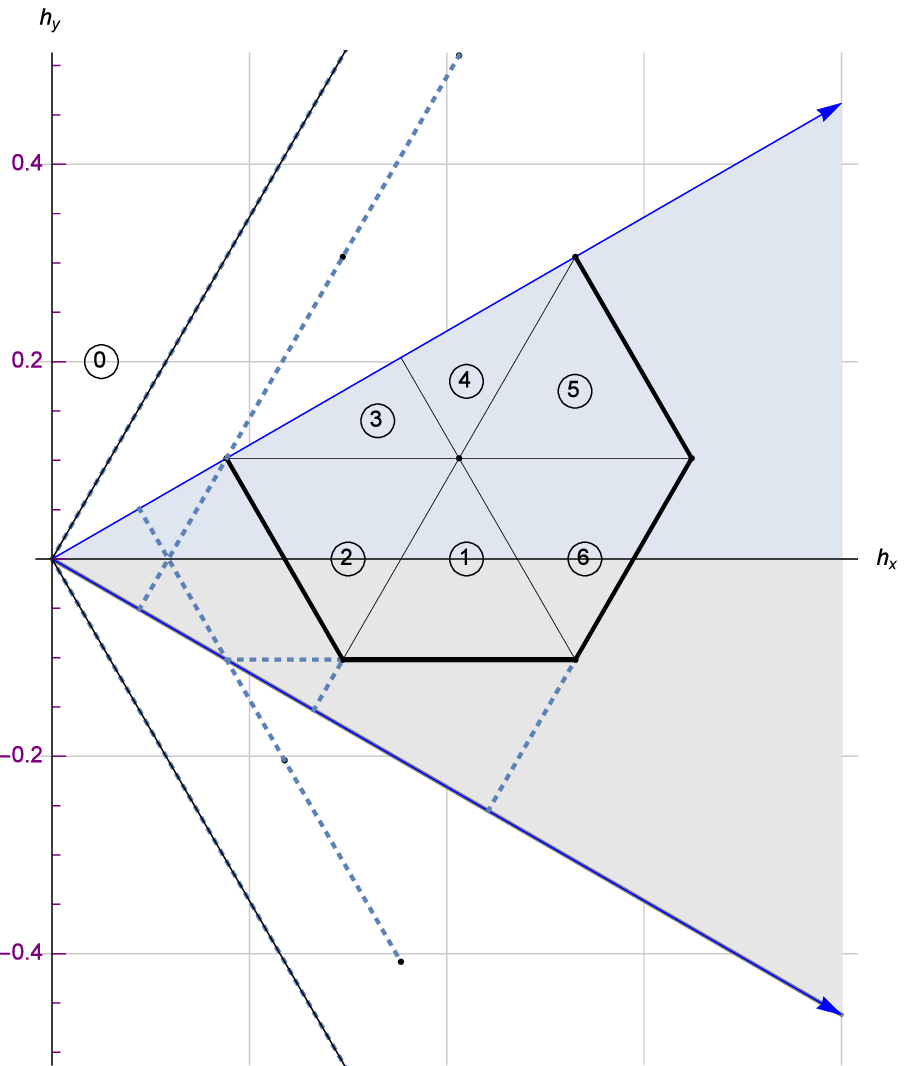}
\end{minipage}}\centering
\subfigure[The density function \eqref{eq:density-sum-twoqutrit} of
$(\tau_K)_*(\Phi_K)_*\Pa{\frac{\mu_M}{\vol(M)}}$]
{\begin{minipage}[b]{0.47\linewidth}
\includegraphics[width=1\textwidth]{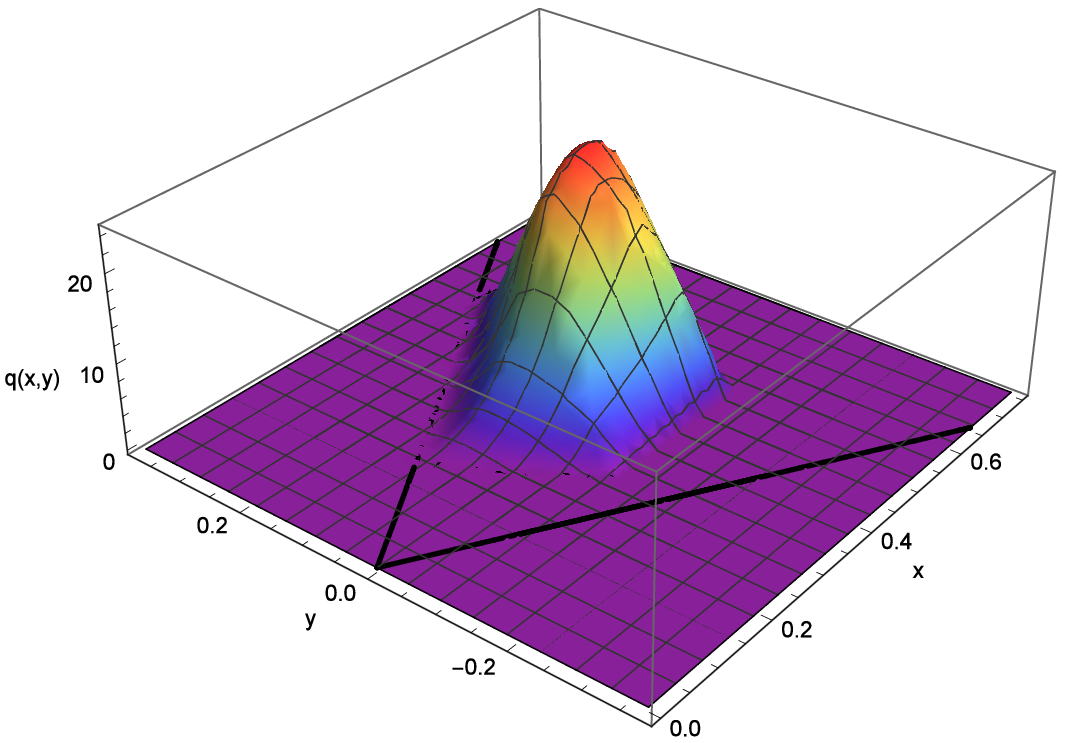}
\end{minipage}}
\caption{The support of the iterated convolution and its density
over the support for Example~\ref{exam:qutrit}}
\label{fig:qutrit-supp}
\end{figure}
The density of the non-Abelian D-H measure is given as
\begin{eqnarray*}
p(x,y)=\begin{cases}0,&\text{if }(x,y)\in C_0;\\
\frac{24 y+\sqrt{6}}{36 \sqrt{2}},&\text{if }(x,y)\in C_1;\\
\frac{6 \sqrt{3} x + 6 y-\sqrt{6}}{18\sqrt{2}},&\text{if }(x,y)\in C_2;\\
\frac{x-\sqrt{3}y}{\sqrt{6}},&\text{if }(x,y)\in C_3\cup C_4;\\
\frac{\sqrt{6} - 2 \sqrt{3} x - 2 y}{6 \sqrt{2}},&\text{if }(x,y)\in C_5;\\
\frac{5\sqrt{6} - 12 \sqrt{3}x + 12 y}{36 \sqrt{2}},&\text{if
}(x,y)\in C_6.
\end{cases}
\end{eqnarray*}
Here all $C_j$'s are the corresponding regions, marked by the
circled-numbers in Figure~\ref{fig:qutrit-supp}(a). We have already
known that
\begin{eqnarray*}
(\tau_K)_*(\Phi_K)_*\Pa{\frac{\mu_M}{\vol(M)}} =
p_K\frac{\dh^K_M}{\vol(M)},
\end{eqnarray*}
where $p_K(\nu) = \frac{(\nu_1-\nu_2)(\nu_1-\nu_3)(\nu_2-\nu_3)}2$.
Then $p_K(\nu)$, via $(x,y)=\Pa{u(\nu_1-\nu_3),v(\nu_1+\nu_3)}$, can
be rewritten as
\begin{eqnarray*}
p_K(x,y) = \frac{x^3 - 3 x y^2}{2 \sqrt{2}},\quad
(x,y)\in\Set{(x,y)\in\real^2:
x\geqslant0,-\frac1{\sqrt{3}}x\leqslant y\leqslant
\frac1{\sqrt{3}}x}.
\end{eqnarray*}
and $\vol(M) = \vol(\cO_\lambda)\vol(\cO_\mu) = p_K\Pa{\frac
u3,0}p_K\Pa{\frac{7u}{12},\frac v{12}} = \frac{35}{373248}$.
Therefore the density (see Figure~\ref{fig:qutrit-supp}(b)) for
$(\tau_K)_*(\Phi_K)_*\Pa{\frac{\mu_M}{\vol(M)}}$ (with the support
being $\sS_3:=\bigcup^6_{j=1}C_j$) is given as

\begin{eqnarray}\label{eq:density-sum-twoqutrit}
q(x,y) = \frac{2^53^4(x^3 - 3 x y^2)}{35}\times\begin{cases}0,&\text{if }(x,y)\in C_0;\\
24 y+\sqrt{6},&\text{if }(x,y)\in C_1;\\
12 \sqrt{3} x + 12 y-2\sqrt{6},&\text{if }(x,y)\in C_2;\\
12\sqrt{3}x-36y,&\text{if }(x,y)\in C_3\cup C_4;\\
6\sqrt{6} - 12 \sqrt{3} x - 12 y,&\text{if }(x,y)\in C_5;\\
5\sqrt{6} - 12 \sqrt{3}x + 12 y,&\text{if }(x,y)\in C_6.
\end{cases}
\end{eqnarray}
The normalization of the above function
\eqref{eq:density-sum-twoqutrit}, i.e., the integration over the
support being equal to one, is easily checked by computer system.
Consider the following random quantum state $\rho =
\frac{U\boldsymbol{\lambda}_1U^\dagger+V\boldsymbol{\lambda}_2V^\dagger}2$,
where $U,V$ are sampled by Haar measure over the unitary group and
$\boldsymbol{\lambda}_1=\diag(\frac12,\frac13,\frac16)$ and
$\boldsymbol{\lambda}_2=\diag(\frac23,\frac14,\frac1{12})$. Then
$2\Pa{\rho-\frac{\I_3}3} =
U\diag\Pa{\frac16,0,-\frac16}U^\dagger+V\diag\Pa{\frac13,-\frac1{12},-\frac14}V^\dagger$.
Let $s=\lambda_{\max}(\rho),t=\lambda_{\min}(\rho)$ be respective
maximal and minimal eigenvalues of $\rho$. Denote the eigenvalue
vector of random qubtrit $\rho$ by $(s,1-s-t,t)$, ordered
decreasingly. Thus the eigenvalue vector of $2(\rho-\I/3)$ is
$\Pa{2(s-1/3),2(2/3-s-t),2(t-1/3)}$. Its 2D coordinate via
$(\bsh_x,\bsh_y)$ is identified with $(x,y)$. Then
\begin{eqnarray*}
x=2u(s-t), y=2v\Pa{s+t-\frac23} \quad\text{or}\quad
s=\frac14\Pa{\frac yv+\frac xu}+\frac13, t=\frac14\Pa{\frac yv-\frac
xu}+\frac13.
\end{eqnarray*}
Therefore we can draw the conclusion that the joint density of
$(s,t)$ is given by
\begin{eqnarray}\label{eq:pdfforqutrit}
f(s,t) = \frac{2^{14}\cdot 3^6}{5\cdot 7}(s - t)\Pa{2s^2 + 2t^2 +
5st - 3s-3t+1}\cdot\Delta(s,t),
\end{eqnarray}
where
\begin{eqnarray*}
\Delta(s,t):=
\begin{cases}
s+2t-1,&\text{if } (s,t)\in\Set{(s,t): \frac5{12}\leqslant s\leqslant\frac7{12}, \max\Pa{\frac{17-24s}{24},\frac5{24}}\leqslant t\leqslant\frac{1-s}2};\\
\frac58-s-t,&\text{if } (s,t)\in\Set{(s,t):\frac5{12}\leqslant s\leqslant \frac12, \frac{5-8s}8\leqslant t\leqslant \frac5{24}};\\
\frac5{12}-s,&\text{if } (s,t)\in\Set{(s,t):\frac5{12} \leqslant s \leqslant\frac12, \frac5{24} \leqslant t \leqslant \frac{17 - 24s}{24}};\\
s-\frac7{12},&\text{if } (s,t)\in\Set{(s,t):\frac12 \leqslant s \leqslant \frac7{12}, \frac{17 - 24s}{24} \leqslant t \leqslant \frac5{24}};\\
\frac18-t,&\text{if } (s,t)\in\Set{(s,t): \frac12 \leqslant s
\leqslant \frac7{12}, \frac18 \leqslant t \leqslant \frac{17 -
24s}{24}}.
\end{cases}
\end{eqnarray*}
The support of the density function \eqref{eq:pdfforqutrit} are
plotted in the following Figure~\ref{fig:two-qutrit-supp}:
\begin{figure}[ht]\centering
\subfigure[Support] {\begin{minipage}[b]{.46\linewidth}
\includegraphics[width=.7\textwidth]{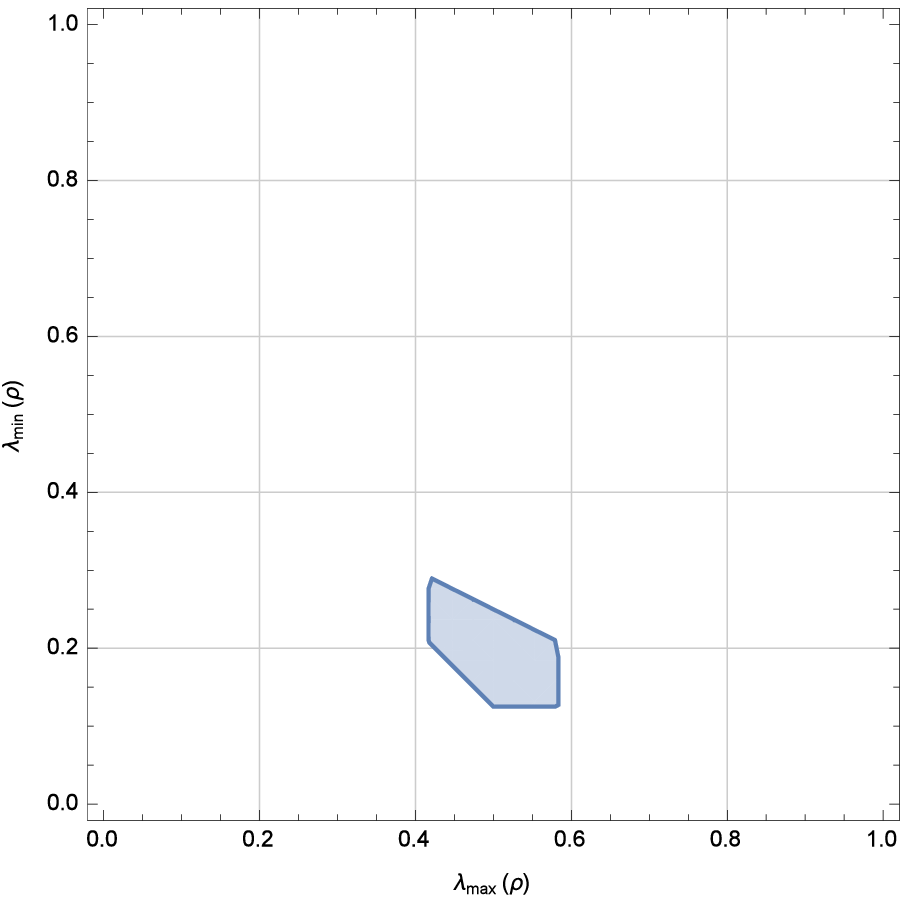}
\end{minipage}}\subfigure[Density function over the support] {\begin{minipage}[b]{.46\linewidth}
\includegraphics[width=1\textwidth]{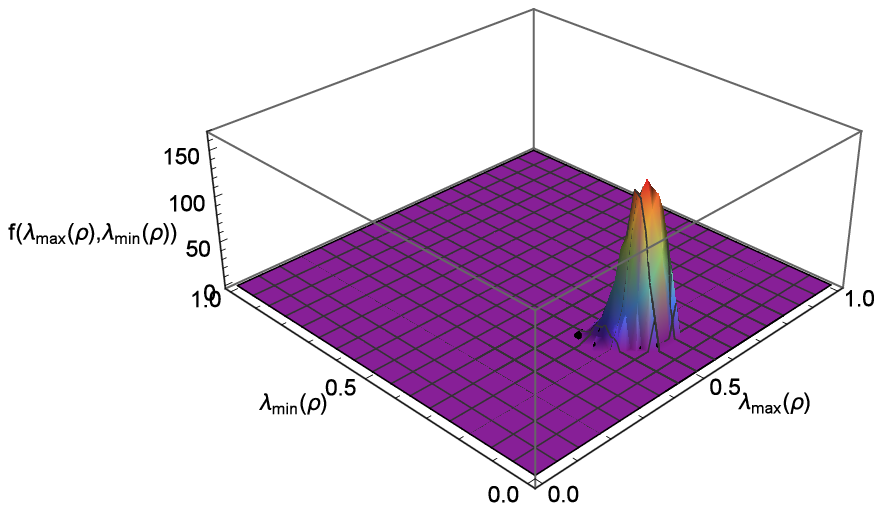}
\end{minipage}}\centering
\caption{(a) The feasible region of the maximal and minimal
eigenvalues of a random qutrit
$\rho=\frac12\Pa{U\boldsymbol{\lambda}_1U^\dagger+V\boldsymbol{\lambda}_2V^\dagger}$
where $\boldsymbol{\lambda}_1=\Pa{\frac12,\frac13,\frac16},
\boldsymbol{\lambda}_2=\Pa{\frac23,\frac14,\frac1{12}}$ and
$U,V\in\rS\rU(3)$ are Haar-distributed; (b) The joint density
function of the maximal and minimal eigenvalues of a random qutrit
$\rho=\frac12\Pa{U\boldsymbol{\lambda}_1U^\dagger+V\boldsymbol{\lambda}_2V^\dagger}$}
\label{fig:two-qutrit-supp}
\end{figure}
\end{exam}

\begin{remark}\label{rem:qutrit}
From the Example~\ref{exam:qutrit}, we see that the reasoning method
used in the Example~\ref{exam:qutrit} essentially provide a complete
solution to the joint density function of eigenvalues of the mixture
of quantum states algorithmically. That is, once two spectra are
given, we can give a analytical solution based on the given spectra.
Unfortunately, there is no unifying formula for that problem.
\end{remark}

\subsection{The mixture of two two-qubit states}

Given $K=\rS\rU(4)$. Then
$T=\Set{\diag(e^{\mathrm{i}\theta_1},e^{\mathrm{i}\theta_2},e^{\mathrm{i}\theta_3},e^{\mathrm{i}\theta_4}):\theta_1,\theta_2,\theta_3,\theta_4\in\real\bigwedge\sum^4_{j=1}\theta_j=0}$
with its Lie algebra
$\liet=\Set{\diag(\mathrm{i}\theta_1,\mathrm{i}\theta_2,\mathrm{i}\theta_3,\mathrm{i}\theta_4):\theta_1,\theta_2,\theta_3,\theta_4\in\real\bigwedge\sum^4_{j=1}\theta_j=0}$.
Thus $\bsh_1=\diag(1,-1,0,0)$, $\bsh_2=\diag(0,1,-1,0)$,
$\bsh_3=\diag(0,0,1,-1)$ is the basis of $\liet$, and $\liet=
\mathrm{i}\real\cdot \bsh_1\oplus \mathrm{i}\real\cdot \bsh_2\oplus
\mathrm{i}\real\cdot \bsh_3\cong\real^3$. All positive roots
$\alpha_{ij}\in\liet^*$ with $i<j$, where $i,j\in\set{1,2,3,4}$, are
given by
$\alpha_{12}\cong\bsh_1,\alpha_{13}\cong\bsh_1+\bsh_2,\alpha_{14}\cong\bsh_1+\bsh_2+\bsh_3,\alpha_{23}\cong\bsh_2,
\alpha_{24}\cong\bsh_2+\bsh_3, \alpha_{34}\cong\bsh_3$. Note that
the above realizations are via the Hilbert-Schmidt inner product.
Furthermore, we can construct an orthonormal basis for
$\mathrm{i}\liet$ as follows:
\begin{eqnarray}\label{eq:hxhyhz}
\bsh_x=\frac{\bsh_1}{\sqrt{2}},\quad
\bsh_y=\frac{\bsh_3}{\sqrt{2}},\quad
\bsh_z=\frac{\bsh_1+2\bsh_2+\bsh_3}{2}.
\end{eqnarray}
Denote also $u=\sqrt{\frac12}$. Then all positive roots
$\alpha_{ij}\in\liet^*$ with $i<j$, where $i,j\in\set{1,2,3,4}$, via
$(\bsh_x,\bsh_y,\bsh_z)$, are given by
$\alpha_{12}=(2u,0,0),\alpha_{13}=\Pa{u,-u,1},
\alpha_{14}=\Pa{u,u,1}, \alpha_{23} =\Pa{-u,-u,1},\alpha_{24} =
\Pa{-u,u,1}, \alpha_{34} = (0,2u,0)$. Besides, the Weyl group is
given by $W=S_4$. Let $\lambda\in\mathrm{i}\liet^*_{>0}$ and
$\mu\in\mathrm{i}\liet^*_{\geqslant0}$. So,
\begin{eqnarray*}
\dh^K_{\cO_\lambda\times\cO_\mu}= \Pa{\sum_{w,w'\in
S_4}(-1)^{l(w)+l(w')}\delta_{w\lambda+w'\mu}}*
H_{-\alpha_{12}}*H_{-\alpha_{13}}*H_{-\alpha_{14}}*
H_{-\alpha_{23}}*H_{-\alpha_{24}}*H_{-\alpha_{34}}\big|_{\mathrm{i}\liet^*_{\geqslant0}}.
\end{eqnarray*}
We see that
$\lambda=(\lambda_1,\lambda_2,\lambda_3,\lambda_4)\in\mathrm{i}\liet^*_{>0}$,
i.e., $\lambda_1>\lambda_2>\lambda_3>\lambda_4$ and
$\sum^4_{j=1}\lambda_j=0$. From this, we get that
$\lambda_1>0>\lambda_4, \lambda_1+\lambda_2>0>\lambda_3+\lambda_4$.
With the orthonormal basis $\set{\bsh_x,\bsh_y,\bsh_z}$ of
$\mathrm{i}\liet$,
$\lambda\cong\Pa{u(\lambda_1-\lambda_2),-u(\lambda_1+\lambda_2+2\lambda_4),\lambda_1+\lambda_2}$,
where $u(\lambda_1-\lambda_2)>0,\lambda_1+\lambda_2>0$. In the
following, we calculate the density of the measure
$H_{-\alpha_{12}}*H_{-\alpha_{13}}*H_{-\alpha_{14}}*
H_{-\alpha_{23}}*H_{-\alpha_{24}}*H_{-\alpha_{34}}$.
\begin{prop}\label{lem:iter-convol-su4}
The measure $H_{-\alpha_{12}}*H_{-\alpha_{13}}*H_{-\alpha_{14}}*
H_{-\alpha_{23}}*H_{-\alpha_{24}}*H_{-\alpha_{34}}$ has Lebesgue
density:
\begin{eqnarray}\label{eq:pdf3}
f(\nu) = -\frac18(\nu_1+\nu_2)(3\nu_1+\nu_2)(\nu_1+\nu_2-2\nu_4).
\end{eqnarray}
\end{prop}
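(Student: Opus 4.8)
The plan is to compute the density directly, in the spirit of the second proof of Proposition~\ref{lem:iter-convol}, as the pushforward of Lebesgue measure on $\real^6_{\geqslant0}$ under the linear map sending the six Heaviside parameters $(t_{ij})_{i<j}$ to $-\sum_{i<j}t_{ij}\alpha_{ij}$. Reading off the $\bsh$-expansions listed before the proposition, the positive roots are simply $\alpha_{ij}\cong e_i-e_j$ in the eigenvalue coordinates $\nu=(\nu_1,\nu_2,\nu_3,\nu_4)$ with $\sum_k\nu_k=0$, so the iterated convolution $H_{-\alpha_{12}}*\cdots*H_{-\alpha_{34}}$ is nothing but the continuous Kostant partition function of type $A_3$. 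Its density at $\nu$ is, up to the constant $c$ fixed by the volume form on $\mathrm{i}\liet^*$, the volume of the fiber polytope
\begin{eqnarray*}
f(\nu) = c\int_{\real^6_{\geqslant0}}\delta\Pa{\nu+\sum_{i<j}t_{ij}(e_i-e_j)}\prod_{i<j}\dif t_{ij},
\end{eqnarray*}
supported on the (negative) cone $\Set{-\sum_{i<j}t_{ij}\alpha_{ij}:t_{ij}\geqslant0}$.

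Expanding the vector $\delta$ componentwise yields the flow constraints $t_{12}+t_{13}+t_{14}=-\nu_1$, $-t_{12}+t_{23}+t_{24}=-\nu_2$ and $-t_{13}-t_{23}+t_{34}=-\nu_3$, the fourth equation being redundant because $\sum_k\nu_k=0$. I would solve these three independent constraints for the three ``node-$4$'' variables,
\begin{eqnarray*}
t_{14}=-\nu_1-t_{12}-t_{13},\quad t_{24}=-\nu_2+t_{12}-t_{23},\quad t_{34}=-\nu_3+t_{13}+t_{23},
\end{eqnarray*}
whose Jacobian has modulus one, thereby collapsing $f(\nu)$ to the ordinary three-dimensional volume of the region $R\subset\real^3_{\geqslant0}$ cut out by $t_{12},t_{13},t_{23}\geqslant0$ together with the three inequalities $t_{14},t_{24},t_{34}\geqslant0$. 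Integrating $\dif t_{23}$, then $\dif t_{13}$, then $\dif t_{12}$ over the nested linear bounds gives a volume that is homogeneous of degree $6-3=3$ in $\nu$, and collecting terms should reproduce $-\frac18(\nu_1+\nu_2)(3\nu_1+\nu_2)(\nu_1+\nu_2-2\nu_4)$ once $c$ is fixed by the Jacobian of the passage from the orthonormal coordinates $(\bsh_x,\bsh_y,\bsh_z)$ of~\eqref{eq:hxhyhz} to the $\nu_k$ (the rank-$3$ analogue of the $\frac1{\sqrt3}$ factor appearing in Proposition~\ref{lem:iter-convol}).

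The main obstacle is the chamber bookkeeping. In contrast with the rank-one situations, the partition function here is genuinely \emph{piecewise} polynomial on its support, so the delicate step is to pin down the regular chamber relevant to Example~\ref{exam:qu4it} and to verify that on that chamber the six half-space constraints defining $R$ degenerate to a single simplicial region on which the volume is one cubic, rather than a chamber-dependent sum; equivalently, one must check that no interior wall of the cone crosses this chamber. To organize the inequality tracking and to obtain an independent verification, I would exploit the symmetry of the four ``crossing'' roots $\alpha_{13},\alpha_{14},\alpha_{23},\alpha_{24}\cong(\pm u,\pm u,1)$ by first forming $H_{-\alpha_{13}}*H_{-\alpha_{14}}*H_{-\alpha_{23}}*H_{-\alpha_{24}}$, whose support has nonpositive $\bsh_z$-coordinate and whose density is easy to write down, and only then convolving with $H_{-\alpha_{12}}$ and $H_{-\alpha_{34}}$, which merely smear the result along the $\bsh_x$- and $\bsh_y$-axes respectively.
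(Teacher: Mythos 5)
Your setup is sound and is, at bottom, the same computation the paper performs: the paper also realizes the density as the volume of a three-dimensional fiber polytope obtained by a linear elimination of three of the six Heaviside parameters. The only differences are cosmetic — the paper works in the orthonormal coordinates \eqref{eq:hxhyhz} and integrates out the parameters attached to $\alpha_{12},\alpha_{34},\alpha_{23}$ (its change of variables $s_1=u(2t_1-t_3)$, $s_2=u(2t_2-t_3)$, $s_3=t_3$), leaving an indicator integral over the remaining three, whereas you work in eigenvalue coordinates and eliminate the node-$4$ variables $t_{14},t_{24},t_{34}$. Both eliminations are legitimate, so this is not a genuinely different route.

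The genuine gap is that your proposal stops exactly where the content of the proposition begins. You never evaluate the nested integral over $(t_{12},t_{13},t_{23})$, never fix the constant $c$, and you explicitly defer the chamber bookkeeping as ``the main obstacle'' — but that bookkeeping is not a side condition; it \emph{is} the proof. Concretely, your region is cut out by $t_{12},t_{13},t_{23}\geqslant 0$ together with $t_{12}+t_{13}\leqslant-\nu_1$, $t_{23}-t_{12}\leqslant-\nu_2$, $t_{13}+t_{23}\geqslant\nu_3$, and the volume of such a region is genuinely piecewise cubic in $\nu$: which of the six inequalities are binding changes from chamber to chamber of the type-$A_3$ cone, and a single polynomial as in \eqref{eq:pdf3} can at best be the correct branch on \emph{one} chamber. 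Hence ``collecting terms should reproduce \eqref{eq:pdf3}'' does not follow from homogeneity and a degree count; on other chambers the identical procedure yields different cubics, and nothing in your argument selects the right one or even shows \eqref{eq:pdf3} occurs as a branch at all. To close the gap you must (i) identify the chamber relevant to Example~\ref{exam:qu4it}, (ii) verify which constraints are binding there so that the volume collapses to a single closed-form integral, and (iii) actually perform that integral and match normalizations. Your instinct that this is delicate is well founded: the paper's own proof commits to a specific reduction at this exact point — after its second change of variables it integrates over the box $\Br{0,\abs{z}}\times\Br{0,\sqrt{2}(\sqrt{2}\abs{z}-x)}\times\Br{0,\sqrt{2}(\sqrt{2}\abs{z}-y)}$, discarding the positivity constraints on $t_4,t_5,t_6$, which is legitimate only on the chamber where the retained upper bounds are the binding ones. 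So the wall-crossing question you raise cannot be postponed; answering it is precisely the missing step that separates your plan from a proof.
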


\begin{proof}
Now the measure $H_{-\alpha_{12}}*H_{-\alpha_{13}}*H_{-\alpha_{14}}*
H_{-\alpha_{23}}*H_{-\alpha_{24}}*H_{-\alpha_{34}}$ is represented
by (via $(\bsh_x,\bsh_y,\bsh_z)$)
$$
\delta_{(0,0,0)}*H_{(-2u,0,0)}*H_{(-u,u,-1)}*H_{(-u,-u,-1)}*
H_{(u,u,-1)}*H_{(u,-u,-1)}*H_{(0,-2u,0)}.
$$
Note that
$t_1(-2u,0,0)+t_2(0,-2u,0)+t_3(u,u,-1)+t_4(-u,u,-1)+t_5(-u,-u,-1)+t_6(u,-u,-1)$
is equal to
\begin{eqnarray*}
(u (- 2t_1 + t_3), u(- 2t_2 + t_3), - t_3) +(u (- t_4 - t_5 + t_6),
u(t_4 - t_5 - t_6), - t_4 - t_5 - t_6).
\end{eqnarray*}
Its density of the above measure can be computed in the following:
\begin{eqnarray*}
\iiint\iiint_{\real^6_+}\dif t_1\dif t_2\dif t_3\dif t_4\dif t_5\dif
t_6\delta\Pa{\Pa{\begin{array}{c}
           u (- 2t_1 + t_3) \\
          u(- 2t_2 + t_3) \\
          - t_3
        \end{array}
}+\Pa{\begin{array}{c}
           u (- t_4 - t_5 + t_6) \\
          u(t_4 - t_5 - t_6) \\
          - t_4 - t_5 - t_6
        \end{array}
}-\Pa{\begin{array}{c}
        x \\
        y \\
        z
      \end{array}
}},
\end{eqnarray*}
where $\nu=x\bsh_x+y\bsh_y+z\bsh_z\cong(x,y,z)$. Then performing
change of variables $s_1=u (2t_1 - t_3), s_2=u(2t_2 - t_3),
s_3=t_3$, we get the density
\begin{eqnarray*}
&&\frac12\iiint\iiint\dif s_1\dif s_2\dif s_3\dif t_4\dif t_5\dif
t_6\delta\Pa{\Pa{\begin{array}{c}
           -s_1 \\
          -s_2 \\
          -s_3
        \end{array}
}-\Pa{\begin{array}{c}
           u (t_4 + t_5 - t_6)+x \\
          u(-t_4 + t_5 + t_6)+y \\
          t_4 + t_5 + t_6+z
        \end{array}
}}\\
&&=\frac12\iiint\dif t_4\dif t_5\dif t_6
\mathbf{\I}_{\sD}\Pa{\begin{array}{c}
           u (t_4 + t_5 - t_6)+x \\
          u(-t_4 + t_5 + t_6)+y \\
          t_4 + t_5 + t_6+z
        \end{array}
},
\end{eqnarray*}
where $\sD$ is the positive cone generated by
$(-2u,0,0),(0,-2u,0),(u,u,-1)$. Now
$$
\Pa{u (t_4 + t_5 - t_6)+x, u(-t_4 + t_5 + t_6)+y, t_4 + t_5 +
t_6+z}\in\sD
$$
if and only if
\begin{eqnarray}\label{eq:region456}
\begin{cases}
0\leqslant t_j,j=4,5,6;\\
0\leqslant t_4+t_5+t_6\leqslant \abs{z};\\
0\leqslant 3t_4+3t_5+t_6\leqslant \sqrt{2}(\sqrt{2}\abs{z}-x);\\
0\leqslant t_4+3t_5+3t_6\leqslant \sqrt{2}(\sqrt{2}\abs{z}-y).
\end{cases}
\end{eqnarray}
where
\begin{eqnarray}\label{eq:constraints-xyz}
z\leqslant0,x\leqslant \sqrt{2}\abs{z}, y\leqslant \sqrt{2}\abs{z}.
\end{eqnarray}
Let $R(x,y,z)$ be the region determined by \eqref{eq:region456}
whenever $x,y,z$ satisfying \eqref{eq:constraints-xyz}. Then the
density is given by
$$
\frac12\iiint_{R(x,y,z)}\dif t_4\dif t_5\dif
t_6=\frac12\vol(R(x,y,z)).
$$
Performing change of variables $\alpha=t_4+t_5+t_6,
\beta=3t_4+3t_5+t_6, \gamma=t_4+3t_5+3t_6$, we obtain that $\dif
t_4\dif t_5\dif t_6=\frac14\dif \alpha\dif \beta\dif \gamma$.
Furthermore, we see that
\begin{eqnarray*}
\frac12\iiint_{R(x,y,z)}\dif t_4\dif t_5\dif t_6
&=&\frac12\vol(R(x,y,z))= \frac18\int^{\abs{z}}_0\dif
\alpha\int^{\sqrt{2}(\sqrt{2}\abs{z}-x)}_0\dif
\beta\int^{\sqrt{2}(\sqrt{2}\abs{z}-y)}_0\dif \gamma\\
&=&\frac{\abs{z}(\sqrt{2}\abs{z}-x)(\sqrt{2}\abs{z}-y)}4 =-\frac14
z(x+\sqrt{2}z)(y+\sqrt{2}z).
\end{eqnarray*}
Therefore the measure $H_{(-2u,0,0)}*H_{(-u,u,-1)}*H_{(-u,-u,-1)}*
H_{(u,u,-1)}*H_{(u,-u,-1)}*H_{(0,-2u,0)}$ has the density:
$$
f(x,y,z)=-\frac14 z(x+\sqrt{2}z)(y+\sqrt{2}z),
$$
where $(x,y,z)\in\sC_4:=\Set{(x,y,z)\in\real^3:
z\leqslant0,x+\sqrt{2}z\leqslant0,y+\sqrt{2}z\leqslant0}$. Via
$x=u(\nu_1-\nu_2),y=-u(\nu_1+\nu_2+2\nu_4),z=\nu_1+\nu_2$, we get
Eq.~\eqref{eq:pdf3}. This completes the proof.
\end{proof}

\begin{remark}
The density of the non-Abelian D-H measure is given by the
restriction to the positive Weyl chamber of an alternating sum of
$(4!)^2=576$ copies of the density described in
Proposition~\ref{lem:iter-convol-su4}. Note that the geometry of the
support of the density described in
Proposition~\ref{lem:iter-convol-su4}, it is easily seen that only
summands for points in the region, i.e., $\Set{(x,y,z)\in\real^3:
z\geqslant0,x+\sqrt{2}z\geqslant0,y+\sqrt{2}z\geqslant0}$,
contribute. Clearly this region is a convex cone generated by six
positive roots $\Set{\alpha_{ij}:i<j;i,j\in[4]}$. Based on this, we
can get that the positive Weyl chamber
$\mathrm{i}\liet^*_{\geqslant0}$ is identified with
$\Set{(x,y,z)\in\real^3: x\geqslant0,y\geqslant0,z\geqslant
u(x+y)}$. Theoretically, the specific computation of such density is
workable, see Remark~\ref{rem:qutrit}. But the large number of terms
leads to the increase of computational complexity of this problem,
even running in computer. We remark here that the spectral density
of the mixture of $4\times4$ density matrices can be applied to
study entanglement of a random two-qubit since $4\times4$ density
matrices can be viewed as states of two-qubits. We are encouraging
those people who are interested in this problem. In the following,
as an illustration, we compute the density function over some
subregion of the support of the density function.
\end{remark}

\begin{exam}\label{exam:qu4it}
Denote still $u=\sqrt{\frac12}$. Now each 4-tuple
$\nu=(\nu_1,\nu_2,\nu_3,\nu_4)$, via $(\bsh_x,\bsh_y,\bsh_z)$, is
represented as $\Pa{u(\nu_1-\nu_2), -u(\nu_1+\nu_2+2\nu_4),
\nu_1+\nu_2}:=(x,y,z)$. Let
$\lambda=(\lambda_1,\lambda_2,\lambda_3,\lambda_4)$, where
$\lambda_1=\frac18,\lambda_2=\frac1{12},\lambda_3=-\frac1{24},\lambda_4=-\frac16$
and $\mu=(\mu_1,\mu_2,\mu_3,\mu_4)$, where
$\mu_1=\frac38,\mu_2=\frac1{64},\mu_3=-\frac{11}{64},\mu_4=-\frac7{32}$.
Let $P_0$ be the positive cone generated by three vectors
$\set{\bsx_1=(2u,0,0),\bsx_2=(0,2u,0),\bsx_3=(-u,-u,1)}$, i.e.,
\begin{eqnarray*}
P_0&:=&\real_{\geqslant0}\bsx_1+\real_{\geqslant0}\bsx_2+\real_{\geqslant0}\bsx_3\\
&=&\Set{r_1\bsx_1+r_2\bsx_2+r_3\bsx_3:r_i\in\real_{\geqslant0},i=1,2,3}.
\end{eqnarray*}
And we also let $Q_0:=-P_0$. Denote by $\bsv_j(j=1,\ldots,198)$ the
above points or vectors. Let $Q_j=Q_0+\bsv_j, j=1,\ldots,198$, where
the sum is taken as Minkowski sum, defined as
$A+\bsu:=\set{\bsa+\bsu:\bsa\in A}$. In addition, the region in the
positive Weyl chamber determined by the eigenvalue-vector of a
two-qubit state is identified as
$$
V_0:=\Set{(x,y,z)\in\real^3_+:
0<x<u,0<y<\frac{u-x}3,u(x+y)<z<\frac{1-4uy}2}.
$$
Now we can compute the density function over the following
subregion:{\small
\begin{eqnarray}\label{eq:3Dregion}
V_0\bigcap \Pa{P_0 \bigcap^{198}_{j=1} Q_j} =
\Set{(x,y,z):0<x<\frac{3u}{128},0<y<\frac{6u-256
x}{128},u(x+y)<z\leqslant \frac{3-128u x}{128}}.
\end{eqnarray}}
As an illustration, with the help of \textsc{Mathematica}~10.0, we
compute the probability distribution over the above region
\eqref{eq:3Dregion} that {\scriptsize
\begin{eqnarray*}
f(x,y,z) = \frac{9535}{9437184} - \frac{91x-73y}{24576 \sqrt{2}} -
\frac{13xy}{128} - \frac{487z}{147456} -
\frac{(71x+91y)z}{384\sqrt{2}} + \frac{xyz}2 - \frac{7z^2}{32} +
\frac{(x+y)z^2}{\sqrt{2}} +z^3,
\end{eqnarray*}}
where $(x,y,z)\in V_0\bigcap \Pa{P_0 \bigcap^{198}_{j=1} Q_j}$ and
thus the corresponding density function of $\nu$, as the vector of
eigenvalues of
$U\boldsymbol{\lambda}U^\dagger+V\boldsymbol{\mu}V^\dagger=2\Pa{\rho-\frac{\I_4}4}$,
is given as {\scriptsize
\begin{eqnarray*}
&&f(\nu)\equiv f(\nu_1,\ldots,\nu_4)\\
&&= \frac1{9437184}\left(9535 - 62656\nu_1  - 1339392\nu^2_1 +
7077888\nu^3_1 - 27712\nu_2 -
1892352\nu_1\nu_2+16515072\nu^2_1\nu_2-552960\nu^2_2+11796480\nu_1\nu^2_2\right.\\
&&\left.~~~~~~~~~~~~~~~~~~+2359296\nu^3_2-28032\nu_4+3194880\nu_1\nu_4-14155776\nu^2_1\nu_4+1277952\nu_2\nu_4-18874368\nu_1\nu_2\nu_4-4718592\nu^2_2\nu_4\right),
\end{eqnarray*}}
where $\nu_1+\nu_2+\nu_3+\nu_4=0$. Since we have already known that
the support of the density is contained in $V_0$, it follows that
the density over the other subregion of the support can also be
computed analogously. We are not going to continue this topic here.
\end{exam}

\section{An application in quantum information
theory}\label{sect:app}

As already known, in order to quantify quantum coherence existing in
quantum states, Baumgratz \emph{et al} proposed that any
non-negative function $\sC$, defined over the state space, should
satisfy three properties \cite{Baumgratz2014}, one of which is the
property that \emph{the coherence measure should be non-increasing
under mixing of quantum states}, that is, it should be convex.
Clearly the rationality of such requirement is from physical
motivation. We shall make an attempt to explain why 'mixing reduces
coherence' statistically. Because the coherence measure can be
defined for many different ways. Here we take the so-called
\emph{relative entropy of coherence} as the coherence measure. As
noted in \cite{Lin2018}, the relative entropy of coherence is a
well-defined measure of coherence and satisfies all the required
properties of coherence. Moreover the relative entropy of coherence
has also a novel operational interpretation in terms of hypothesis
testing \cite{Berta2017}.

Now we can use our results, obtained in this paper, to give some
hints to the intuition in which the quantum coherence
\cite{Baumgratz2014} decreases statistically as the mixing times $n$
increasing in the equiprobable mixture of $n$ qudits. The
mathematical definition of the relative entropy of coherence can be
given as $\sC_r(\rho):=\rS(\rho^D)-\rS(\rho)$, where $\rho^D$ is the
diagonal part of the quantum state $\rho$ with respect to a prior
fixed orthonormal basis, and $\rS(\rho):=-\Tr{\rho\ln\rho}$ is the
von Neumann entropy. Denote $\overline{\sC}^{(n)}_r$ the average
coherence of the equiprobable mixture of $n$ i.i.d. random quantum
states from the Hilbert-Schmidt ensemble, i.e.,
$\overline{\sC}^{(n)}_r = \overline{\rS}^D_n-\overline{\rS}_n$,
where
\begin{eqnarray*}
\overline{\rS}_n:=\mathbb{E}_{\rho_1,\ldots,\rho_n\in\cE_{d,k}}\Br{\rS\Pa{\frac{\rho_1+\cdots+\rho_n}n}},\quad\overline{\rS}^D_n:=\mathbb{E}_{\rho_1,\ldots,\rho_n\in\cE_{d,k}}\Br{\rS\Pa{\frac{\rho^D_1+\cdots+\rho^D_n}n}}.
\end{eqnarray*}

\begin{prop}\label{prop:2-level}
(i) Assume that $r_1,r_2\in(0,1)$. The average entropy,
$\overline{\rS}_2(r_1,r_2)$, of the equiprobable mixture of two
random density matrices chosen from orbits $\cO_{\frac{1-r_1}2}$ and
$\cO_{\frac{1-r_2}2}$, respectively, is given by the formula:
\begin{eqnarray*}
\overline{\rS}_2(r_1,r_2)=\int^{r_+}_{r_-}
\rH_2\Pa{\frac{1-r}{2}}p(r|r_1,r_2)\dif r,
\end{eqnarray*}
where $\rH_2(x):=-x\log_2x-(1-x)\log_2(1-x)$ is the so-called binary
entropy function, and $p(r|r_1,r_2)$ is taken from \eqref{eq:rr1r2}.
Furthermore, we have
\begin{eqnarray*}
\overline{\rS}_2&:=&\mathbb{E}_{\rho_1,\rho_2\in\cE_{2,2}}\Br{\rS\Pa{\frac{\rho_1+\rho_2}2}}=\int^1_0\int^1_0\overline{\rS}_2(r_1,r_2)3r^2_13r^2_2\dif
r_1\dif r_2 \\
&=&\int^1_0 \rH_2\Pa{\frac{1-r}{2}} p^{(2)}(r)\dif r= \frac{221}{140
\ln2}-\frac{53}{35}\simeq 0.763111.
\end{eqnarray*}
(ii) The average entropy, $\overline{\rS}^D_2(a,b)$, of the diagonal
part of the equiprobable mixture of two random density matrices
chosen from orbits $\cO_a$ and $\cO_b$, where $a,b\in(0,\frac12)$,
is given by the formula:
\begin{eqnarray*}
\overline{\rS}^D_2(a,b)=\int^{1-t_0}_{t_0}\rH_2(x)q(x|a,b)\dif x,
\end{eqnarray*}
where $t_0=\frac{a+b}2$ and $t_1=\frac{1-\abs{a-b}}2$ for given $a$
and $b$. Furthermore, we have
\begin{eqnarray*}
\overline{\rS}^D_2:=\int^1_0\int^1_0\overline{\rS}^D_2\Pa{\frac{1-r_1}2,\frac{1-r_2}2}3r^2_13r^2_2\dif
r_1\dif r_2\simeq 0.92414.
\end{eqnarray*}
Therefore, we have that $\overline{\sC}^{(2)}_r=\overline{\rS}^D_2 -
\overline{\rS}_2 \simeq 0.16103$.
\end{prop}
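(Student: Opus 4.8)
The plan is to reduce both averages to one–dimensional integrals against the Bloch–length densities already computed, and then to evaluate those integrals. The starting observation is that a qubit $\rho(\bsr)$ with Bloch vector of length $r$ has eigenvalues $\frac12(1\pm r)$ by the relation $\lambda_{\pm}(\rho(\bsr))=\frac12(1\pm r)$, so its von Neumann entropy is exactly $\rH_2\Pa{\frac{1-r}{2}}$. Hence, conditioning on the fixed lengths $r_1,r_2$ and averaging over the Haar–distributed unitaries, the length $r$ of the equiprobable mixture $\rho(\bsr)=\frac12(\rho(\bsr_1)+\rho(\bsr_2))$ is governed by the conditional density $p(r|r_1,r_2)$ of Corollary~\ref{cor:rr1r2} (the $w=\tfrac12$ case of \eqref{eq:rr1r2}), supported on $[r_-,r_+]=[\tfrac12\abs{r_1-r_2},\tfrac12(r_1+r_2)]$. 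By the law of the unconscious statistician this gives $\overline{\rS}_2(r_1,r_2)=\int^{r_+}_{r_-}\rH_2\Pa{\frac{1-r}{2}}p(r|r_1,r_2)\dif r$, which is the first formula.

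For the unconditional average I would invoke the tower property: averaging $\overline{\rS}_2(r_1,r_2)$ against the single–qubit length density $p^{(1)}(r_1)p^{(1)}(r_2)=9r_1^2r_2^2$ from \eqref{eq:Bloch-lenghth} and then swapping the order of integration (justified by Tonelli, since $\rH_2\geqslant0$ and all densities are nonnegative) lets the inner $(r_1,r_2)$–integral collapse, because $\iint_{R(r)}p(r|r_1,r_2)p^{(1)}(r_1)p^{(1)}(r_2)\dif r_1\dif r_2$ is by definition the mixture density $p^{(2)}(r)$ of Theorem~\ref{th:2}. This yields $\overline{\rS}_2=\int^1_0\rH_2\Pa{\frac{1-r}{2}}p^{(2)}(r)\dif r$ with $p^{(2)}(r)=12r^2(r^3-3r+2)$. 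The remaining task is explicit evaluation: expand $\rH_2\Pa{\frac{1-r}{2}}=-\frac{1-r}{2}\log_2\frac{1-r}{2}-\frac{1+r}{2}\log_2\frac{1+r}{2}$ and integrate the resulting terms $r^k\log_2(1\pm r)$ over $[0,1]$ by parts; collecting the $\tfrac1{\ln2}$ factors produces the closed form $\frac{221}{140\ln2}-\frac{53}{35}\simeq0.763111$.

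Part (ii) runs along the same lines, with eigenvalues replaced by diagonal entries. A qubit mixture with diagonal part $\diag(x,1-x)$ has $\rS(\rho^D)=\rH_2(x)$, and the density of $x$ is $q(x|a,b)$ from \eqref{eq:diag0.5}, supported on $[t_0,1-t_0]$; hence $\overline{\rS}^D_2(a,b)=\int^{1-t_0}_{t_0}\rH_2(x)q(x|a,b)\dif x$. Substituting $a=\frac{1-r_1}{2}$ and $b=\frac{1-r_2}{2}$ and averaging against $9r_1^2r_2^2$ gives $\overline{\rS}^D_2$; because the piecewise–linear kernel $q$ does not collapse into a single tidy polynomial density as in part (i), I would evaluate this double integral termwise over the three branches of \eqref{eq:diag0.5} and their induced sub–regions, arriving at the numerical value $\simeq0.92414$. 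The average coherence then follows immediately by subtraction, $\overline{\sC}^{(2)}_r=\overline{\rS}^D_2-\overline{\rS}_2\simeq0.16103$.

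The main obstacle is purely computational rather than conceptual: in part (i) the closed form demands careful bookkeeping of the logarithmic integrals $\int_0^1 r^k\log_2(1\pm r)\dif r$ and their limiting behaviour near $r=1$, while part (ii) requires splitting the $(r_1,r_2)$–domain according to the branches of $q$ and tracking the resulting regions. Neither step is hard in principle, but both are error–prone, so I would corroborate the final constants with a computer–algebra check.
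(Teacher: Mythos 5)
Your proposal is correct and follows essentially the same route as the paper's own proof: express the entropy of the mixture as $\rH_2\Pa{\frac{1-r}{2}}$ via the Bloch length, average conditionally using $p(r|r_1,r_2)$ from Corollary~\ref{cor:rr1r2}, then integrate against $p^{(1)}(r_1)p^{(1)}(r_2)=9r_1^2r_2^2$ so that the inner integral collapses to $p^{(2)}(r)$, and treat the diagonal part analogously with $q(x|a,b)$ before subtracting. Your explicit invocations of Tonelli and the tower property merely make rigorous the interchange of integrals that the paper performs implicitly, so there is no substantive difference.
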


\begin{proof}
Let $\rho_1\in \cO_{\frac{1-r_1}2}$ and $\rho_2\in
\cO_{\frac{1-r_2}2}$. By using Bloch representation
\eqref{eq:Bloch-rep}, we can rewrite then as $\rho_1=\rho(\bsr_1)$
and $\rho_2=\rho(\bsr_2)$, respectively, where $r_1=\abs{\bsr_1}$
and $r_2=\abs{\bsr_2}$. Thus
$\rho(\bsr)=\frac12(\rho(\bsr_1)+\rho(\bsr_2))$. We see that the von
Neumann entropy of the qubit $\rho(\bsr)$ is given by
\begin{eqnarray}
\rS(\rho(\bsr)) = \rH_2\Pa{\frac{1-r}2} =
-\frac{1+r}2\log_2\frac{1+r}2-\frac{1-r}2\log_2\frac{1-r}2.
\end{eqnarray}
It is easily seen that the average entropy of the equiprobable
mixture of two random density matrices with given spectra is denoted
by $\overline{\rS}_2(r_1,r_2)$, which is given by
\begin{eqnarray}
\overline{\rS}_2(r_1,r_2) = \iint
\dif\mu_{\mathrm{Haar}}(U)\dif\mu_{\mathrm{Haar}}(V)\rS\Pa{\frac{U\rho(\bsr_1)U^\dagger+V\rho(\bsr_2)V^\dagger}2}.
\end{eqnarray}
Here $U,V$ are in $\rS\rU(2)$, and $\mu_{\mathrm{Haar}}$ is the
normalized Haar measure over the special unitary group $\rS\rU(2)$.
We see from Proposition~\ref{cor:rr1r2} that
\begin{eqnarray}
\overline{\rS}_2(r_1,r_2) = \int^{r_+}_{r_-}
\rH_2\Pa{\frac{1-r}{2}}p(r|r_1,r_2)\dif r,
\end{eqnarray}
where $r_-=\frac{r_1-r_2}2$ and $r_+=\frac{r_1+r_2}2$. Furthermore,
for $\rho_1,\rho_2$ chosen independently in $\cE_{2,2}$ by
\eqref{eq:Bloch-lenghth}, we have that the average entropy of the
mixture $\rho=\frac{\rho_1+\rho_2}2$ is given
\begin{eqnarray*}
\overline{\rS}_2:=\mathbb{E}_{\rho_1,\rho_2\in\cE_{2,2}}\Br{\rS\Pa{\frac{\rho_1+\rho_2}2}}
=\int^1_0\int^1_0\overline{\rS}_2(r_1,r_2)p(r_1)p(r_2)\dif r_1\dif
r_2\simeq0.76311.
\end{eqnarray*}
Here $p(r_1)=3r^2_1$ and $p(r_2)=3r^2_2$. Next the average entropy
of diagonal of mixture of two random density matrices for qubits is
directly obtained. Therefore, we have the desired conclusion:
$\overline{\sC}^{(2)}_r=\overline{\rS}^D_2 - \overline{\rS}_2 \simeq
0.16103$ \footnote{All the numerical values in the paper are
approximately computed by the computer software \textsc{Mathematica}
10.}.
\end{proof}

\begin{prop}\label{prop:3-level}
Assume that $r_i\in(0,1)(i=1,2,3)$. (i) The average entropy,
$\overline{\rS}_3(r_1,r_2,r_3)$, of the equiprobable mixture of
three random density matrices chosen from orbits
$\cO_{\frac{1-r_1}2}$, $\cO_{\frac{1-r_2}2}$, and
$\cO_{\frac{1-r_3}2}$, respectively, is given by the formula:
\begin{eqnarray*}
\overline{\rS}_3(r_1,r_2,r_3)=\int^{\frac{r_1+r_2+r_3}3}_0
\rH_2\Pa{\frac{1-r}{2}}p(r|r_1,r_2,r_3)\dif r,
\end{eqnarray*}
where $\rH_2(x)$ is the binary entropy function mentioned
previously, and $p(r|r_1,r_2,r_3)$ is taken from \eqref{eq:p(r)}.
Furthermore, we have
\begin{eqnarray*}
\overline{\rS}_3&:=&\mathbb{E}_{\rho_1,\rho_2,\rho_3\in\cE_{2,2}}\Br{S\Pa{\frac{\rho_1+\rho_2+\rho_3}3}}
=\int^1_0\int^1_0\int^1_0\overline{S}_3(r_1,r_2,r_3)3r^2_13r^2_23r^2_3\dif
r_1\dif r_2\dif r_3\\
&=&\int^1_0 \rH_2\Pa{\frac{1-r}{2}} p^{(3)}(r)\dif r = \frac{57821 +
94464 \ln2 - 105498 \ln3}{12600 \ln2}\simeq0.84696.
\end{eqnarray*}
(ii) The average entropy, $\overline{\rS}^D_3(r_1,r_2,r_3)$, of the
diagonal part of the equiprobable mixture of three random density
matrices chosen from orbits
$\cO_{\frac{1-r_1}2},\cO_{\frac{1-r_2}2}$ and $\cO_{\frac{1-r_3}2}$,
respectively, is given by the formula:
\begin{eqnarray*}
\overline{\rS}^D_3(r_1,r_2,r_3)=\int^1_0\rH_2(x)q(x|r_1,r_2,r_3)\dif
x
\end{eqnarray*}
Furthermore, we have
\begin{eqnarray*}
\overline{\rS}^D_3:=\int^1_0\int^1_0\int^1_0\overline{\rS}^D_3(r_1,r_2,r_3)3r^2_13r^2_23r^2_3\dif
r_1\dif r_2\dif r_3\approx0.95026.
\end{eqnarray*}
Therefore, we see that $\overline{\sC}^{(3)}_r=\overline{\rS}^D_3 -
\overline{\rS}_3 \simeq0.10329$.
\end{prop}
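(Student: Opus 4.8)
The plan is to follow the template of Proposition~\ref{prop:2-level}, reducing every quantity to a one-dimensional integral in the Bloch length. For a single qubit the von Neumann entropy depends only on $r=\abs{\bsr}$ through $\lambda_\pm=\tfrac{1\pm r}2$, so $\rS(\rho(\bsr))=\rH_2\Pa{\tfrac{1-r}2}$. Writing $\rho_j=\rho(\bsr_j)$ with $r_j=\abs{\bsr_j}$ and forming $\rho(\bss)=\tfrac13\sum_{j=1}^3\rho(\bsr_j)$, the Haar-average over the three unitaries randomizes the orientations of the $\bsr_j$, and the resulting conditional law of $s=\abs{\bss}$ with the lengths $r_1,r_2,r_3$ held fixed is exactly $p(r\mid r_1,r_2,r_3)$ from \eqref{eq:p(r)}. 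Hence $\overline{\rS}_3(r_1,r_2,r_3)=\int_0^{(r_1+r_2+r_3)/3}\rH_2\Pa{\tfrac{1-r}2}p(r\mid r_1,r_2,r_3)\dif r$; the ordering $r_1\leqslant r_2\leqslant r_3$ and the constraint $r_1+r_2\geqslant r_3$ required by the corollary yielding \eqref{eq:p(r)} may be assumed without loss of generality, since the entropy functional is symmetric in the three input states.

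Next I would average over the lengths, each distributed with density $p^{(1)}(r_j)=3r_j^2$ from \eqref{eq:Bloch-lenghth}. The key observation is that integrating $p(r\mid r_1,r_2,r_3)$ against $\prod_{j}3r_j^2$ over the unit cube reproduces precisely the marginal Bloch-length density $p^{(3)}(r)$ already computed in Theorem~\ref{th:345}(i); this collapses the triple integral and, crucially, sidesteps the case distinction between $r_1+r_2\geqslant r_3$ and $r_1+r_2<r_3$, since $p^{(3)}$ already incorporates both regimes. Thus $\overline{\rS}_3=\int_0^1\rH_2\Pa{\tfrac{1-r}2}p^{(3)}(r)\dif r$ with $p^{(3)}$ the two-piece function branching at $r=\tfrac13$.

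The computational heart is this last integral. Expanding $\rH_2\Pa{\tfrac{1-r}2}=1-\tfrac1{2\ln2}\Br{(1+r)\ln(1+r)+(1-r)\ln(1-r)}$ and inserting the polynomials $f^{(3)}_L,f^{(3)}_R$, the integral splits into elementary polynomial moments plus terms of the type $\int r^k\ln(1\pm r)\dif r$ over $\Br{0,\tfrac13}$ and $\Br{\tfrac13,1}$, each handled by integration by parts. Evaluation at the breakpoint $r=\tfrac13$ produces $\ln\tfrac23$ and $\ln\tfrac43$, i.e.\ the $\ln3$ contributions, while the endpoint $r=1$ supplies further $\ln2$ terms; collecting everything yields the stated closed form $\tfrac{57821+94464\ln2-105498\ln3}{12600\ln2}\simeq0.84696$. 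I expect the main obstacle to lie here, in the bookkeeping of the logarithmic antiderivatives across the two subintervals and the combination of boundary terms so that the $\ln3$ survives with exactly the displayed rational coefficients.

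For part (ii) the argument is parallel: the diagonal part $\rho^D_s=\diag(x,1-x)$ has entropy $\rH_2(x)$, and Haar-averaging yields the density $q(x\mid a,b,c)$ of \eqref{eq:q(x|a,b,c)} with $a=\tfrac{1-r_1}2,\,b=\tfrac{1-r_2}2,\,c=\tfrac{1-r_3}2$, so $\overline{\rS}^D_3(r_1,r_2,r_3)=\int_0^1\rH_2(x)q(x\mid r_1,r_2,r_3)\dif x$ and $\overline{\rS}^D_3=\iiint\overline{\rS}^D_3(r_1,r_2,r_3)\prod_j 3r_j^2\dif r_j$. Because the breakpoints $T_0,\ldots,T_3$ of $q$ depend on $(a,b,c)$, no clean closed form is expected and this triple integral is evaluated numerically, giving $\overline{\rS}^D_3\approx0.95026$. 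Subtracting the two averages then yields $\overline{\sC}^{(3)}_r=\overline{\rS}^D_3-\overline{\rS}_3\simeq0.10329$, completing the proof.
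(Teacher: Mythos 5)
Your proposal is correct and follows essentially the route the paper intends: the paper omits its own proof of this proposition as ``conceptually simple,'' i.e.\ the argument of Proposition~\ref{prop:2-level} carried over, which is exactly what you do (conditional density in the Bloch length, averaging against $p^{(1)}(r_j)=3r_j^2$, reduction to a one-dimensional entropy integral against $p^{(3)}$ from Theorem~\ref{th:345}, and numerical evaluation for the diagonal part).

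One correction is needed in your justification. The claim that the constraint $r_1+r_2\geqslant r_3$ ``may be assumed without loss of generality'' by symmetry is false: symmetry of the entropy functional only buys you the ordering $r_1\leqslant r_2\leqslant r_3$, while the triangle-type condition genuinely fails on part of the cube (e.g.\ $r_1=r_2=\tfrac1{10}$, $r_3=\tfrac9{10}$), and there \eqref{eq:p(r)} is not the correct conditional law --- the true density vanishes for $r<(r_3-r_1-r_2)/3$, as the sign factor in Theorem~\ref{prop:3Qbit} shows. This does not sink your computation, because, as you yourself observe, the law of total expectation lets you write $\overline{\rS}_3=\int_0^1\rH_2\bigl(\tfrac{1-r}2\bigr)p^{(3)}(r)\,\dif r$ with $p^{(3)}$ taken from Theorem~\ref{th:345}, which was derived independently of \eqref{eq:p(r)} and already incorporates both regimes; so the conditional formula in part (i) should simply be read under the hypotheses of the corollary containing \eqref{eq:p(r)} (as the paper's statement implicitly does), and the remainder of your argument, including the closed-form evaluation producing the $\ln 3$ terms from the breakpoint $r=\tfrac13$, stands.
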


\begin{proof}
Conceptually, the idea of the proof is quite simple. Thus we omit it
here.
\end{proof}

Similarly, we get also that
\begin{eqnarray*}
\overline{\rS}_4= \int^1_0 \rH_2\Pa{\frac{1-r}{2}} p^{(4)}(r)\dif
r=\frac{22469023 + 25336464\ln2 - 35429400\ln3}{1801800
\ln2}\simeq0.886969.
\end{eqnarray*}

For any natural number $n\geqslant2$ and
$\Upsilon:=\sum^n_{j=1}\rho_j$, where $\rho_j$'s are independent and
identical distribution (i.i.d.) chosen from $\cE_{d,d}$, the
Hilbert-Schmidt ensemble, we have already known that
\cite{Zhang2018}:
\begin{eqnarray}\label{eq:eigen-part}
\mathbb{E}_{\rho_1,\ldots,\rho_n\in\cE_{d,d}}\Br{\rS\Pa{\frac{\sum^n_{j=1}\rho_j}n}}\geqslant\mathbb{E}_{\rho_1,\ldots,\rho_{n-1}\in\cE_{d,d}}\Br{\rS\Pa{\frac{\sum^{n-1}_{j=1}\rho_j}{n-1}}}.
\end{eqnarray}
By using the technique in the proof of the above inequality, we show
next that
\begin{eqnarray}\label{eq:diag-part}
\mathbb{E}_{\rho_1,\ldots,\rho_n\in\cE_{d,d}}\Br{\rS\Pa{\frac{\sum^n_{j=1}\rho^D_j}n}}\geqslant\mathbb{E}_{\rho_1,\ldots,\rho_{n-1}\in\cE_{d,d}}\Br{\rS\Pa{\frac{\sum^{n-1}_{j=1}\rho^D_j}{n-1}}}.
\end{eqnarray}
Indeed,
$\frac{\Upsilon}n=\frac1n\Pa{\sum^n_{j=1}\frac{\Upsilon-\rho_j}{n-1}}$.
Furthermore, its diagonal part is given by
$\frac{\Upsilon^D}n=\frac1n\Pa{\sum^n_{j=1}\frac{\Upsilon^D-\rho^D_j}{n-1}}$.
Due to the concavity of von Neumann entropy, we see that
\begin{eqnarray*}
\rS\Pa{\frac{\rho^D_1+\cdots+\rho^D_n}{n}}\geqslant
\frac1n\sum^n_{j=1}\rS\Pa{\frac{\Upsilon^D-\rho^D_j}{n-1}}.
\end{eqnarray*}
Since $\rho_1,\ldots,\rho_n$ are i.i.d., it follows that
$\rho^D_1,\ldots,\rho^D_n$ are i.i.d. as well. We have that, for
each $j=1,\ldots,n$
\begin{eqnarray*}
\mathbb{E}_{\rho_1,\ldots,\rho_n\in\cE_{d,d}}\Br{\rS\Pa{\frac{\Upsilon^D-\rho^D_j}{n-1}}}=\cdots=\mathbb{E}_{\rho_1,\ldots,\rho_{n-1}\in\cE_{d,d}}\Br{\rS\Pa{\frac{\sum^{n-1}_{j=1}\rho^D_j}{n-1}}}.
\end{eqnarray*}
Therefore, we get the desired inequality \eqref{eq:diag-part}. Now
we see from Proposition~\ref{prop:2-level} and
Proposition~\ref{prop:3-level} that
$\overline{\rS}_3>\overline{\rS}_2$ and
$\overline{\rS}^D_3>\overline{\rS}^D_2$. Moreover, we have that
$\overline{\rS}^D_3 - \overline{\rS}_3 <\overline{\rS}^D_2 -
\overline{\rS}_2$, i.e.,
$\overline{\sC}^{(3)}_r<\overline{\sC}^{(2)}_r$, as mentioned in
\cite{Zhang2018}. We also confirm the first strict inequality in the
\emph{conjecture} proposed in \cite{Zhang2018}:
\begin{eqnarray}\label{eq:conj-eigen}
\mathbb{E}_{\rho_1,\rho_2\in\cE_{2,2}}\Br{\rS\Pa{\frac{\rho_1+\rho_2}{2}}}&<&\mathbb{E}_{\rho_1,\rho_2,\rho_3\in\cE_{2,2}}\Br{\rS\Pa{\frac{\rho_1+\rho_2+\rho_3}{3}}}\notag\\
&<&\cdots<
\mathbb{E}_{\rho_1,\ldots,\rho_n\in\cE_{2,2}}\Br{\rS\Pa{\frac{\rho_1+\cdots+\rho_n}{n}}}.
\end{eqnarray}
Of course, we have a similar conjecture:
\begin{eqnarray}\label{eq:conj-diag}
\mathbb{E}_{\rho_1,\rho_2\in\cE_{2,2}}\Br{\rS\Pa{\frac{\rho^D_1+\rho^D_2}{2}}}&<&\mathbb{E}_{\rho_1,\rho_2,\rho_3\in\cE_{2,2}}\Br{\rS\Pa{\frac{\rho^D_1+\rho^D_2+\rho^D_3}{3}}}\notag\\
&<&\cdots<
\mathbb{E}_{\rho_1,\ldots,\rho_n\in\cE_{2,2}}\Br{\rS\Pa{\frac{\rho^D_1+\cdots+\rho^D_n}{n}}}.
\end{eqnarray}
Furthermore, we can propose the following conjecture based on the
above two observations \eqref{eq:conj-eigen} and
\eqref{eq:conj-diag}:
\begin{eqnarray}
\overline{\sC}^{(2)}_r>\overline{\sC}^{(3)}_r>\cdots>\overline{\sC}^{(n)}_r
\end{eqnarray}
for arbitrary natural number $n>3$ and
$\lim_{n\to\infty}\overline{\sC}^{(n)}_r=0$. Thus, in the qubit
case, we find that the quantum coherence monotonously decreases
statistically as the mixing times $n$. Moreover, we believe that the
quantum coherence approaches zero when $n\to\infty$. Our work
suggests that 'mixing reduces coherence'.

\section{Concluding remarks}\label{sect:con-rem}

In this paper, we relate the (equiprobable) probabilistic mixture of
adjoint orbits of quantum states to Duistermaat-Heckman measure, and
obtain theoretically the spectral density of such mixture. As an
illustration, we compute analytically the spectral densities for
mixtures consisting of 2,3,4, and 5 random qubit states. In the
qubit case, we also demonstrate the density function of a generic
eigenvalue by drawing its corresponding graph in the coordinate
system. As one application of our results, we use them to explain
why 'mixing reduces coherence' by computing the average coherence of
such mixture in the qubit state space. It is also interesting to
consider the limiting distribution of mixing arbitrary $n$
isospectral qudit density matrices.

Besides, a special case of our problem considered in
\eqref{eq:multi-orbits} is that all $\boldsymbol{\lambda}^j$ are the
same $\boldsymbol{\lambda}$. In such case,
$\boldsymbol{\lambda}(\rho_s)\prec\boldsymbol{\lambda}$, where
$\boldsymbol{\lambda}(\rho_s)$ is the vector of eigenvalues of
$\rho_s$. Inversely, Daftuar and Patrick's result \cite[Corollary
2.7.]{Daftuar2005} tells us that if a matrix $\sigma$ can be written
as a convex combination of unitary conjugations of a fixed Hermitian
matrix $\rho$ with $N$ terms, then it can be represented
equiprobable mixture of $N$ isospectral Hermitian matrices with
defined spectrum. This is the reason why we have only considered the
equiprobable mixture of quantum states. In addition, the mixture of
$N$ copies of the same quantum state corresponds to a special unital
quantum channel. This point of view can be used to investigate some
statistical properties of a random unital quantum channel in this
subclass. They also connecting Horn's problem with state
transformation in quantum information theory, which is intimately
related to LOCC interconvertion of bipartite pure states. We will
continue to study related problems along this direction in the
future research.

\subsubsection*{Acknowledgments}

This research is supported by National Natural Science Foundation of
China under Grant no.11971140, and also by Zhejiang Provincial
Natural Science Foundation of China under Grant No. LY17A010027 and
NSFC (Nos.11701259,11801123,61771174). Partial material of the
present work is completed during a research visit to Chern Institute
of Mathematics, at Nankai University. LZ are grateful to Prof.
Jing-Ling Chen for his hospitality during his visit. LZ also would
like to thank Prof. Zhi Yin for his invitation to visit Institute
for Advanced Study in Mathematics of HIT, where the discussion with
him and, in particular, with Prof. Ke Li is very helpful in solving
some puzzle later in the present paper. Finally, LZ acknowledges Hua
Xiang for improving the quality of our manuscript.


\newpage

\appendix
\appendixpage
\addappheadtotoc

\section{Derivations of the density functions for $n=3,4,5$}\label{app:a}

\subsection{Derivation of $p^{(3)}(r)$}

As already known, $\rho(\bsr)=\frac13\sum^3_{j=1}\rho_j$ can be
rewritten as via $\rho_j=\rho(\bsr_j)$
\begin{eqnarray*}
\rho(\bsr) =
\frac23\Pa{\frac{\rho(\bsr_1)+\rho(\bsr_2)}2}+\frac13\rho(\bsr_3)=
\frac23\rho(\bsr_{12})+\frac13\rho(\bsr_3).
\end{eqnarray*}
Then we see that
\begin{eqnarray*}
p^{(3)}(r) = \iint_{R_{2/3}(r)}
p_{2/3}(r|r_{12},r_3)p^{(2)}(r_{12})p^{(1)}(r_3)\dif r_{12}\dif r_3,
\end{eqnarray*}
where
\begin{eqnarray*}
p_{2/3}(r|r_{12},r_3) = \frac{9r}{4r_{12}r_3},\quad p^{(2)}(r_{12})
=
12r^2_{12}(r^3_{12}-3r_{12}+2),\quad p^{(1)}(r_3)=3r^2_3,\\
R_{2/3}(r) = \Set{(r_{12},r_3)\in[0,1]^2:
\frac{\abs{2r_{12}-r_3}}3\leqslant r\leqslant \frac{2r_{12}+r_3}3}.
\end{eqnarray*}
Denote $\Delta_3=p_{2/3}(r|r_{12},r_3)p^{(2)}(r_{12})p^{(1)}(r_3)$.\\
(1) If $r\in\Br{\frac23,1}$, then
\begin{eqnarray*}
p^{(3)}(r) = \int^1_{\frac{3r-1}2}\dif r_{12}
\int^1_{3r-2r_{12}}\dif r_3 \Delta_3= f^{(3)}_R(r).
\end{eqnarray*}
(2) If $r\in\Br{\frac13,\frac23}$, then
\begin{eqnarray*}
p^{(3)}(r) = \int^{\frac{3r}2}_{\frac{3r-1}2}\dif r_{12}
\int^1_{3r-2r_{12}}\dif r_3\Delta_3 +\int^1_{\frac{3r}2}\dif
r_{12}\int^1_{2r_{12}-3r}\dif r_3\Delta_3= f^{(3)}_R(r).
\end{eqnarray*}
(3) If $r\in\Br{\frac16,\frac13}$, then
\begin{eqnarray*}
p^{(3)}(r) = \int^{\frac{1-3r}2}_0\dif r_{12}
\int^{3r+2r_{12}}_{3r-2r_{12}}\dif
r_3\Delta_3+\int^{\frac{3r}2}_{\frac{1-3r}2}\dif
r_{12}\int^1_{3r-2r_{12}}\dif r_3\Delta_3 +
\int^{\frac{1+3r}2}_{\frac{3r}2}\dif r_{12}\int^1_{2r_{12}-3r}\dif
r_3\Delta_3= f^{(3)}_L(r).
\end{eqnarray*}
(4) If $r\in\Br{0,\frac16}$, then
\begin{eqnarray*}
p^{(3)}(r) = \int^{\frac{3r}2}_0\dif r_{12}
\int^{3r+2r_{12}}_{3r-2r_{12}}\dif r_3\Delta_3
+\int^{\frac{1-3r}2}_{\frac{3r}2}\dif
r_{12}\int^{2r_{12}+3r}_{2r_{12}-3r}\dif r_3\Delta_3 +
\int^{\frac{1+3r}2}_{\frac{1-3r}2}\dif r_{12}\int^1_{2r_{12}-3r}\dif
r_3\Delta_3= f^{(3)}_L(r).
\end{eqnarray*}
Therefore we get the desired result. $\qed$

\subsection{Derivation of $p^{(4)}(r)$}

\begin{proof}[The first proof]
We rewrite
$\rho(\bsr)=\frac14(\rho(\bsr_1)+\rho(\bsr_2)+\rho(\bsr_3)+\rho(\bsr_4))$
as
\begin{eqnarray*}
\rho(\bsr)=\frac34\Pa{\frac{\rho(\bsr_1)+\rho(\bsr_2)+\rho(\bsr_3)}3}+\frac14\rho(\bsr_4)=\frac34\rho(\bsr_{123})+\frac14\rho(\bsr_4).
\end{eqnarray*}
Then we see that
\begin{eqnarray*}
p^{(4)}(r) =  \iint_{R_{3/4}(r)}
p_{3/4}(r|r_{123},r_4)p^{(3)}(r_{123})p^{(1)}(r_4)\dif r_{123}\dif
r_4
\end{eqnarray*}
where
\begin{eqnarray*}
p_{3/4}(r|r_{123},r_4) = \frac{8r}{3r_{123}r_4},\quad
p^{(3)}(r_{123})=\begin{cases}f^{(3)}_R(r_{123}),& r_{123}\in\Br{\frac13,1},\\
f^{(3)}_L(r_{123}),&r_{123}\in\Br{0,\frac13}.
\end{cases},\quad p^{(1)}(r_4)=3r^2_4,\\
R_{3/4}(r) = \Set{(r_{123},r_4)\in[0,1]^2:
\frac{\abs{3r_{123}-r_4}}4\leqslant r\leqslant
\frac{3r_{123}+r_4}4}.
\end{eqnarray*}
(1) If $r\in\Br{\frac34,1}$, then
\begin{eqnarray*}
p^{(4)}(r) &=& \int^1_{\frac{4r-1}3}\dif
r_{123}\int^1_{4r-3r_{123}}\dif r_4
p_{3/4}(r|r_{123},r_4)f^{(3)}_R(r_{123})p^{(1)}(r_4) = f^{(4)}_R(r).
\end{eqnarray*}
(2) If $r\in\Br{\frac12,\frac34}$, then
\begin{eqnarray*}
p^{(4)}(r) &=& \Pa{\int^{\frac{4r}3}_{\frac{4r-1}3}\dif
r_{123}\int^1_{4r-3r_{123}}\dif r_4 + \int^1_{\frac{4r}3}\dif
r_{123}\int^1_{3r_{123}-4r}\dif r_4}
p_{3/4}(r|r_{123},r_4)f^{(3)}_R(r_{123})p^{(1)}(r_4) \notag\\
&=& f^{(4)}_R(r).
\end{eqnarray*}
(3) If $r\in\Br{\frac14,\frac12}$, then
\begin{eqnarray*}
p^{(4)}(r) &=& \int^{\frac13}_{\frac{4r-1}3}\dif
r_{123}\int^1_{4r-3r_{123}}\dif r_4
p_{3/4}(r|r_{123},r_4)f^{(3)}_L(r_{123})p^{(1)}(r_4)\notag\\
&& +\int^{\frac{4r}3}_{\frac13}\dif r_{123}\int^1_{4r-3r_{123}}\dif
r_4
p_{3/4}(r|r_{123},r_4)f^{(3)}_R(r_{123})p^{(1)}(r_4) \notag\\
&& +\int^{\frac{4r+1}3}_{\frac{4r}3}\dif
r_{123}\int^1_{3r_{123}-4r}\dif r_4
p_{3/4}(r|r_{123},r_4)f^{(3)}_R(r_{123})p^{(1)}(r_4)\notag\\
&=& f^{(4)}_L(r).
\end{eqnarray*}
(4) If $r\in\Br{\frac18,\frac14}$, then
\begin{eqnarray*}
p^{(4)}(r) &=& \int^{\frac{1-4r}3}_0\dif
r_{123}\int^{4r+3r_{123}}_{4r-3r_{123}}\dif r_4
p_{3/4}(r|r_{123},r_4)f^{(3)}_L(r_{123})p^{(1)}(r_4)\notag\\
&& +\int^{\frac{4r}3}_{\frac{1-4r}3}\dif
r_{123}\int^1_{4r-3r_{123}}\dif r_4
p_{3/4}(r|r_{123},r_4)f^{(3)}_L(r_{123})p^{(1)}(r_4) \notag\\
&& +\int^{\frac13}_{\frac{4r}3}\dif r_{123}\int^1_{3r_{123}-4r}\dif
r_4
p_{3/4}(r|r_{123},r_4)f^{(3)}_L(r_{123})p^{(1)}(r_4)\notag\\
&& +\int^{\frac{4r+1}3}_{\frac13}\dif
r_{123}\int^1_{3r_{123}-4r}\dif r_4
p_{3/4}(r|r_{123},r_4)f^{(3)}_R(r_{123})p^{(1)}(r_4)\notag\\
&=& f^{(4)}_L(r).
\end{eqnarray*}
(5) If $r\in\Br{0,\frac18}$, then
\begin{eqnarray*}
p^{(4)}(r) &=& \int^{\frac{4r}3}_0\dif
r_{123}\int^{4r+3r_{123}}_{4r-3r_{123}}\dif r_4
p_{3/4}(r|r_{123},r_4)f^{(3)}_L(r_{123})p^{(1)}(r_4)\notag\\
&& +\int^{\frac{1-4r}3}_{\frac{4r}3}\dif
r_{123}\int^{4r+3r_{123}}_{3r_{123}-4r}\dif r_4
p_{3/4}(r|r_{123},r_4)f^{(3)}_L(r_{123})p^{(1)}(r_4) \notag\\
&& +\int^{\frac13}_{\frac{1-4r}3}\dif
r_{123}\int^1_{3r_{123}-4r}\dif r_4
p_{3/4}(r|r_{123},r_4)f^{(3)}_L(r_{123})p^{(1)}(r_4)\notag\\
&& +\int^{\frac{4r+1}3}_{\frac13}\dif
r_{123}\int^1_{3r_{123}-4r}\dif r_4
p_{3/4}(r|r_{123},r_4)f^{(3)}_R(r_{123})p^{(1)}(r_4)\notag\\
&=& f^{(4)}_L(r).
\end{eqnarray*}
That is,
\begin{eqnarray*}
p^{(4)}(r) = \begin{cases}f^{(4)}_R(r),& r\in\Br{\frac12,1},\\
f^{(4)}_L(r),&r\in\Br{0,\frac12}.
\end{cases}
\end{eqnarray*}
We have done it.
\end{proof}

\begin{proof}[The second proof]
As already known,
$\rho(\bsr)=\frac14(\rho(\bsr_1)+\rho(\bsr_2)+\rho(\bsr_3)+\rho(\bsr_4))$
can be rewritten as
\begin{eqnarray*}
\rho(\bsr) =
\frac12\Pa{\frac{\rho(\bsr_1)+\rho(\bsr_2)}2+\frac{\rho(\bsr_3)+\rho(\bsr_4)}2}
=\frac{\rho(\bsr_{12})+\rho(\bsr_{34})}2
\end{eqnarray*}
Then we see that
\begin{eqnarray*}
p^{(4)}(r) =  \iint_{R(r)}
p(r|r_{12},r_{34})p^{(2)}(r_{12})p^{(2)}(r_{34})\dif r_{12}\dif
r_{34}
\end{eqnarray*}
(1) If $r\in\Br{\frac12,1}$, then
\begin{eqnarray*}
p^{(4)}(r) =\int^1_{2r-1}\dif r_{12}\int^1_{2r-r_{12}}\dif r_{34}
p(r|r_{12},r_{34})p^{(2)}(r_{12})p^{(2)}(r_{34})=f^{(4)}_R(r).
\end{eqnarray*}
(2) If $r\in\Br{\frac14,\frac12}$, then
\begin{eqnarray*}
p^{(4)}(r) &=& \int^{1-2r}_0\dif
r_{12}\int^{2r+r_{12}}_{2r-r_{12}}\dif r_{34}+\int^{2r}_{1-2r}\dif
r_{12}\int^1_{2r-r_{12}}\dif r_{34}+\int^1_{2r}\dif
r_{12}\int^1_{r_{12}-2r}\dif r_{34}\notag \\
&=&f^{(4)}_L(r).
\end{eqnarray*}
(3) If $r\in\Br{0,\frac14}$, then
\begin{eqnarray*}
p^{(4)}(r) &=& \int^{2r}_0\dif
r_{12}\int^{2r+r_{12}}_{2r-r_{12}}\dif r_{34}+\int^{1-2r}_{2r}\dif
r_{12}\int^{r_{12}+2r}_{r_{12}-2r}\dif r_{34}+\int^1_{1-2r}\dif
r_{12}\int^1_{r_{12}-2r}\dif r_{34} \notag\\
&=&f^{(4)}_L(r).
\end{eqnarray*}
Note that the integrand
$p(r|r_{12},r_{34})p^{(2)}(r_{12})p^{(2)}(r_{34})$ is omitted in the
case (2) and (3), respectively. In summary, we get the desired
result.
\end{proof}

\subsection{Derivation of $p^{(5)}(r)$}

Note that
\begin{eqnarray*}
\rho(\bsr) =
\frac45\Pa{\frac14\sum^4_{j=1}\rho(\bsr_j)}+\frac15\rho(\bsr_5) =
\frac45\sigma_1+\frac15\sigma_2,
\end{eqnarray*}
where $\sigma_1=\frac14\sum^4_{j=1}\rho(\bsr_j)$ and
$\sigma_2=\rho(\bsr_5)$. From this, we see that
\begin{eqnarray*}
p^{(5)}(r) =  \iint_{R_{4/5}(r)}
p_{4/5}(r|r_1,r_2)p^{(4)}(r_1)p^{(1)}(r_2)\dif r_1\dif r_2.
\end{eqnarray*}
(1) If $r\in\Br{\frac45,1}$, then
\begin{eqnarray*}
\int^1_{\frac{5r - 1}4}\dif r_1\int^1_{5r - 4r_1} \dif r_2
p_{4/5}(r|r_1,r_2)f^{(4)}_R(r_1)p^{(1)}(r_2)= f^{(5)}_R(r).
\end{eqnarray*}
(2) If $r\in\Br{\frac35,\frac45}$, then
\begin{eqnarray*}
\Pa{\int^{\frac{5r}4}_{\frac{5r - 1}4}\dif r_1\int^1_{5r - 4r_1}
\dif r_2 + \int^1_{\frac{5r}4}\dif r_1\int^1_{4r_1 - 5r} \dif r_2}
p_{4/5}(r|r_1,r_2)f^{(4)}_R(r_1)p^{(1)}(r_2)= f^{(5)}_R(r).
\end{eqnarray*}
(3) If $r\in\Br{\frac25,\frac35}$, then
\begin{eqnarray*}
&&\int^{\frac12}_{\frac{5r - 1}4}\dif r_1\int^1_{5r - 4r_1} \dif
r_2p_{4/5}(r|r_1,r_2)f^{(4)}_L(r_1)p^{(1)}(r_2) \\
&&+ \int^{\frac{5r}4}_{\frac12}\dif r_1\int^1_{5r-4r_1} \dif r_2
p_{4/5}(r|r_1,r_2)f^{(4)}_R(r_1)p^{(1)}(r_2)\\
&&+ \int^{\frac{5r+1}4}_{\frac{5r}4}\dif r_1\int^1_{4r_1-5r} \dif
r_2 p_{4/5}(r|r_1,r_2)f^{(4)}_R(r_1)p^{(1)}(r_2)= f^{(5)}_M(r).
\end{eqnarray*}
(4) If $r\in\Br{\frac15,\frac25}$, then
\begin{eqnarray*}
&&\int^{\frac{5r}4}_{\frac{5r - 1}4}\dif r_1\int^1_{5r - 4r_1} \dif
r_2p_{4/5}(r|r_1,r_2)f^{(4)}_L(r_1)p^{(1)}(r_2) \\
&&+ \int^{\frac12}_{\frac{5r}4}\dif r_1\int^1_{4r_1-5r} \dif r_2
p_{4/5}(r|r_1,r_2)f^{(4)}_L(r_1)p^{(1)}(r_2)\\
&&+ \int^{\frac{5r+1}4}_{\frac12}\dif r_1\int^1_{4r_1-5r} \dif r_2
p_{4/5}(r|r_1,r_2)f^{(4)}_R(r_1)p^{(1)}(r_2)= f^{(5)}_M(r).
\end{eqnarray*}
(5) If $r\in\Br{\frac1{10},\frac15}$, then
\begin{eqnarray*}
&&\int^{\frac{1-5r}4}_0\dif r_1\int^{5r + 4r_1}_{5r - 4r_1} \dif
r_2p_{4/5}(r|r_1,r_2)f^{(4)}_L(r_1)p^{(1)}(r_2) \\
&&+ \int^{\frac{5r}4}_{\frac{1-5r}4}\dif r_1\int^1_{5r-4r_1} \dif
r_2
p_{4/5}(r|r_1,r_2)f^{(4)}_L(r_1)p^{(1)}(r_2)\\
&&+ \int^{\frac{5r+1}4}_{\frac{5r}4}\dif r_1\int^1_{4r_1-5r} \dif
r_2 p_{4/5}(r|r_1,r_2)f^{(4)}_L(r_1)p^{(1)}(r_2)= f^{(5)}_L(r).
\end{eqnarray*}
(6) If $r\in\Br{0,\frac1{10}}$, then
\begin{eqnarray*}
&&\int^{\frac{5r}4}_0\dif r_1\int^{5r + 4r_1}_{5r - 4r_1} \dif
r_2p_{4/5}(r|r_1,r_2)f^{(4)}_L(r_1)p^{(1)}(r_2) \\
&&+ \int^{\frac{1-5r}4}_{\frac{5r}4}\dif r_1\int^{5r +
4r_1}_{4r_1-5r} \dif r_2
p_{4/5}(r|r_1,r_2)f^{(4)}_L(r_1)p^{(1)}(r_2)\\
&&+ \int^{\frac{5r+1}4}_{\frac{1-5r}4}\dif r_1\int^1_{4r_1-5r} \dif
r_2 p_{4/5}(r|r_1,r_2)f^{(4)}_L(r_1)p^{(1)}(r_2)= f^{(5)}_L(r).
\end{eqnarray*}
Thus we get the result. $\qed$ \\
Note that in the above reasoning, the symbolic computation function
of the computer software \textsc{Mathematica 10} are employed in
almost all calculations.

\end{document}